\newcites{main}{References}
\newcites{app}{References for Appendix}
\theoremstyle{definition}
\newtheorem{theorem}{Theorem}
\newtheorem*{theorem*}{Theorem}
\newtheorem{corollary}{Corollary}
\newtheorem*{corollary*}{Corollary}
\newtheorem{definition}{Definition}
\newtheorem*{definition*}{Definition}
\newtheorem{lemma}{Lemma}
\newtheorem{proposition}[theorem]{Proposition}
\newcommand{\ee}{\varepsilon}
\def\d{\mathrm{d}}
\newcommand{\bone}{ {\mathbbm{1}} }
\newcommand{\bP}{\mathbb{P}}
\newcommand{\lp}{\lambda^+}
\newcommand{\lm}{\lambda^-}
\newcommand{\op}{\omega^+}
\newcommand{\om}{\omega^-}
\renewcommand{\ge}{\geqslant}
\renewcommand{\le}{\leqslant}
\renewcommand{\geq}{\geqslant}
\renewcommand{\leq}{\leqslant}
\renewcommand{\epsilon}{\varepsilon}
\tikzset{
    ncbar angle/.initial=90,
    ncbar/.style={
        to path=(\tikztostart)
        -- ($(\tikztostart)!#1!\pgfkeysvalueof{/tikz/ncbar angle}:(\tikztotarget)$)
        -- ($(\tikztotarget)!($(\tikztostart)!#1!\pgfkeysvalueof{/tikz/ncbar angle}:(\tikztotarget)$)!\pgfkeysvalueof{/tikz/ncbar angle}:(\tikztostart)$)
        -- (\tikztotarget)
    },
    ncbar/.default=0.5cm,
}
\tikzset{square left brace/.style={ncbar=0.5cm}}
\tikzset{square right brace/.style={ncbar=-0.5cm}}
\tikzset{round left paren/.style={ncbar=0.5cm,out=120,in=-120}}
\tikzset{round right paren/.style={ncbar=0.5cm,out=60,in=-60}}
\newcommand{\E}{\mathbb{E}}    
\newcommand{\R}{\mathbb{R}}    
\newcommand{\N}{\mathbb{N}}    
\newcommand{\C}{\mathcal{C}} 
\let\OLDthebibliography\thebibliography
\renewcommand\thebibliography[1]{
  \OLDthebibliography{#1}
  \setlength{\parskip}{0pt}
  \setlength{\itemsep}{0.2pt plus 0.3ex}
}
\begin{document}



\title{Composite Sorting\thanks{%
\baselineskip12.5pt We are particularly grateful to Tom Sargent and John Stachurski for many comments and for developing concise exposition and numerical replication (\href{python-advanced.quantecon.org/match_transport.html}{https://python-advanced.quantecon.org/}) as a part of the QuantEcon project \citep{Sargent:2025}. We thank Manuel Amador, Andy Atkeson, Hector Chade, Roberto Corrao, Alfred Galichon, Pieter Gautier, Chad Jones, Philipp Kircher, Rasmus Lentz, Ilse Lindenlaub, Paolo Martellini, Guido Menzio, Giuseppe Moscarini, Andrea Ottolini, Tommaso Porzio, Jean-Marc Robin, Fedor Sandomirskiy, Edouard Schaal, Lones Smith, Stefan Steinerberger, and Alexander Zimin for useful comments.}}
\author{ Job Boerma \\ 
{\small { \hspace{2 cm} University of Wisconsin-Madison  \hspace{2 cm} } }\\ \vspace{-0.5 cm}
\and  Aleh Tsyvinski \\
{\small { \hspace{4 cm}  Yale University \hspace{4 cm} } }\\ \vspace{-0.5 cm}
\and  Ruodu Wang \\
{\small { \hspace{4 cm} University of Waterloo \hspace{4 cm} } }\\ \vspace{-0.5 cm}
\and  Zhenyuan Zhang \\
{\small { \hspace{4 cm}  Stanford University \hspace{4 cm} } }\\ \vspace{0.2 cm}
}
\date{\vspace{-0.5 cm} May 2025 \vspace{0.0 cm}}
\maketitle

\vspace{-0.0cm}

\begin{abstract}
\fontsize{11.5pt}{18.0pt} \selectfont This paper introduces an assignment model with concave costs of skill gaps, which arise generally when firms mitigate
costs of mismatch as in \citet{Stigler:1939} and \citet{Laffont:1986,Laffont:1991}. Concave costs of skill gaps imply that the output function is neither supermodular nor submodular. We thus introduce a tractable model that interpolates between the polar canonical cases of supermodularity and submodularity.  We characterize sorting, wages, and comparative statics and show these substantively differ
from traditional assignment models. Under composite sorting: (1) distinct
worker types work in the same occupation, and (2) worker types are
simultaneously part of both positive and negative sorting. Quantitatively, our model can generate and help explain earnings dispersion between and within occupations. 

\vspace{0.45cm}
\noindent \textbf{JEL-Codes}: J01, D31, C78 \\
\noindent \textbf{Keywords}: Sorting, Assignment 
\end{abstract}

\renewcommand{\thefootnote}{\fnsymbol{footnote}} \renewcommand{%
\thefootnote}{\arabic{footnote}} 
\thispagestyle{empty} \setcounter{page}{0}

\fontsize{11.0pt}{21.0pt} \selectfont


\newpage

\section{Introduction}

Sorting models have a prominent position in economics, dating to classic contributions by \citet{Koopmans:1957} and \citet{Becker:1973}; see \cite{Chade:2017} and \cite{Eeckhout:2018} for recent comprehensive reviews. The central insight of this literature is assortative sorting $-$ supermodular output yields positive sorting while submodular output yields negative sorting. Moreover, every worker type is paired with a unique job type, except possibly due to discretization. Existing literature restricts attention to explicitly assortative cases or identifies conditions under which assortative predictions hold (see \citet{Eeckhout:2018b}, \citet{Chade:2018,Chade:2020b,Chade:2017b}, and \citet{Calvo:2024} for recent prominent examples).

An important open question is how to characterize sorting models outside of the polar cases where the output function is supermodular or submodular. In this paper, we introduce a new class of sorting models where the output function is concave in the skill gap between workers and jobs. We show that concave costs of mismatch arise generally when firms mitigate variable costs of mismatch with fixed investment as in \citet{Stigler:1939} and  \citet{Laffont:1986,Laffont:1991}. We formulate our assignment model within the framework of optimal transport theory \citep{Monge:1781,Kantorovich:1942,Villani:2009} and show that this class of economies gives rise to equilibrium sorting, wages, and comparative statics that contrast sharply with $-$ and are significantly richer than $-$ those derived from assortative sorting models. 

In the traditional assortative setting, sorting depends exclusively on the output function and not on the distributions of workers and jobs. In our setting with concave costs of skill gaps, both technology as well as the worker and job distributions determine optimal sorting. This joint dependence poses the main challenge in characterizing optimal sorting, wages, and comparative statics and results in a new sorting pattern that we call \textit{composite sorting}. 

Composite sorting has two main new features that sharply contrast with assortative sorting. First, a given worker type can simultaneously be part of both positive and negative sorting. Second, distinct worker types sort into the same occupation $-$ providing a new mechanism for earnings variation within occupations. In sum, composite sorting introduces an intermediate case between positive and negative sorting that we show is analytically tractable. 


We start by describing two necessary conditions for optimal sorting. First, an optimal assignment maximizes the number of perfect pairs, which are pairs without skill gaps. Since output costs of skill gaps are concave, it is preferable to have one pair with a small and one pair with a large skill gap, as opposed to two pairs with medium skill gaps. A perfect pair together with a pair with a large skill gap exemplifies this. Second, pairs of workers and jobs should not intersect. Visualize two pairs as arcs. When the arcs intersect, two medium-sized skill gaps occur. With concave costs of skill gaps, it is preferable to have one large and one small skill gap $-$ pairings that do not intersect. This non-intersecting condition implies that the sorting problem can be decomposed into layers of independent problems. The layers are derived from a measure of underqualification that evaluates the worker skills compared to the job requirements up to a given skill level. A layer contains an equal number of workers and jobs determined by a particular slice of this measure of underqualification. The overall optimal assignment combines the perfect pairs with the optimal assignments on each layer.


Our main result for the primal optimal assignment problem (\Cref{t:numberpair}) establishes a significant reduction in complexity when the distributions of workers and jobs are mixtures of normal distributions.\footnote{Normal mixture distributions are weakly dense in the set of all distributions.} This result enables our quantitative analysis with a large number of worker and job types. \Cref{t:numberpair} shows that if the worker distribution is a mixture of $n$ normal distributions and the job distribution is a mixture of $m$ normal distributions, then each layer contains at most $n+m-1$ pairs. This result builds on the theory of variation diminishing transformations and P\'{o}lya frequency functions \citep{Schoenberg:1930,Schoenberg:1950,Karlin:1968}. These technical tools relate to those used in recent work on stochastic dominance \citep{Pomatto:2020}, on matching and information disclosure \citep{Chade:2024}, and on multinomial stochastic choice rules \citep{Sandomirskiy:2023,Sandomirskiy:2024}. While the direct computation of the assignment problem is infeasible for a non-trivial number of types,\footnote{See, for example, the Hungarian algorithm in \citet{Burkard:2012}.} \Cref{t:numberpair} delivers a significant reduction in complexity for the primal assignment problem that enables our quantitative analysis with many worker and job types.

Our second main contribution is the full characterization of equilibrium wages and firm values – the solution to the dual assignment problem (\Cref{thm:dualalg}, \Cref{prop:hiereff}, and \Cref{p:dual}). A striking outcome is that wages and firm values exhibit a \textit{regional hierarchical structure}, fundamentally different from the classical sorting models. 

We first show that the dual solution can be segmented by skill regions, within which the equilibrium wages and firm values are determined by regional conditions. In other words, for any given skill group, the relative earnings of workers in that group depend only on the output and the assignment within that group. Relative earnings in one skill region are independent of the workers, jobs, and output in other skill regions. Second, we establish that the hierarchical structure aggregates the regional wages to wages for larger groups, preserving relative wages in smaller regions. This hierarchical assembly ensures that global dual feasibility holds when we stitch together regional solutions. The main technical challenge in the construction of the dual is to reconcile local wage determination with global constraints, and our construction shows explicitly how to achieve this. Theorem \ref{prop:hiereff} shows that our hierarchical solution implies a sharp reduction in complexity due to a limited number of distinct regional structures. This result makes the dual problem numerically tractable. 


Our third main contribution is characterizing comparative statics. In traditional assignment models, sorting is determined solely by the output function and is thus invariant to technological change as long as the output function remains supermodular or submodular. In sharp contrast, composite sorting depends on both the output function as well as the distributions of workers and jobs. This joint dependence poses new challenges for comparative statics, which leads us to developing a different approach leveraging the characterization of equilibrium sorting with concave costs of skill gaps.

On the characterization of comparative statics, we obtain two significant new results. First, sorting becomes more positive, by which we mean larger in concordance order, when the cost of skill gaps is less concave (\Cref{p:cs}). In order to show this, we provide a new characterization of the classical cyclical monotonicity specific to concave costs of skill gaps. Second, we prove the existence of a threshold level of concavity beyond which sorting is positive in each layer  $-$ yet still not an overall positive sorting $-$ which we call a \textit{layered positive sorting} (\Cref{t:tolinear}). 



Finally, we quantitatively illustrate our framework. One of the distinctive implications of our model among assignment models is that equilibrium features earnings dispersion within occupations and, hence, we apply our model to evaluate earnings dispersion within and across occupations. The quantitative model analyzes the implications of concave costs of skill gaps for sorting and earnings dispersion within occupations in isolation. We also use our framework to quantify the determinants of changes in earnings in the United States between 1980 and 2005. 



\vspace{0.35cm}
\noindent \textbf{Related Literature}. The understanding of sorting when output is neither submodular nor supermodular is limited; see \citet{Chade:2017} and \cite{Eeckhout:2018} for recent reviews. 

Closest to this paper, \citet{Sargent:2025} develops a textbook exposition and computational replication of our framework. Their lecture also presents detailed Python code for computing the equilibrium assignment, wages, and firm values.\footnote{See the QuantEcon lecture at \href{python-advanced.quantecon.org/match_transport.html}{https://python-advanced.quantecon.org/}.} 


Two recent papers are also particularly relevant to our paper, as they characterize optimal transport models with non-convex costs. 

First, an influential paper by \citet{Fajgelbaum:2020} studies a different but related equilibrium transport problem where non-convexity plays an important role. They consider optimal transportation on a network where the planner can invest in infrastructure to mitigate the cost of transporting goods along specific edges. Mitigation of costs via investment, as in our work, determines the concavity of the cost function. When the planning problem is convex, \citet{Fajgelbaum:2020} characterize the equilibrium using convex duality. In contrast, when the planning problem is non-convex, they argue that the problem becomes significantly more complex. For the case when there is a unique commodity produced in a single location, they prove that the optimal transport network is a tree. Moreover, for the non-convex costs they are also able to numerically solve several more general cases and show how the optimal network concentrates transport flows along major routes. Our problem is different as the cost is concave in the skill gap, as opposed to being concave in flows as in \citet{Fajgelbaum:2020}. For our setting with concave costs of skill gaps, we provide a complete analytical characterization of the primal solution, the dual solution, and comparative statics and use this characterization to compute the equilibrium. 


Second, the most technically related paper is \citet{Echenique:2024} who study stability in non-transferable utility matching markets with aligned preferences, compared to transferable utility markets in our paper. Their main result establishes a connection between stability and a concave optimal transport problem, which leverages the non-crossing property to derive a non-combinatorial characterization of their primal problem. In addition, \citet{Perez:2022,Perez:2025} study a falsification-proof mechanism with a cost function that may be convex or concave in the distance between the natural score and the falsified score. They show that this mechanism design problem can be represented as an optimal transport problem and study its dual representation and comparative statics.

%
%
%
%

This paper is part of a growing literature that builds on optimal
transport theory to solve economic problems (see \citet{Villani:2009}, \citet{Galichon:2018}, and \citet{Sargent:2024} for comprehensive overviews). One
example is recent work on multimarginal assignment problems \citep{Chade:2018,Eeckhout:2018b,BTZ:2021} in which
multiple agents work together in a team as in \citet{Kremer:1993}.\footnote{\citet{Kremer:1996} studies a role assignment model where a single population of workers is split to work into teams as managers and assistants. \citet{Anderson:2022a} provides the most comprehensive analysis for this class of models and shows that positive clustering is optimal. \citet{Porzio:2017} uses optimal transport theory to study a model with a technology decision that scales output produced by the team, similarly to productive capital.} There is mathematical work on optimal transport with concave distance
costs started by \citet{Gangbo:1996} and \citet{McCann:1999}
as well as literature on algorithmic sorting problems with distance
costs \citep{Werman:1986,Aggarwal:1995,Delon:2012b,Ottolini:2023}.   
Our first contribution to this literature is to derive a new characterization of the primal problem for normal mixture distribution that significantly reduces the complexity of the problem and facilitates our quantitative analysis. Our second contribution is to provide a full construction of the dual solution in a model of concave cost of skill gaps. The regional hierarchical structure of the dual solution is central to the analysis of equilibrium wages. Third, our approach to comparative statics, leveraging the characterization of the optimum, differs from \citet{Anderson:2022} as their conditions for more positive sorting are not satisfied in our setting. Instead, our analysis of the primal problem leverages the use of the variation diminishing property that is also important in recent work of \citet{Pomatto:2020}, \citet{Sandomirskiy:2023,Sandomirskiy:2024}, and \citet{Chade:2024}.

\section{Model}

In this section we develop a sorting model in which investment mitigates the cost of mismatch between workers and jobs and results in concave costs of skill gaps. This is a setting where the output function is neither supermodular nor submodular. Importantly, we thus introduce an environment that interpolates between the canonical cases of supermodularity and submodularity that yield assortative sorting. 

\subsection{Environment}\label{s:environment}

The economy is populated by risk-neutral workers and jobs. The workers differ in skills indexed by $x$. The set of worker skills $X$ is a finite number $n$ of types $x_{1}<x_{2}<\dots<x_{n}$. Workers are distributed according to the cumulative distribution function $F(x)$.

Jobs differ in difficulty indexed by $z$. The set of occupations $Z$ is a finite number $m$ of occupation types $z_{1}<z_{2}<\dots<z_{m}$. Jobs
are distributed according to the cumulative distribution function $G(z)$.\footnote{For our main results, it does not matter whether the distributions of workers and jobs are discrete or continuous. In order to avoid presenting non-essential technical details, we present all results in the simplest setting.}




Firms produce a single good. Production requires one worker for each job. The surplus generated by a worker with skill $x$ in an occupation with complexity $z$ is:
\begin{equation}
s(x,z)= \alpha(x) + \theta(z) -\gamma_{p}\max ( z-x,0 ) -\gamma_{u}\max ( x-z,0 ),\label{e:firmtechnology}
\end{equation}
with $\gamma_{p}, \gamma_u \geq0$. There are four terms in this technology specification. The first term $\alpha(x)$ with $\alpha'(x)>0$ reflects that a more skilled worker contributes more to production, independent of the job. The second term $\theta(z)$ with $\theta'(z)>0$ reflects that a more difficult job $z$ produces more output independent of the worker that fulfills the job. The third and the fourth terms reflect the costs of skill gaps, which is the difference between worker skill $x$ and job complexity $z$. The third term $\gamma_{p}\max(z-x,0)$ reflects that a worker with a skill $x$ that is lower than the job complexity $z$ causes a loss of output. It is costly to have workers perform tasks for which they have limited talent. The fourth term $\gamma_{u}\max(x-z,0)$ reflects that workers with skills $x$ that exceed the job complexity $z$ are overqualified and need to be compensated for their utility cost (as in \citet{Rosen:1986}).


\vspace{0.35cm}
\noindent \textbf{Concave Mismatch Costs}. A firm can reduce mismatch costs by making investments. The main insight of \citet{Stigler:1939} and \citet{Laffont:1986,Laffont:1991} is that fixed investment results in an effective output function with a concave costs of skill gaps. 


We model a firm making fixed investments to reduce variable costs of skill gaps. Consider the case where a worker is underqualified, $x<z$. Firms choose the variable
cost of production mismatch $\gamma_{p}$, which comes at an associated
fixed cost $\Psi_{p}(\gamma_{p}) = \frac{1}{\eta_{p}}\gamma_{p}^{-\eta_{p}}$, where $\eta_{p}$ is strictly positive.\footnote{General convex cost functions are considered in Technical Appendix \ref{s:generalprod}.} By decreasing variable costs
$\gamma_{p}$, the firm increases its fixed costs, or $\Psi_{p}'<0$,
where $\Psi_{p}''>0$. The effective output of worker $x$ in occupation $z > x$ is: 
\begin{equation}
y(x,z)= \hspace{0.15 cm}\max_{\gamma_{p} \geq 0}\hspace{0.25 cm}  \alpha(x) + \theta(z) -\gamma_{p}(z-x)-\frac{1}{\eta_{p}}\gamma_{p}^{-\eta_{p}} .\label{e:technologychoice}
\end{equation}
Investment increases in the difference between the worker skill and the job complexity $\gamma_{p}=(z-x)^{-\frac{1}{1+\eta_{p}}}$. Firms choose a low variable cost of production mismatch if the worker is less qualified, that is, when the skill gap $(z-x)$ is large.


The effective output of worker $x$ in occupation $z$ is, using the optimal investment decision, given by: 
\begin{equation}
y(x,z)= \alpha(x) + \theta(z) -\frac{1}{\zeta_p}(z-x)^{\zeta_{p}}. 
\end{equation}
for underqualified workers $x<z$, with $\zeta_{p}=\frac{\eta_{p}}{1+\eta_{p}}  \in(0,1)$. The cost is concave in the distance between worker skill and job difficulty. In sum, when the marginal costs of mitigating mismatch are increasing, a production function with linear mismatch costs and investment choice results in an effective output function with concave costs of skill gaps.\footnote{In line with our framework, \citet{Brynjolfsson:2025} and \citet{Noy:2023} find that the introduction of generative artificial intelligence tools increases the output of low-skill workers with minimal impact on high skill workers.}

Firms can similarly reduce the extent to which overqualification penalizes worker's utility by providing amenities. We model amenity choices analogous to investment choices. The effective output of overqualified workers is $y(x,z)=\alpha(x) + \theta(z) - \frac{1}{\zeta_{u}}(x-z)^{\zeta_{u}}$, with $\zeta_{u}=\frac{\eta_{u}}{1+\eta_{u}}\in(0,1)$.

The effective output is: 
\begin{equation}
y(x,z)=\alpha(x) + \theta(z)-\begin{cases}
\frac{1}{\zeta_{p}}(z-x)^{\zeta_{p}}\hspace{4.6cm}\text{ if }z\geq x\\
\frac{1}{\zeta_{u}}(x-z)^{\zeta_{u}}\hspace{4.59cm}\text{ if }z<x,
\end{cases}\label{e:y}
\end{equation}
where $\zeta_p,\zeta_u \in(0,1)$. We use the effective output (\ref{e:y}) to define the effective costs of skill gaps between worker $x$ and job $z$ as: 
\begin{equation}\label{eq:cxz}
c(x,z) = \alpha(x) + \theta(z) - y(x,z) = \begin{cases}
\frac{1}{\zeta_{p}}(z-x)^{\zeta_{p}}\hspace{3cm}\text{ if }z\geq x\\
\frac{1}{\zeta_{u}}(x-z)^{\zeta_{u}}\hspace{2.99cm}\text{ if }z<x.
\end{cases}
\end{equation} 
The cost of skill gaps is the maximal output of worker $x$ and job $z$ minus effective output $y(x,z)$. Thus, the cost function is concave in the skill gap, the discrepancy between worker $x$ and job $z$. 

\vspace{0.05cm}
\begin{definition} 
An \underline{assignment} pairs workers and jobs.  Given a worker distribution $F$ and a job distribution $G$,
the set of feasible assignment functions is $\Pi =\Pi(F,G)$, which
is the set of probability measures on the product space $X\times Z$
such that the marginal distributions of $\pi$ onto $X$ and $Z$
are respectively $F$ and $G$. We denote the support of assignment $\pi$ by $\Gamma_\pi=\{(x,z):\pi(\{(x,z)\})>0\}$. 
\end{definition}

\vspace{0.1 cm}
\noindent \textbf{Discussion}. The key feature of our environment is that the effective output (\ref{e:y}) is neither supermodular nor submodular. The cross-derivatives of the output function being negative for both $z>x$ and $z<x$ directly rules out supermodularity. Moreover, the output function is not submodular. Consider two workers and two jobs, each with skills $a$ and $b$ where $b \neq a$. Submodularity requires the combined surplus of pairs $(a,b)$ and $(b,a)$ to exceed the combined surplus of pairs $(a,a)$ and $(b,b)$. However, pairs $(a,b)$ and $(b,a)$ have skill gaps and consequently lower output than the positive sorting $(a,a)$ and $(b,b)$ that gives no skill gaps. Thus, the output function is not submodular either.\footnote{While our analysis relies on the concavity of mismatch costs, it does not require strict concavity. For example, a technology where underqualified workers linearly induce mismatch losses up to some maximum is concave, not strictly concave. Furthermore, zero cost of skill gaps is a concave function. This allows us to extend our analysis to the case where workers do not incur disutility from being overqualified, or $\gamma_u =0$ in the technology (\ref{e:firmtechnology}). Moreover, we can extend our production technology to allow for additional fixed costs of mismatch that are incurred when a pair is not perfect, or $x \neq z$. We analyze the uniqueness of sorting with strictly concave costs of skill gaps in Technical Appendix \ref{s:uniquenesssorting}.} 





\subsection{Planning Problem} \label{s:planning} 

We solve two planning problems to characterize an equilibrium.\footnote{The equilibrium definition is standard and is presented in Technical \Cref{s:equilibriumdefn} for completeness.} We first solve a primal planning problem to characterize an equilibrium assignment. The primal planning problem is to choose an assignment to maximize aggregate output: 
\begin{equation}
\max_{\pi\in\Pi}\int y(x,z)\,\text{d}\pi\label{pp}
\end{equation}
and is equivalent, in terms of
choosing an optimal assignment, to a planning problem that minimizes
the costs of mismatch: 
\begin{equation}
\min_{\pi\in\Pi}\;\int c(x,z)\,\text{d}\pi,\label{ppmin}
\end{equation}where $c(x,z)$ represents the concave cost of skill gaps \eqref{eq:cxz}. The key difference from classical assignment problems is that the cost function \eqref{eq:cxz} is neither supermodular nor submodular.


\vspace{0.35cm}
\noindent \textbf{Dual Problem}. In order to obtain equilibrium wages $w$ and the firm value function $v$, we solve a
dual problem. The dual problem
is to choose functions $w$ and $v$ that solve: 
\begin{equation}
\min \;\int w(x)\,\text{d}F+\int v(z)\,\text{d}G,\label{e:pp_dual}
\end{equation}
subject to the constraint $w(x)+v(z)\geq y(x,z)$ for any $(x,z)\in X\times Z$. The Monge-Kantorovich duality states that the values of (\ref{pp}) and (\ref{e:pp_dual}) are the same: $\max \int y(x,z)\,\text{d}\pi= \min \int w(x) \,\text{d}F+\int v(z)\, \text{d}G$.

\vspace{0.3 cm}
\noindent We use the following relation between the planning problem and the dual problem.

\begin{lemma}\label{lemma:dual}
Suppose that assignment $\pi \in \Pi$ and functions $w$ and $v$ are such that $w(x)+v(z)\geq y(x,z)$ for any $(x,z)$ and that $w(x)+v(z)= y(x,z)$ for any $(x,z)\in\Gamma_\pi$. Then the assignment $\pi$ is an optimal assignment and $(w,v)$ is an optimal dual pair.\footnote{Suppose the assignment $\pi$ and the functions $(w,v)$ satisfy the assumptions in Lemma \ref{lemma:dual}. Then it holds that $\int y(x,z)\,\text{d}\pi=\int w(x) \,\text{d}F+\int v(z) \,\text{d}G$. By linear duality, the maximum for the primal problem is attained by $\pi$ and the minimum for the dual problem is attained by $(w,v)$, as required.}    
\end{lemma}

\section{Composite Sorting}


This section introduces composite sorting. First, we establish that, in contrast to the classic assortative models, optimal sorting depends not only on the production function but also on the distributions of workers and jobs. The joint dependence on both the production function and the distributions introduces significant challenges in characterizing sorting, wages, and comparative statics. Second, we provide a stylized example in which composite sorting arises to develop intuition for \Cref{s:principles}.

\vspace{0.35cm}
\noindent \textbf{Assortative Sorting}. In the classic assortative setting, sorting is either positive or negative depending on the production function and does not depend on the distributions of workers and jobs. We first make an important observation that optimal sorting in our environment with concave costs of skill gaps, rather than being determined by the production function alone as in the classic assortative problems, also depends on the distributions of workers and jobs. 

This shows the main challenge in the environment with concave costs of skill gaps $-$ both technology as well as the worker and job distributions determine optimal sorting. In order to make this point, we show that for the same concave costs of skill gaps, the optimal sorting pattern differs due to differences in the distributions of workers and jobs. Specifically, we show that our sorting problem, with the output function that is neither supermodular nor submodular, can feature positive and negative sorting for different distributions of workers and jobs. 

Consider a problem with two workers and two jobs. Worker skills are given by $x_{L}$ and $x_{H}$ and job difficulties are given by $z_{L}$ and $z_{H}$ satisfying $x_{L}<z_{L}<x_{H}<z_{H}$. Let the distance between $x_{i}$ and $z_{j}$ be given by $d_{ij}:=|x_{i}-z_{j}|$.


First, positive sorting can be optimal. Consider the following configuration of distances: $d_{LL}^{\zeta}+d_{HH}^{\zeta} \leq d_{LH}^{\zeta}+d_{HL}^{\zeta}$  with the cost function $c(x,z) = |x-z|^\zeta$ with $\zeta \in (0,1)$.  The low-skill worker $x_{L}$ and the low-complexity job $z_{L}$ as well
as the high-skill worker $x_{H}$ and the high-complexity job $z_{H}$
are close to each other, while the skill gap between the
low-complexity job $z_{L}$ and high-skill worker $x_{H}$ is large. It is natural to pair the low-skill worker with the low-complexity job and to pair the high-skill worker with the high-complexity job to minimize the costs of skill gaps and, hence, sorting is positive. That is, when worker-job skill groups are far apart, it is optimal to sort within those groups. 


Second, negative sorting can be optimal. Consider the opposite
configuration of distances where $d_{LL}^{\zeta}+d_{HH}^{\zeta}>d_{LH}^{\zeta}+d_{HL}^{\zeta}$.
In this case, the high-skill worker and the low-complexity job are close
to each other, while the distance between the low-skill worker and
the low-complexity job as well as the distance between the high-skill worker
and the high-complexity job is large. Since the cost of skill gaps is concave in the distance between the worker's skill and the job's complexity, it is optimal to pair the high-skill worker with the low-complexity job $-$ having one small skill gap and one large
skill gap is better than having two medium-sized skill gaps. 

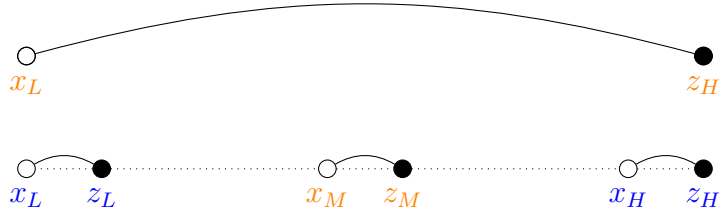
\begin{figure}[!t]
\begin{centering}
\begin{tikzpicture}

\node[circle,draw, minimum size=0.1pt,scale=0.6,label=below:{}] (A) at  (0,0){};
\node[circle,fill=black,draw, minimum size=0.1cm,scale=0.6,label=below:{$\textcolor{blue}{z_L}$}] (B) at  (1.,-1.5) {} ;

\node[circle,draw, minimum size=0.1pt,scale=0.6,label=below:{$\textcolor{blue}{x_L}$}] (C) at  (0,-1.5){};
\node[circle,fill=black,draw, minimum size=0.1cm,scale=0.6,label=below:{$\textcolor{orange}{z_M}$}] (B3) at  (5.,-1.5) {} ;
\node[circle,draw, minimum size=0.1pt,scale=0.6,label=below:{$\textcolor{orange}{x_M}$}] (C3) at (4.,-1.5) {};
\node[circle,fill=black,draw, minimum size=0.1cm,scale=0.6,label=below:{$\textcolor{blue}{z_H}$}] (B2) at  (9,-1.5) {} ;
\node[circle,draw, minimum size=0.1pt,scale=0.6,label=below:{$\textcolor{blue}{x_H}$}] (C2) at  (8,-1.5){};
\node[circle,fill=black,draw, minimum size=0.1cm,scale=0.6,label=below:{}] (D) at  (9,0) {} ;

\path[-,every node/.style={font=\sffamily\small}] (C) edge[bend left=30] node [left] {} (B);
\path[-,every node/.style={font=\sffamily\small}] (C3) edge[bend left=30] node [left] {} (B3);
\path[-,every node/.style={font=\sffamily\small}] (C2) edge[bend left=30] node [left] {} (B2);
\path[-] (D) edge[bend right=15] node [yshift=0.2cm,above] {} (A);

\node[circle,draw, minimum size=0.1pt,scale=0.6,label=below:{$\textcolor{orange}{x_L}$}] (F) at  (0,0){};
\node[circle,fill=black,draw, minimum size=0.1cm,scale=0.6,label=below:{$\textcolor{orange}{z_H}$}] (H) at  (9,0){};

\draw[dotted] (B) -- (C);\draw[dotted] (B) -- (C3);\draw[dotted] (C3) -- (B3);\draw[dotted] (B3) -- (C2);\draw[dotted] (C2) -- (B2);

        \end{tikzpicture} 
\par\end{centering}
\caption{An Example of Composite Sorting}
\label{f:examplecs} {\scriptsize{}{}\vspace{0.2cm}
 Figure \ref{f:examplecs} illustrates composite sorting with four
workers and four jobs. First, distinct worker types are paired with
identical occupations. A low-skill worker $x_{L}$ and a high-skill
worker $x_{H}$ both work in the identical high-complexity occupation
$z_{H}$, while the medium-skill worker $x_{M}$ does not work in
this occupation. Second, a worker type is simultaneously part of both
positive and negative sorting. A low-skill worker $x_{L}$ is paired
positively with a low-complexity job $z_{L}$ (part of positive sorting
$(x_{L},z_{L})$ and $(x_{H},z_{H})$ in blue) and the same worker
type is paired negatively to a distant high-complexity job $z_{H}$
(part of negative sorting $(x_{L},z_{H})$ and $(x_{M},z_{M})$ in
orange).} 
\end{figure}

\vspace{0.35cm}
\noindent \textbf{Composite Sorting}. We next show that our environment gives rise to a new sorting pattern, which we call composite sorting, that is significantly richer than assortative sorting. Composite sorting has two characteristic features: distinct worker types work in the same occupation, giving rise to earnings dispersion within occupations, and a given worker type can be simultaneously part of positive and negative sorting. In order to introduce composite sorting, we first consider an assignment problem between three workers and three jobs in the bottom half of \Cref{f:examplecs}. Since the skill groups are far apart, it is optimal to positively sort within groups.

The top half of \Cref{f:examplecs} introduces an additional low-skill worker $x_L$ and an additional high-complexity task $z_H$. One could break the medium-skill worker-job pair $(x_M,z_M)$ such that the added low-skill worker is assigned to the medium-complexity job forming a pair $(x_L,z_M)$ and the added high-complexity job is assigned to the medium-skill worker forming a pair $(x_M,z_H)$. This gives two medium-sized skill gaps. Instead, pairing the added low-skill worker to the added high-complexity job forming $(x_L,z_H)$, while preserving the medium-skill pair $(x_M,z_M)$, results in one small skill gap and one large skill gap, which is preferred by the concavity of the mismatch cost. The added low-skill worker $x_L$ and high-complexity job $z_H$ are thus optimally paired as indicated by the arc in Figure \ref{f:examplecs}.

The optimal assignment features composite sorting. First, distinct worker types are paired to identical jobs. In \Cref{f:examplecs}, a low-skill worker $x_L$ and a high-skill worker $x_H$ both work in the identical high-complexity occupation $z_H$, while the medium-skill worker $x_M$ does not work in this occupation. Second, a worker type is simultaneously part of positive as well as negative sorting. In \Cref{f:examplecs}, a low-skill worker $x_{L}$ is paired positively with the low-complexity job $z_{L}$ (part of the positive sorting $(x_{L},z_{L})$ and $(x_{H},z_{H})$ in blue) and the same worker type is also paired negatively to the high-complexity job $z_{H}$ (part of the negative sorting $(x_{L},z_{H})$ and $(x_{M},z_{M})$ in orange). This example shows that concavity of the mismatch function and the distributions of workers and jobs jointly determine the optimal assignment and this gives rise to a new sorting pattern $-$ composite sorting.





\section{Characterizing Optimal Sorting} \label{s:principles}

This section characterizes optimal sorting. \Cref{s:necessarycond} describes necessary conditions for optimal sorting. \Cref{t:numberpair} in \Cref{s:nm} proves a significant reduction of complexity of the assignment problem. The formal statements and the proofs are in \Cref{a:necconos}.

\subsection{Necessary Conditions} \label{s:necessarycond}

We start by describing two necessary conditions for optimality: (1) maximal number of perfect pairs, and (2) no intersecting pairs.\footnote{For ease of exposition, we analyze the case of finitely many skill levels. The same intuition naturally carries over to continuous distributions.} 



\vspace{0.35cm}
\noindent An optimal assignment maximizes
the number of pairs that are perfectly sorted, i.e., the number of pairs with no skill gap between workers and jobs, or $x=z$. When the costs of skill gaps are concave, it is preferred to have a pair with a small skill gap and a pair with a significant skill gap rather than to have two pairs with medium skill gaps. A perfect pair is an example of this since it has no skill gap.\footnote{We remark that strictly convex costs of skill gaps instead implies that positive sorting is optimal, which generally conflicts with maximal perfect pairing.} Maximal perfect pairing shows that workers and jobs that are part of the common component of the worker and job distributions are positively sorted. In analyzing the sorting problem between remaining workers and jobs we can thus consider assignments between worker and job distributions for which the common components are removed. For brevity, we label the remaining worker distribution $F$ and the remaining job distribution $G$.

The second feature of an optimal assignment is that pairings between workers and jobs do not intersect. We first describe intersecting and non-intersecting pairs. Consider two pairs and visualize their pairings by arcs. We refer to pairs as intersecting when the arcs corresponding to the pairs intersect. When the arcs do not intersect, they are non-intersecting pairs.\footnote{More formally, arcs $(x,z)$
and $(x',z')$ do not intersect if and only if the intervals $(x,z)$
and $(x',z')$ are either disjoint or one interval is a subset of
the other interval. When referring to an interval $(x,z)$, we do not require that the
worker skills and job complexity are ordered: we mean
the set of numbers between $z$ and $x$ on the real line.} 



For any two pairs $(x,z)$ and $(x',z')$ in an optimal sorting, their arcs do not intersect. Specifically, consider two
unique configurations with intersecting pairs.\footnote{There are six distinct orderings of workers $x$'s (white circles)
and jobs $z$'s (black dots) to consider, which can be represented
as: 
\includegraphics[width=1.5cm,height=0.6cm]{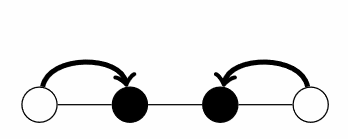}
\includegraphics[width=1.5cm,height=0.6cm ]{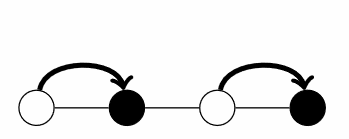}
\includegraphics[width=1.5cm,height=0.6cm ]{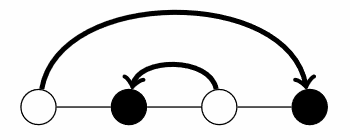}
\includegraphics[width=1.5cm,height=0.6cm ]{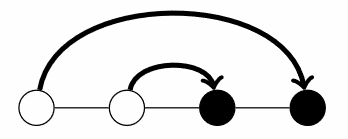}
\includegraphics[width=1.5cm,height=0.6cm ]{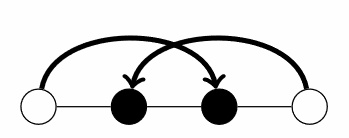}
\includegraphics[width=1.5cm,height=0.6cm ]{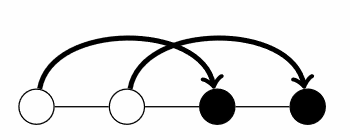}.
The first four configurations do not contain intersections. The
final two configurations are discussed in the main text.} The first configuration is $x<z'<z<x'$. Since the mismatch cost
increases in the skill gap, an improvement is
to instead pair the closer points $(x,z')$ and $(x',z)$, as it reduces
the mismatch cost for each worker and hence the total mismatch cost.
The second configuration is $x<x'<z<z'$. In this case, we
change the pairing to one large skill gap $(x,z')$
and one small skill gap $(x',z)$. By concavity, the cost is
smaller than two medium-size skill gaps.

A direct consequence of the no-intersection principle is that an assignment problem can be  decomposed into layers. In order to establish layering, first observe that if worker $x$ is paired with job $z$, then there is an identical number of workers and jobs in between skill levels $x$ and $z$. Suppose the number of workers and the number of jobs in between skill levels $x$ and $z$ are not the same. Then there is a worker $\hat{x}$ that cannot be paired with a job inside the skill interval $(x,z)$. As a result, worker $\hat{x}$ has to be paired with a job outside the interval $(x,z)$, which would lead to the intersection of pairs $(x,z)$ and $(\hat{x},\hat{z})$, contradicting no intersecting pairs. We conclude that there is the same number of workers and jobs between optimally paired worker $x$ and job $z$, $F(z)-F(x)=G(z)-G(x)$, or alternatively $F(z)-G(z)=F(x)-G(x)$. 

We use the observation that the number of workers and jobs between an optimally paired worker $x$ and job $z$ is identical to decompose the overall sorting problem into sorting problems for different layers of the measure of underqualification. The measure of underqualification $H=F-G$ defines the extent to which workers up to skill $s$ outnumber the jobs requiring skills up to $s$. Since the number of workers and jobs in between optimally paired workers and jobs is identical, only workers and jobs within the same layer of the measure of underqualification can be paired in an optimal assignment. An optimal assignment between workers and jobs is thus the sum of optimal assignments for each layer of the measure of underqualification $H$. This observation decomposes the original problem into independent problems for each layer, which we formalize in \Cref{l:layer}.\footnote{In each layer, there is an alternating configuration of workers and jobs $-$ every worker skill level is followed by a job difficulty level, possibly except for the last one.}


In order to understand the statement of layering (Lemma \ref{l:layer}), we define the measures of the workers and the jobs in each layer $\ell$, which we denote by $F_{\ell}$ and $G_{\ell}$ respectively. In order to obtain the layers of the measure of underqualification, we identify the skill levels where underqualification increases and the skill levels where underqualification decreases. The underqualification
measure $H$ takes a finite number of values in $a_{0}<a_{1}<\dots<a_{L}$ since there is a finite number of skill types. Underqualification
increases from $a_{\ell}$ to $a_{\ell+1}$ at a skill level $s$
if $H(s_{-})\leq a_{\ell}<a_{\ell+1}\leq H(s)$, where $s_{-}$ represents
the limit from the left. Analogously, the measure of underqualification $H$
decreases from $a_{\ell+1}$ to $a_{\ell}$ at a skill level $s$
if $H(s_{-})\geq a_{\ell+1}>a_{\ell}\geq H(s)$. The set of skill
levels where underqualification increases is denoted by $S_{\ell}^{\uparrow}:=\{\text{\ensuremath{s}: \ensuremath{H} increases from \ensuremath{a_{\ell-1}} to \ensuremath{a_{\ell}} at skill level \ensuremath{s}}\}$
for all $1\leq\ell\leq L$. Similarly, the set of skill levels where
underqualification decreases is denoted by $S_{\ell}^{\downarrow}:=\{\text{\ensuremath{s}: \ensuremath{H} decreases from \ensuremath{a_{\ell}} to \ensuremath{a_{\ell-1}} at skill level \ensuremath{s}}\}$.
We then define the discrete measures of workers and jobs for all layers $1\leq\ell\leq L$ by: 
\begin{equation*}
F_{\ell}:=(a_{\ell}-a_{\ell-1})\sum_{s \in S_{\ell}^{\uparrow}}\delta_{s}\hspace{2.7cm}\text{ and }\hspace{2.7cm}G_{\ell}:=(a_{\ell}-a_{\ell-1})\sum_{s \in S_{\ell}^{\downarrow}}\delta_{s} \;,
\end{equation*}
where $\delta_{s}$ is the Dirac measure at a particular skill $s$. A layer $\ell$ contains all the points where the measure of underqualification is valued in $(a_{\ell-1},a_\ell)$, which thus has measures of workers and jobs given by $F_\ell$ and $G_\ell$. It follows that
$F=\sum F_{\ell}$ and $G=\sum G_{\ell}$.  

\begin{lemma}{\textit{Layering}.}\label{l:layer} Let $\pi_{\ell}$
be an optimal assignment between the worker distribution $F_{\ell}$ and the job distribution $G_{\ell}$
for the layer $\ell\in\{1,\dots,L\}$. Then, an optimal assignment
between workers $F$ and jobs $G$ is their sum, $\pi:=\sum_{\ell}\pi_{\ell}$.
\end{lemma}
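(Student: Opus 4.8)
The plan is to show that the decomposition $\pi := \sum_\ell \pi_\ell$ is (i) feasible, i.e. $\pi \in \Pi(F,G)$, and (ii) optimal, by producing a lower bound on the cost of any feasible assignment that matches the cost of $\pi$. For feasibility, I would first record the defining property of the layer decomposition: each $F_\ell$ and $G_\ell$ is a nonnegative measure, $\sum_\ell F_\ell = F$, $\sum_\ell G_\ell = G$, and $F_\ell(X) = G_\ell(Z)$ so that $\Pi(F_\ell, G_\ell)$ is nonempty. Given $\pi_\ell \in \Pi(F_\ell,G_\ell)$, the marginals of $\sum_\ell \pi_\ell$ onto $X$ and $Z$ are $\sum_\ell F_\ell = F$ and $\sum_\ell G_\ell = G$ respectively, so $\pi \in \Pi(F,G)$. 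This part is routine bookkeeping once the formal definitions of $F_\ell, G_\ell$ (deferred to Appendix \ref{p:layer}) are in hand.

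The substantive step is optimality, and here the natural tool is the dual characterization in Lemma \ref{lemma:dual}: it suffices to exhibit functions $w, v$ with $w(x) + v(z) \ge y(x,z)$ everywhere and $w(x) + v(z) = y(x,z)$ on $\Gamma_\pi$. My approach is to build $(w,v)$ layer by layer. For each layer $\ell$, let $(w_\ell, v_\ell)$ be an optimal dual pair for the alternating subproblem $(F_\ell, G_\ell)$, which exists by Monge–Kantorovich duality and which, by Lemma \ref{lemma:dual} applied in the layer, satisfies $w_\ell(x) + v_\ell(z) = y(x,z)$ on $\Gamma_{\pi_\ell}$. The candidate global dual is then $w := \sum_\ell w_\ell$, $v := \sum_\ell v_\ell$ (restricted to the relevant supports, with the normalization freedom in each $(w_\ell,v_\ell)$ used to make the sums well-defined and finite). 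On $\Gamma_\pi = \bigcup_\ell \Gamma_{\pi_\ell}$, tightness then needs to be checked pair by pair, and the key point will be that a pair $(x,z) \in \Gamma_{\pi_\ell}$ contributes $y(x,z)$ from the $\ell$-th summand while the summands from other layers $\ell' \ne \ell$ must together contribute zero — this is exactly where the no-intersecting-pairs structure (Lemma \ref{l:nocross}) and the nested interval geometry of the layers have to be exploited.

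I expect the main obstacle to be verifying the global inequality $w(x) + v(z) \ge y(x,z)$ for pairs $(x,z)$ whose worker and job lie in \emph{different} layers. Within a single layer the inequality is immediate from $w_\ell(x) + v_\ell(z) \ge y(x,z)$ together with the remaining summands being controlled; across layers one must show the accumulated slack is still nonnegative, which relies on how $y$ (equivalently the concave cost $c$) interacts with the cumulative-deficiency ordering that defines the layers. An alternative, and possibly cleaner, route to optimality that sidesteps the dual is a direct primal argument: take any optimal $\pi^\star \in \Pi(F,G)$, invoke Lemma \ref{l:common} to strip common mass, then use Lemma \ref{l:nocross} to argue that the non-intersecting support of $\pi^\star$ must respect the layer partition of $H = F - G$ — i.e. $\pi^\star$ itself decomposes as $\sum_\ell \pi^\star_\ell$ with $\pi^\star_\ell \in \Pi(F_\ell, G_\ell)$ — whence $\int c \, \d\pi^\star = \sum_\ell \int c\, \d\pi^\star_\ell \ge \sum_\ell \int c \, \d\pi_\ell = \int c \, \d\pi$ by optimality of each $\pi_\ell$ in its layer. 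The crux in that version is the claim that a non-intersecting assignment between $F$ and $G$ is automatically measurable with respect to the layering, which should follow by writing each arc's ``height interval'' in the $H$-coordinate and observing that non-intersection forbids an arc from straddling a layer boundary; I would carry out this primal argument as the main line and keep the dual construction as the mechanism for producing the equilibrium $(w,v)$ afterward.
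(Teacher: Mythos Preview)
Your primal ``alternative route'' is the right strategy and is essentially what the paper does, but you are missing the key technical device that makes it work. In the atomic setting your crux step --- that a non-intersecting optimal $\pi^\star$ is automatically layer-measurable via ``each arc's height interval in the $H$-coordinate'' --- does not go through as stated: $H=F-G$ is a step function, a type sitting at a jump can belong to several layers simultaneously, and there is no well-defined $H$-height at an arc endpoint, so you cannot conclude $H(x)=H(z)$ from balance of mass on $[x,z]$. The paper fixes this by first smoothing each atom into a short uniform block, so that $F_\varepsilon,G_\varepsilon$ are atomless and $H_\varepsilon$ is continuous; at the smoothed level your argument is exactly right (no-intersection forces $F_\varepsilon([x,z])=G_\varepsilon([x,z])$, hence $H_\varepsilon(x)=H_\varepsilon(z)$, hence same layer), and then stability of optimal transport (Villani) carries optimality of $\sum_\ell \pi_\varepsilon^\ell$ back to the atomic limit. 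Without the smoothing-plus-stability step, your step~1 is an assertion rather than a proof.

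Your first (dual) route is not viable as written and should be dropped for this lemma. The candidate $w:=\sum_\ell w_\ell$ is ill-posed: a single worker type $x$ appears in several layers (this is exactly composite sorting), each $w_\ell$ is only defined up to an additive constant on its own layer, and the tightness requirement $w(x)+v(z)=y(x,z)$ for every $(x,z)\in\Gamma_\pi$ couples the constants across layers in a way that layer-by-layer duals do not automatically satisfy. That reconciliation is genuinely hard --- the paper devotes a separate, substantial construction (the local hierarchical mechanism of Theorems~\ref{thm:dualalg} and~\ref{p:dual}) to building a global dual \emph{after} the primal has been solved --- so using duality to prove Lemma~\ref{l:layer} would be circular or at least far more work than the primal argument.
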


\noindent A formal statement of the layering feature and a proof are presented in Technical Appendix \ref{p:layer}.



\vspace{0.4 cm}
\noindent \textbf{Discussion}. The impossibility of crossing first appeared in \citet{Monge:1781}, as discussed by \citet{Villani:2009}. Non-crossing arcs are also a central feature of algorithmic sorting problems with distance costs \citep{Aggarwal:1995,Werman:1986} and of the literature on optimal transportation with concave distance costs \citep{Gangbo:1996,McCann:1999}. In the economics literature, \citet{Echenique:2024} also leverages the property of non-crossing pairs to characterize their primal problem. Perfect pairing is a consequence of the non-crossing pairs and is referred to as “mass stays in place if it can” \citep{Gangbo:1996,Villani:2003}. The observation that the impossibility of crossing arcs implies that the assignment problem can be decomposed into a series of independent problems, or layers, is first made in \citet{Aggarwal:1995} and used in \citet{Delon:2012b}.\footnote{A central result in optimal transport is Brenier's theorem that relates the optimal map to the gradient of a convex function \citep{Brenier:1991,Villani:2003}. When Brenier's theorem holds, additional properties of the solution can be derived for the concave costs of skill gaps \citep{McCann:1999,Pegon:2015} and general increasing costs of skill gaps \citep{Gangbo:1996,Villani:2003,Santambrogio:2015,Clark:2023}. In our setting, Brenier's theorem does not apply.} 

\subsection{Characterization} \label{s:nm}

In this section, we provide a characterization of the optimal assignment that results in significant reduction of complexity that applies to an important class of normal mixture distributions. Specifically, we build on the theory of variation diminishing transformations to provide a sharp upper bound to the number of pairs per layer. The key object in this section is the excess skill supply function that captures the extent to which workers outnumber jobs at a particular skill level. 

In this section we consider continuous distributions for worker skills and jobs for ease of exposition. Throughout the paper, we consider discretizations of continuous distributions on $N_s$ ordered skill levels $\{ s_i \}$. The discretized cumulative distribution function equals the original distribution function at each skill level $\{s_i\}$, or $\hat{F}(s_i) = F(s_i)$, except for the final skill where both discretized cumulative distribution functions are equal to one.\footnote{Formally, the discretized cumulative distribution function for workers is described by $\hat{F}(x) = F(x_{n})$ if $x_{n} \leq x < x_{n+1}$ for all $1 \leq n \leq N-1$, $\hat{F}(x) = 0$ if $x<x_1$ and $\hat{F}(x) =1$ when $x \geq x_N$. The discretized distribution $\hat{F}$ and $\hat{G}$  are increasing and right-continuous with left limits.}


\begin{definition}
Let the density functions for workers and for jobs be respectively denoted by $f$ and $g$. The excess skill supply function $h$ is given by the difference $h = f - g$. 
\end{definition}

\noindent When workers are in excess supply, the sign of the excess skill supply function is positive. When jobs are in excess supply, the sign of the excess skill supply function is negative.


\vspace{0.4 cm}
\noindent A necessary condition to obtain another crossing of a given layer is that the measure of underqualification $H$ changes direction. The direction of the measure of underqualification is given by the excess skill supply function $h$. The measure of underqualification increases when workers are in excess supply and decreases when jobs are in excess supply. This shows that the maximum number of points in each layer is less than or equal to the maximum number of directions of the measure of underqualification $H$, or signs of the excess skill supply function. Equivalently, the maximum number of points in each layer is less than or equal to the number of sign changes of the excess skill supply function plus one, since the first point in a layer does not require a change of direction.

The main result of this section (\Cref{t:numberpair}) shows a significant reduction of complexity in the primal problem when the distributions of workers and jobs are normal mixtures distributions. This result enables our quantitative analysis with a large number of worker and job types. Moreover, normal mixture distributions are weakly dense in the set of all distributions: for any probability distribution $F$ on the real line, there exist distributions $\{F^{(n)}\}_{n\in\N}$ that are normal mixtures converging weakly to $F$ as $n \to \infty$.
 
\begin{theorem}{\textit{Characterization with Normal Mixture Distributions}.} \label{t:numberpair}
Let the worker distribution be a mixture of $n$ normal distributions and the job distribution be a mixture of $m$ normal distributions. Then each layer in the measure of underqualification consists of at most $n+m-1$ pairs. The same conclusion holds for the discretized distributions $\hat{F}$ and $\hat{G}$ for any collection of ordered skill levels $\{ s_i \}_{i=1}^{N_s}$.
\end{theorem}


\noindent We prove \Cref{t:numberpair} by establishing that the excess skill supply function $h$ changes sign at most $2(n+m-1)$ times. In turn, this implies there are at most $2(n+m-1) + 1$ points on each layer. Since the number of points in each layer is even, it follows that there are at most $n + m -1$ pairs in each layer. While we present \Cref{t:numberpair} for continuous distribution functions for clarity of exposition, the result holds for both the continuous and for the discretized distributions.

The proof is presented in Appendix \ref{p:nm} and builds on the result that for a density of a signed measure, its convolution with any normal density does not increase its number of sign changes (\Cref{lemma:total positivity2}). The proof uses Schoenberg’s theory of variation diminishing transformations and P{\'o}lya frequency functions \citep{Schoenberg:1930,Schoenberg:1950}.\footnote{The literature on total positivity building on this work \citep{Karlin:1968} has been used in information economics. For example, see \citet{Jewitt:1987}, \citet{Athey:2002}, \citet{Choi:2017}, \citet{Wilson:2019}, \citet{Chade:2020} and \citet{Chade:2024}.}  The most involved part of the proof is \Cref{lemma:total positivity2}, which shows that the variation diminishing property holds for convolutions of signed measures containing point masses with Gaussian noise. A different but related technique is used by \citet{Pomatto:2020} whose proof to their Theorem 1 uses that the convolution with a specific normal density reduces the number of zeros of the density of a signed measure in the context of stochastic dominance with independent noise.  Similar technical tools were also recently used by \citet{Sandomirskiy:2023,Sandomirskiy:2024} for establishing the origins of the multinomial logit stochastic choice rule.

An important implication of \Cref{t:numberpair} is a significant reduction in complexity of solving the sorting problem. Specifically, we compare the efficiency of solving a discretized analog of the normal mixture model to the efficiency of solving this model using the layering structure. This complexity bound makes feasible quantitative analysis in \Cref{sec:quantitative} with a large number of worker and job types.

\setcounter{corollary}{0}

\begin{corollary}\label{corol}
 Suppose that worker and job distributions satisfy the conditions in Theorem \ref{t:numberpair}.   Let $\hat{F},\hat{G}$ be discretizations on ordered skill levels $\{ s_i \}_{i=1}^{N_s}$. 
Then the time complexity of computing the optimal sorting is $O((m+n)^3 N_s)$. 
\end{corollary}

\noindent By Theorem \ref{t:numberpair}, the maximum number of pairs per layer is at most $m+n-1$ for the discretized measure of underqualification $\hat{H}$. Using the Hungarian algorithm (see, for example, \citet{Burkard:2012}), the complexity of solving the assignment problem in a single layer is $O((m+n)^3)$. Since the number of layers is bounded above by the number $N_s$ of skill levels, the complexity after decomposing into layers is $O((m+n)^3 N_s)$. In contrast, directly solving the assignment problem with $N_s$ skill types has complexity $O(N_s^3)$ and is not feasible even for a relatively modest numbers of job and worker types. Note that both $m$ and $n$ are typically quite small for most univariate datasets \citep{McLachlan:2000}. 

\vspace{0.4 cm}
\noindent \textbf{Discussion}. Due to Theorem \ref{t:numberpair}, optimal sorting can be computed for a large number of worker and job types using, for example, a standard linear program for each layer. In Technical Appendix \ref{s:withinlayer}, we describe the details of characterizing the optimal assignment within each layer. \citet{Sargent:2025} present a concise description of the recursive algorithm that we use to compute the optimal assignment.

Finally, while we analyzed the setting where the worker and job distributions are given by normal mixtures, the same results hold when the worker and job distributions follow lognormal mixture distributions. This follows directly from an exponential transformation of the skills, which preserves the number of points per layer. This observation is used in our quantitative analysis in \Cref{sec:quantitative}, where the distributions of workers and jobs are described by discretized lognormal mixture distributions.


\section{Wages and Firm Values} \label{s:wagesandfirmvalues}

This section derives a full characterization of the dual solution to the sorting problem with concave costs of skill gaps (\ref{e:pp_dual}) and determines equilibrium wages and firm values. The main result of our dual construction is that equilibrium wages and job values form a \textit{regional hierarchical structure}, fundamentally different from classical sorting models. The main technical challenge that we need to overcome in our construction is to ensure consistency at every scale from regional to global scales via aggregation of the regional relative wages. 


The first part of this section characterizes the dual solution for mismatched workers and jobs. We construct the mismatch penalty and show that it has a regional hierarchical structure where wages are determined independently within each region of skills without having to consider other regions. We then use these mismatch penalties in the second part of the section to construct wages and firm values for all workers and jobs.\footnote{Another literature, which follows \citet{Garicano:2000} and \citet{Garicano:2006}, solves hierarchical
assignment models with heterogeneous workers. Knowledge is cumulative, so that more skilled workers know how to solve a problem when less skilled workers
do. Production is supermodular in worker skill, and equilibrium sorting $-$ the primal solution $-$ is positive and hierarchical. We establish a hierarchical structure in equilibrium wages and firm values $-$ the dual solution in \Cref{thm:dualalg}.}


\vspace{0.35 cm}
\noindent \textbf{Mismatched Workers and Jobs}. Let $S = I \cup J$ denote the set of all skill levels, where $I$ and $J$ are disjoint sets of worker skills and job complexities after the removal of perfect pairs. Suppose that an optimal assignment $\pi$ consists of $n$ worker-job pairs $\{(x_i,z_i)\}^n_{i=1}$. Our first goal is to construct mismatch penalties, or a shadow cost of mismatch, $\phi: S \to \R$ such that for each worker $x$ and every occupation $z$,
\begin{equation}
\phi(x)-\phi(z)\leq c(x,z), 		\label{eq:dual}
\end{equation}
which holds with equality if the assignment $\pi$ pairs worker $x$ to job $z$.\footnote{Setting $\psi(z)=-\phi(z)$ for all mismatched jobs $z\in Z$, we equivalently construct the worker mismatch penalty $\phi:I\to\R$ and
the firm mismatch penalty $\psi:J\to\R$ such that for each worker $x$
and job $z$, $\phi(x)+\psi(z)\leq c(x,z)$, where the equality holds
if the optimal assignment $\pi$ pairs worker type $x$ with occupation $z$.}

 

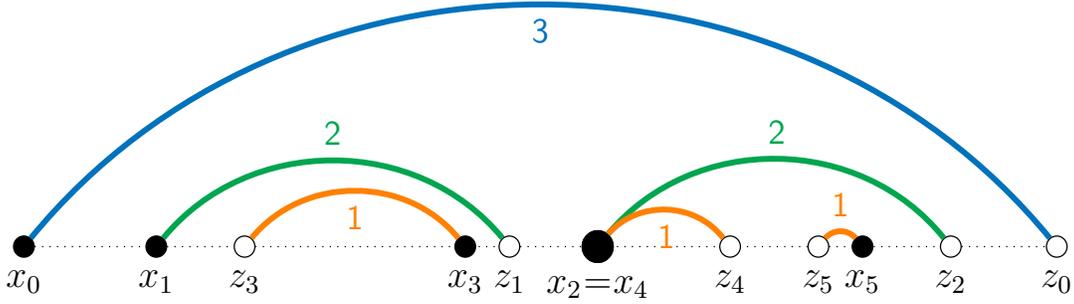
\begin{figure}[!t]
     \begin{centering}
    \resizebox{14.5cm}{5.15cm}{\begin{tikzpicture}
\node[circle,fill=black,draw, minimum size=0.1pt,scale=0.6,label=below:{$x_0$}] (A) at  (1.5,0){};
\node[circle,fill=black,draw, minimum size=0.1cm,scale=0.6,label=below:{$x_1$}] (B) at  (3,0) {} ;
\node[circle,draw, minimum size=0.1pt,scale=0.6,label=below:{$z_3$}] (C) at  (4,0){};
\node[circle,fill=black,draw, minimum size=0.1cm,scale=0.6,label=below:{$x_3$}] (B3) at  (6.5,0) {} ;
\node[circle,draw, minimum size=0.1pt,scale=0.6,label=below:{$z_1$}] (C3) at  (7,0){};

\node[circle,draw,fill=black, minimum size=0.1pt,scale=0.9,label=below:{}] (E) at  (8,-0){};
\node[circle,fill=black,draw, minimum size=0.1cm,scale=0.9,label=below:{ $x_2\hspace{-0.1cm}=\hspace{-0.1cm}x_4$}] (F) at  (8,0) {} ;
\node[circle,draw, minimum size=0.1pt,scale=0.6,label=below:{$z_4$}] (G) at  (9.5,0){};
\node[circle,draw, minimum size=0.1cm,scale=0.6,label=below:{$z_5$}] (H) at  (10.5,0) {} ;
\node[circle,fill=black,draw, minimum size=0.1cm,scale=0.6,label=below:{$x_5$}] (H2) at  (11,0) {} ;
\node[circle,draw, minimum size=0.1pt,scale=0.6,label=below:{$z_2$}] (I) at  (12,0){};

\node[circle,draw, minimum size=0.1cm,scale=0.6,label=below:{$z_0$}] (L) at  (13.2,0) {};

\draw[dotted] (A) -- (B);\draw[dotted] (B) -- (C);

\draw[dotted] (C) -- (C3); 


\draw[dotted] (H) -- (G);\draw[dotted] (C3) -- (G);
\draw[dotted] (H) -- (I);\draw[dotted] (I) -- (L);
\path[line width=0.65mm, RoyalBlue,-,every node/.style={font=\sffamily\small}] (L) edge[bend right=50] node [below] {3} (A);
\path[line width=0.65mm, Green,-,every node/.style={font=\sffamily\small}] (C3) edge[bend right=50] node [above] {2} (B);
\path[line width=0.65mm, orange,-,every node/.style={font=\sffamily\small}] (C) edge[bend left=50] node [below] {1} (B3);
\path[line width=0.65mm, Green,-,every node/.style={font=\sffamily\small}] (I) edge[bend right=50] node [above] {2} (E);
\path[line width=0.65mm, orange,-,every node/.style={font=\sffamily\small}] (G) edge[bend right=50] node [below] {1} (F);

\path[line width=0.65mm, orange,-,every node/.style={font=\sffamily\small}] (H2) edge[bend right=50] node [above] {1} (H);

        \end{tikzpicture}}
\par\end{centering}
\vspace{-0.25cm}

    \caption{Regional Hierarchical Structure}
\label{f:localhierarchical} {\scriptsize{}{}\vspace{0.2cm} Figure \ref{f:localhierarchical} illustrates the regional hierarchical structure for the dual solution given the optimal sorting given by the arcs. A lower skill region has skills in the interval between $x_{1}$ and $z_{1}$ while a higher skill region has skills in the interval between $x_{2}$ and $z_{2}$. The hierarchical structure implies that the relative shadow cost of mismatch for all skills within either the first or the second region is determined within the region. Wages are determined hierarchically within regions. In the lower skill region, wages are first determined for the innermost pair $(x_{3},z_{3})$, and then for the outer pair $(x_{1},z_{1})$. The numbers on the arcs indicate the sequence by which we move from low-level pairs to high-level pairs.}
    
\end{figure}

Before providing a formal description of the regional hierarchical structure, we illustrate the construction for the dual solution given the optimal sorting captured by the arcs in Figure \ref{f:localhierarchical}. A lower skill group has skills in the interval between $x_{1}$
and $z_{1}$. A higher skill group has skills in the interval between
$x_{2}$ and $z_{2}$. The relative shadow cost of mismatch for all
skills within either the first or the second group is determined within each group. Wage determination is thus \textit{regional}, meaning that wages are determined within a group independent of other groups. Wage determination is also \textit{hierarchical} within groups: at each stage, wages depend only on information from the skill group
nested within the progressively larger group. In
the low-skill group, the wage is first determined for the innermost
pair $(x_{3},z_{3})$ that contains no nested skill groups, and then for the outer
pair $(x_{1},z_{1})$. The hierarchical structure shows that the construction of wages moves sequentially from low-level pairs to high-level pairs, as indicated by the numbers on the arcs in Figure \ref{f:localhierarchical}.
For the high-skill
group, the relative wage for the pair $(x_{2},z_{2})$ is thus constructed from
the relative wages for both pairs $(x_{4},z_{4})$ and $(x_{5},z_{5})$. Finally, wages for the
outermost pair $(x_{0},z_{0})$ are constructed using the relative
wages for the first and second skill groups. 

We now describe our hierarchical characterization of the mismatch penalties for composite sorting. The hierarchical structure means that penalties within each region, by which we mean the interval between two points in a pair, are constructed from the mismatch penalties in its subregions and do not depend on other regions. Formally, a region is determined by a pairing $(x_0,z_0)$. The subregions of $(x_0,z_0)$ are given by the subpairs $\{(x_i,z_i)\}^p_{i=1}$, which are non-nested pairs inside the skill interval $(x_0,z_0)$. The hierarchical structure leads to the following separation property of the global mismatch penalty $\phi$: for any two points $s$ and $s'$ in a region,
$\phi(s)-\phi(s')$ can be computed only based on points within this region, and thus it is independent of points outside this region. We describe this structure in detail below.  

The hierarchical construction in each region is as follows. Consider a region $(x_0,z_0)$ that contains $p$ subpairs $\{(x_i,z_i)\}^p_{i=1}$. Each subpair comes with an associated mismatch penalty function $\phi_i$, which satisfies equation (\ref{eq:dual}) for all workers and jobs inside the interval $(x_i,z_i)$. We note that level shifts in the mismatch penalties do not affect the relative mismatch penalty within the region. We can then construct level shifts to obtain a mismatch penalty for the entire region. 

Specifically, given a mismatch penalty function for each subregion, we construct a single mismatch function for the region. First, we ensure that mismatch penalty functions for the subpairs are consistent with each other. We do this by shifting each of the penalty functions in level. Specifically, we construct a mismatch penalty for the region on the domains of the subpairs:
\begin{align}
\phi(s)= \phi_i(s)+ \sum_{k=i+1}^p \beta _k +\phi_p(x_p)-\phi_i(x_i) \label{eq:phidef}
\end{align}
for $i = 1,\dots,p$ and for skills $s\in X_{[x_i,z_i]}$ and $s \in Z_{[x_i,z_i]}$. In equation (\ref{eq:phidef}), level shifts are captured by $\sum \beta _k$, with the convention that the sum equals zero when $i=p$. Without loss of generality, we normalize the level shifts so that the mismatch penalty in the $p$-th subregion is unchanged. 

In addition to specifying the mismatch penalty for all workers and jobs inside the region, we specify the mismatch penalty on the boundaries at $x_0$ and $z_0$. For occupation $z_0$, equation (\ref{eq:dual}) requires $\phi(z_0) \geq \phi(x) - c(x,z_0)$ for all workers in the region. Therefore, we set
\begin{equation}
\phi(z_0) = \max\limits_x \hspace{0.08 cm} \phi(x) - c(x,z_0)  \hspace{0.07 cm} . \label{e:phiz0}
\end{equation}
Moreover, in order to ensure that equation (\ref{eq:dual}) holds with equality for the pair $(x_0,z_0)$, we specify the mismatch penalty for worker $x_0$ as: 
\begin{equation}
\phi(x_0) = \phi(z_0) + c(x_0,z_0) \hspace{0.07 cm}  . \label{e:phix0}
\end{equation}
Given these definitions, we need to ensure for worker $x_0$ that $\phi(x_0) - \phi(z) \leq c(x_0,z)$ for all occupations $z$ in the region. Using the specification of the mismatch penalty at the boundaries \eqref{e:phiz0} and \eqref{e:phix0}, for all workers $x$ and occupations $z$ in the region it has to be that:
\begin{equation}
\phi(x) - \phi(z) \leq c(x_0,z) + c(x,z_0) - c(x_0,z_0)  \hspace{0.07 cm}. \label{e:phix1}
\end{equation}


In order to obtain a valid mismatch penalty of the form (\ref{eq:phidef}) for the region, the level shifts have to satisfy conditions such that (\ref{eq:dual}) and (\ref{e:phix1}) hold. Consider a worker type $x_m$ and an occupation $z_n$, where $n$ and $m>n$ index subpairs. In order to satisfy condition (\ref{eq:dual}), we require $\phi(x_m) - \phi(z_n) \leq c(x_m,z_n)$ and to satisfy condition (\ref{e:phix1}) we require $\phi(x_m) - \phi(z_n) \leq c(x_0,z_n) + c(x_m,z_0) - c(x_0,z_0)$. Using the proposed mismatch penalty (\ref{eq:phidef}), this requires $\sum\limits_{k=n+1}^m \beta _k \geq \max(c_{00}-c_{0n}-c_{m0} + c_{nn}, c_{nn}-c_{mn})$, where $c_{ij}=c(x_i,z_j)$ for $i,j\in\{1,\dots,p\}$. Consider next a worker type $x_n$ and an occupation $z_m$. Similarly, it is necessary that both $\phi(x_n) - \phi(z_m) \leq c(x_n,z_m)$ and $\phi(x_n) - \phi(z_m) \leq c(x_0,z_m) + c(x_n,z_0) - c(x_0,z_0)$, which using the proposed mismatch penalty (\ref{eq:phidef}) requires $\sum\limits_{k=n+1}^m\beta_k\leq \min(c_{0m}+c_{n0}-c_{00} -c_{mm}  ,c_{nm} -c_{mm} )$. In sum, the level shifters of the mismatch penalty $(\beta_2,\dots,\beta_p)$ are necessarily a solution to the system of inequalities: 
\begin{align}  
\max(c_{00}-c_{0n}-c_{m0} + c_{nn}, c_{nn}-c_{mn})  \leq \sum_{k=n+1}^m\beta_k\leq \min(c_{0m}+c_{n0} -c_{00} -c_{mm}  ,c_{nm} -c_{mm} )  \label{eq:ineqs}
\end{align} 
for all $1\le n<m\le p$. 

It is important to note that at each stage, we exploit the concavity of the cost function to establish the existence of a solution to the system of inequalities (\ref{eq:ineqs}) using Lemma \ref{lemma:dualalgorithm}. Thus, our findings on the hierarchical structure for the dual problem are specific to concave costs of skill gaps. Regional groups generally do not arise for convex costs. For instance, when all worker skills are below each job complexity, the optimal assignment for a convex cost is positive sorting, which allows for no local regions (except for the whole set) since any two pairs intersect. In sharp contrast, for a concave cost, every pair forms a local region independent of how workers and jobs are located.

The idea of the hierarchical mechanism is to start from the pairs with no subpairs, and then pairs with all subpairs analyzed in previous steps until all pairs are exhausted.\footnote{When there are no further subpairs, the final step is to choose level shifts to ensure consistency among the existing pairs in the same manner we discussed above.} We present a condensed statement of the theorem here and include the complete formal description of the regional hierarchical mechanism in Appendix \ref{a:dualproof}, and the proof in Appendix \ref{pthm:dualalg}.

\begin{theorem}{\textit{Regional Hierarchical Mechanism for Mismatch Penalties}.} \label{thm:dualalg} Given an optimal assignment,
the regional hierarchical mechanism constructs an optimal dual pair $(\phi,\psi)$
where $\psi=-\phi$. 

Within each skill group, relative wages are
determined regionally: for any two points $s$ and $s'$ in a skill group,
$\phi(s)-\phi(s')$ depends only on the pairs within the group.
\end{theorem}

\noindent Our mechanism solves the problem in the order from bottom to top. At each step, the values of $\phi$ in the hidden arcs need not be computed again but only adjusted with constant level shifts. 

Theorem \ref{prop:hiereff} establishes that our hierarchical solution results in a sharp reduction of complexity when the distributions of workers and jobs are given by discretization of mixtures of normal distributions. 

\begin{theorem}{\textit{Complexity of the Regional Hierarchical Mechanism for Mismatch Penalties}.} \label{prop:hiereff}
    Let the worker distribution $F$ be a discretization of a mixture of $n$ normal distributions and let the job distribution $G$ be a discretization of a mixture of $m$ normal distributions with ordered skill levels $\{ s_i \}_{i=1}^{N_s}$. 
Then the complexity of the dual potentials is $O((m+n)^4 N_s)$.
\end{theorem}

\noindent The proof of \Cref{prop:hiereff} is in Appendix \ref{a:hiereff}. We first note that directly solving the dual problem as a linear program with $N_s$ skill types has time complexity $O(N_s^4)$, which is not feasible even for a relatively modest numbers of job and worker types. Instead, Theorem \ref{prop:hiereff} establishes how we construct the mismatch penalty for a large number of worker and job types by drastically reducing the complexity of the problem. We use this result to compute the wages and firm values in the quantitative analysis in Section \ref{sec:quantitative}. In Proposition \ref{rem:runtime} of Technical Appendix \ref{s:efficiency} we show further efficiency properties of our hierarchical construction for empirical measures. 

We next use the mismatch penalty functions to define worker earnings and firm values for the non-overlapping segments of the worker distribution and
the job distribution. Let wages $w(x)=\alpha(x)-\phi(x)$
and firm values $v(z)=\theta(z)-\psi(z)$, where we recall from the technology (\ref{e:firmtechnology}) that $\alpha$ reflects the worker contribution to production independent of the occupation, and $\theta$ reflects the value of the job independent of the worker that fulfills the job. The first observation is
that assignment $\pi$ that solves the mismatch cost minimization
problem (\ref{ppmin}) also solves the output maximization
problem (\ref{pp}). Moreover, $w(x)+v(z)\geq y(x,z)$ holds for all
$(x,z)$ with equality if worker $x$ is assigned to job $z$ under the optimal assignment, where $y(x,z)=\alpha(x)+\theta(z)-c(x,z)$.
By Lemma \ref{lemma:dual}, it thus follows that $(w,v)$ is a dual optimizer
for the output maximization problem. In sum, given the mismatch
penalty $(\phi,\psi)$ for the minimization problem without overlapping
parts, the dual pair $(w,v)$ for the maximization problem without
overlapping parts is obtained.

\vspace{0.35cm}
\noindent \textbf{Adding Perfectly Paired Workers and Jobs}. Up to this point, we determined worker wages and firm values in the output maximization problem when there is no overlap between the distributions
of workers and jobs. These wage and value functions are used to construct worker wages and firm values where there is overlap in the distributions.



We start with only mismatched workers and jobs and denote the wages constructed above by $\tilde{w}$.\footnote{Recall that the sets of mismatched workers and jobs are $I$ and $J$ and that $S = I \cup J$ is the set of all skill levels.} We add perfectly matched firms and determine what income each firm could generate given mismatched workers and wages $\tilde{w}$. The first auxiliary firm problem is to choose an employee among only mismatched workers $x\in I$. Formally,
a firm with job $z\in S$ solves: $\tilde{v}(z):=\max\limits _{x\in I} \big( y(x,z)-\tilde{w}(x) \big)$. We refer to $\tilde{v}$ as firm mismatch compensation, that is, profits firms can attain given a mismatched worker with wage $\tilde{w}$.

We next introduce perfectly paired workers and present both mismatched and perfectly paired workers with firm mismatch compensation $\tilde{v}$. We determine what wage income both the imperfectly and perfectly paired workers would generate given the compensation required by firms. The auxiliary decision problem of a worker $x\in S$ is to choose any job, including the perfectly paired jobs, to solve: 
\begin{align}
\hat{w}(x):=\max_{z\in S} \; \hspace{0.05 cm} y(x,z)-\tilde{v}(z) .\label{e:hatphi}
\end{align} 
As a result, we obtain wages $\hat{w}$ for both mismatched and perfectly paired workers.



Finally, we determine what profits $\hat{v}$ firms would generate given all workers and their required compensation $\hat{w}$. 
We set up a second auxiliary firm problem, which is the problem
of a mismatched job $z\in J$ choosing an employee among all
workers (perfectly paired and mismatched) subject to wage
schedule $\hat{w}$: 
\begin{align}
\hat{v}(z):=\max_{x\in S}\; \hspace{0.05 cm} y(x,z)-\hat{w}(x) .\label{e:psiz2}
\end{align}
We refer to $\hat{v}$ as mismatched firm compensation, since it represents
the profits of firm type $z\in J$.

Equilibrium wages are formulated using auxiliary wages for mismatched
workers, $w(x)=\hat{w}(x)$ for all $x\in I$, as well as mismatched
firm compensation, $v(z)=\hat{v}(z)$ for all $z\in J$. Equilibrium wages equal $w(x)=\alpha(x)+\theta(x)-v(x)$
for all $x\in J$ and equilibrium firm values $v$ are $v(z)=\alpha(z)+\theta(z)-w(z)$ for every job $z\in I$. Theorem \ref{p:dual}
shows that the wage function $w$ and the firm value function $v$
indeed solve the dual problem for the full assignment problem.\footnote{We can extend the domain of wages $w$ and values $v$ to $K=X\setminus(I\cup J)$
by setting $w(x)=\max\limits_{z\in I\cup J} \big( y(x,z)-v(z) \big)$ for $x\in K$ and
$v(z)=\alpha(z)+\theta(z)-w(z)$ for $z\in K$.}

\begin{theorem}{\textit{Dual Solution}.} \label{p:dual} The constructed functions $(w,v)$ are a dual solution for the sorting problem, that is, $w(x)+v(z)\geq y(x,z)$, which holds with equality if the assignment $\pi$ pairs worker $x$ to job $z$. \end{theorem}

\noindent The proof, as well as a formal analysis of the above mechanism, is in Technical Appendix \ref{s:pdual}.


\section{Comparative Statics} \label{s:comparativestatics}

In this section, we prove two results on how sorting varies with changes in the output function. We show that the analysis of comparative statics with concave costs of skill gaps is significantly richer and more complicated than for the canonical assignment models. In constrast, with convex costs optimal sorting is always assortative and thus does not vary with the extent of convexity. In this section we develop a new approach to comparative statics leveraging the characterization of optimal sorting with concave costs of skill gaps.

Theorem \ref{p:cs} shows that optimal sorting becomes more positive, by which we mean larger in concordance order, as the cost function becomes less concave in the skill gap. Theorem \ref{t:tolinear} shows that there exists a threshold in concavity of costs $\zeta_p$ and $\zeta_u$ beyond which the optimal assignment in each layer is positive, which we call \textit{layered positive sorting}.


\vspace{0.35cm}
\noindent \textbf{Positive Sorting}. For any two assignments $\pi$ and $\hat{\pi}$ between a fixed pair of distributions of workers and jobs, we say that assignment $\pi$ is smaller in \textit{concordance order} than $\hat{\pi}$, which we denote by $\pi \preceq \hat{\pi}$, if for any coordinate $(x,z)$, less mass is concentrated in both the top-right and bottom-left quadrants under assignment $\pi$ than under $\hat{\pi}$. Intuitively, a more positive sorting corresponds to an assignment larger in concordance order, and this equivalence was made precise by \citet{Tchen:1980}.\footnote{When assignment $\hat{\pi}$ is larger in concordance order, other measures of statistical association, such as  the rank correlation, the correlation coefficient, and Kendall's tau coefficient are also larger for $\hat{\pi}$ than for assignment $\pi$ \citep{Joe:1997}.}

A possible approach to analyze comparative statics would be to apply recent results of \citet{Anderson:2022} who provide sufficient conditions under which sorting is larger in concordance order as the output function changes. In Appendix \ref{a:cs}, we show, however, that their conditions are not satisfied in our economy with concave costs of skill gaps, which leads us to develop a different approach to comparative statics that relies on the characterization of optimal sorting with concave costs of skill gaps. The key difficulty is to identify a local cyclical monotonicity condition specifically for concave costs.



Theorem \ref{p:cs} shows that optimal sorting is more positive when costs of skill gaps becomes less concave. 

\begin{theorem}{\textit{Comparative Statics of Composite Sorting}.} \label{p:cs}
Suppose that the cost of skill gaps $c(x,z)$ is an increasing concave function of skill gaps and that $\kappa$ is some increasing convex function so that $\hat{c}= \kappa(c)$ is also an increasing concave function of skill gaps.\footnote{That is, the cost function $c(x,z)$ is an increasing and concave function of skill gap $(x-z)$ on the region $x \geq z$ and a (potentially different) increasing and concave function of skill gap  $(z-x)$ on the region $x\leq z$.} If $\pi$ is an optimal assignment with costs $c$, then there exists an optimal assignment $\hat{\pi}$ with the less concave cost of skill gaps $\hat{c}$ such that $\pi \preceq \hat{\pi}$.
\end{theorem}

\begin{figure}[!t]
\begin{centering}
\resizebox{16.2cm}{7.50cm}{ \begin{tikzpicture}
\node[circle,draw, minimum size=0.1pt,scale=0.6,label=below:{$\tilde{x}_{1}$}] (A) at  (0,0){};
\node[circle,fill=black,draw, minimum size=0.1cm,scale=0.6,label=below:{$\vphantom{\tilde{x}} z_1$}] (B) at  (1,0) {} ;
\node[circle,draw, minimum size=0.1pt,scale=0.6,label=below:{$ \vphantom{\tilde{x}} x_1$}] (C) at  (2,0){};
\node (2C) at (2.8,0){};
\node (c1) at (3.5,0) {\Large$\dots$};
\node (2D) at (4.2,0){};
\node[circle,fill=black,draw, minimum size=0.1cm,scale=0.6] (D) at  (5,0) {} ;
\node[circle,draw, minimum size=0.1pt,scale=0.6] (E) at  (6,0){};
\node[circle,fill=black,draw, minimum size=0.1cm,scale=0.6,label=below:{$\tilde{z}_{1}$}] (7) at  (7,0) {} ;

\node (E2) at (7.8,0){};
\node (c) at (8.5,0) {\Large$\dots$};
\node (F2) at (9.2,0){};

\node[circle,draw, minimum size=0.1pt,scale=0.6,label=below:{$\tilde{x}_{q}$}] (A2) at  (10,0){};
\node[circle,fill=black,draw, minimum size=0.1cm,scale=0.6] (B2) at  (11,0) {} ;
\node[circle,draw, minimum size=0.1pt,scale=0.6] (C2) at  (12,0){};
\node (2C2) at (12.8,0){};
\node (c2) at (13.5,0) {\Large$\dots$};
\node (2D2) at (14.2,0){};
\node[circle,fill=black,draw, minimum size=0.1cm,scale=0.6,label=below:{$\vphantom{\tilde{x}} z_p$}] (D2) at  (15,0) {} ;
\node[circle,draw, minimum size=0.1pt,scale=0.6,label=below:{$ \vphantom{\tilde{x}} x_p$}] (E3) at  (16,0){};
\node[circle,fill=black,draw, minimum size=0.1cm,scale=0.6,label=below:{$\tilde{z}_{q}$}] (72) at  (17,0) {} ;

\draw[dotted] (A) -- (B);
\draw[dotted] (2C) -- (C);\draw[dotted] (D) -- (2D);
\draw[dotted] (B) -- (C);
\draw[dotted] (E2) -- (E);\draw[dotted] (2D) -- (E);
\draw[dotted] (A2) -- (F2);\draw[dotted] (A2) -- (C2);
\draw[dotted] (C2) -- (2C2);\draw[dotted] (E3) -- (2D2);
\draw[dotted] (E3) -- (72);

\path[-,every node/.style={font=\sffamily\small}] (A) edge[bend left=40] node [left] {} (7);
\path[-,every node/.style={font=\sffamily\small}] (C) edge[bend right=40] node [left] {} (B);
\path[-,every node/.style={font=\sffamily\small}] (D) edge[bend left=40] node [left] {} (E);
\path[-,every node/.style={font=\sffamily\small}] (D2) edge[bend left=40] node [left] {} (E3);
\path[-,every node/.style={font=\sffamily\small}] (A2) edge[bend left=40] node [left] {} (72);
\path[-,every node/.style={font=\sffamily\small}] (C2) edge[bend right=40] node [left] {} (B2);

\node[circle,draw, minimum size=0.1pt,scale=0.6,label=below:{$x_{0}$}] (A) at  (0,4){};
\node[circle,fill=black,draw, minimum size=0.1cm,scale=0.6,label=below:{$z_{1}$}] (B) at  (1,4) {} ;
\node[circle,draw, minimum size=0.1pt,scale=0.6,label=below:{$x_1$}] (C) at  (2,4){};
\node[circle,fill=black,draw, minimum size=0.1cm,scale=0.6,label=below:{$z_{2}$}] (D) at  (3,4) {} ;
\node[circle,draw, minimum size=0.1pt,scale=0.6,label=below:{$x_2$}] (E) at  (4,4){};
\node (E2) at (8,4){};
\node (F2) at (11,4){};
\node[circle,fill=black,draw, minimum size=0.1cm,scale=0.6,label=below:{$z_p$}] (F) at  (15,4) {} ;
\node[circle,draw, minimum size=0.1pt,scale=0.6,label=below:{$x_p$}] (G) at  (16,4){};
\node[circle,fill=black,draw, minimum size=0.1cm,scale=0.6,label=below:{$z_0$}] (H) at  (17,4) {} ;
\node (c) at (9.5,4) {\Large$\dots$};
\draw[dotted] (A) -- (B);

\draw[dotted] (B) -- (C);\draw[dotted] (D) -- (C);
\draw[dotted] (E2) -- (E);\draw[dotted] (D) -- (E);
\draw[dotted] (F) -- (F2);
\draw[dotted] (H) -- (G);\draw[dotted] (F) -- (G);
\path[-,every node/.style={font=\sffamily\small}] (A) edge[bend left=30] node [left] {} (H);
\path[-,every node/.style={font=\sffamily\small}] (C) edge[bend right=40] node [left] {} (B);
\path[-,every node/.style={font=\sffamily\small}] (D) edge[bend left=40] node [left] {} (E);
\path[-,every node/.style={font=\sffamily\small}] (F) edge[bend left=40] node [left] {} (G);

 \draw[line width=1.05pt, double distance=0pt, -{Classical TikZ Rightarrow[length=3mm]}] (8.5,3.2) -- (8.5,1.6);

        \end{tikzpicture}} 
\par\end{centering}
\caption{Sorting When Costs of Skill Gaps Become Less Concave}
\label{f:cc'} {\scriptsize{}{}\vspace{0.2cm}
Figure \ref{f:cc'} shows an optimal assignment that is no longer optimal when the costs of skill gaps become less concave. In this case, there exists a pair $(x_0,z_0)$ with positively sorted subpairs $\{(x_i,z_i)\}_{i=1}^p$, which we display in the top panel, such that the assignment with workers and jobs $\{(x_i,z_i)\}_{i=0}^p$ can be improved in a more positive fashion, as shown in the bottom panel.} 
\end{figure}


\noindent We prove Theorem \ref{p:cs} in \Cref{a:cs}. Since the distributions of workers and jobs remain unchanged across different costs, the sorting problem is split into identical layers irrespective of the cost function. Hence, \Cref{p:cs} means that optimal sorting within each layer becomes more positive as the cost of skill gaps becomes less concave. We now outline the two main steps of the proof.

First, for our setting with concave costs, we establish a new characterization of the classical cyclical monotonicity specific to concave costs of skill gaps. If an optimal assignment is no longer optimal when the costs are less concave with the mismatch costs $\hat{c}$, there exists a pair $(x_0,z_0)$ with positively sorted subpairs $\{(x_i,z_i)\}_{i=1}^p$, as in the top panel of Figure \ref{f:cc'} (by Lemma \ref{lemma:cyclical}), such that the local assignment problem with workers and jobs $\{(x_i,z_i)\}_{i=0}^p$ can be improved with more positive sorting shown in the bottom panel of Figure \ref{f:cc'}. In order to prove this, suppose that in an optimal assignment with less concave costs, the worker $x_0 = \tilde{x}_1$ is instead optimally paired to job $z_k = \tilde{z}_1$ for some $k$.  Second, we show in Lemma \ref{l:positive} that since positive sorting is optimal on the interval $(x_0,z_k)$ with a more concave cost of skill gaps, positive sorting is also optimal on $(x_0,z_k)$ with a less concave cost of skill gaps. We continue this procedure to the right, that is, we start with worker $x_{k+1} = \tilde{x}_2$ and repeat the argument, and obtain the structure in the bottom panel of Figure \ref{f:cc'}.


In order to see that the optimal assignment becomes more positive, or larger in concordance order, we make two observations. First, note that all successively positively sorted pairs in the top panel of Figure \ref{f:cc'}, such as $(x_1,z_1)$, $(x_2,z_2)$ and $(x_p,z_p)$, are also formed in the bottom panel of Figure \ref{f:cc'}. Hence, they do not affect the concordance order. Second, we observe that the bottom panel sorts the remaining workers and jobs positively, which has the largest concordance order among all assignments. Since the top panel does not sort the remaining workers and jobs positively, it follows that the assignment for the bottom panel is larger in concordance order. Hence, all improvements make the assignment more positive.

\vspace{0.35cm}
\noindent \textbf{Threshold for Layered Positive Assignment}.  The previous result shows that a more concave cost function yields more negative sorting, and a less concave cost function yields more positive sorting. We next derive a threshold for concavity beyond which optimal sorting is the most positive assignment for our economy. The most positive assignment for our economy is given by positive sorting in each layer, which we call \textit{layered positive sorting}. It is important to note that positive sorting in each layer does not imply positive sorting overall.

We consider the assignment problem when the power indices $\zeta_p$ and $\zeta_u$ for the costs of skill gaps (\ref{eq:cxz}) are close to one, that is, when the cost of skill gaps is almost linear in the distance between the worker skill and the job.\footnote{We remark that this is the only result that uses the specific form for the costs of skill gaps (\ref{eq:cxz}). All other results only use concavity of the costs of skill gaps.} First, we maximize the number of perfect pairs. Second, we decompose the assignment problem into layers $0\leq\ell\leq L$. Third, when $\zeta_p$ and $\zeta_u$ exceed the threshold $\bar{\zeta}$, the optimal assignment within each layer is simple. Specifically, we show in Theorem \ref{t:tolinear} below that the optimal sorting within each layer is positive sorting, which we denote by $\pi_{\ell}^{+}$. The solution to the full assignment problem is given by the combination of the positive assignments within each layer. We refer to this assignment as the layered positive assignment denoted by $\pi^{+}=\sum\pi_{\ell}^{+}$.

\begin{theorem}{\textit{Layered Positive Sorting}.} \label{t:tolinear} For any discrete worker and job distributions, there exists $\bar{\zeta}<1$ such
that for any $\zeta_p, \zeta_u \in[\bar{\zeta},1]$, the layered positive assignment
$\pi^{+}$ is optimal. \end{theorem}


\noindent The proof is in Appendix \ref{App:linearcost}. The implication is that for mismatch power indices above the threshold $\bar{\zeta}$, the solution can be directly constructed by evaluating the measure of underqualification, and by assigning positively within each layer.\footnote{It is useful to contrast our result with \cite{Juillet:2020} who shows that the layered positive assignment is the limit of some optimal assignments as $\zeta\to1^{-}$. Our result proves the existence of a threshold beyond which the layered positive assignment is optimal for our environment and is applicable away from the limit.}

\section{Quantitative Results} \label{sec:quantitative}


In this section, we develop a quantitative illustration of the model. The distinctive feature of our model among assignment models is that we obtain earnings dispersion within occupations and, hence, we apply our model to evaluate earnings dispersion within and across occupations. The quantitative model isolates the implications of concave costs of skill gaps for sorting and earnings dispersion within occupations. We compare our results to settings with supermodular and submodular output functions, which have commonly been used in the assignment literature.


\begin{figure}[!t]
\begin{center}
\begin{subfigure}{0.47\textwidth}
\includegraphics[width=\textwidth,height=0.25\textheight]{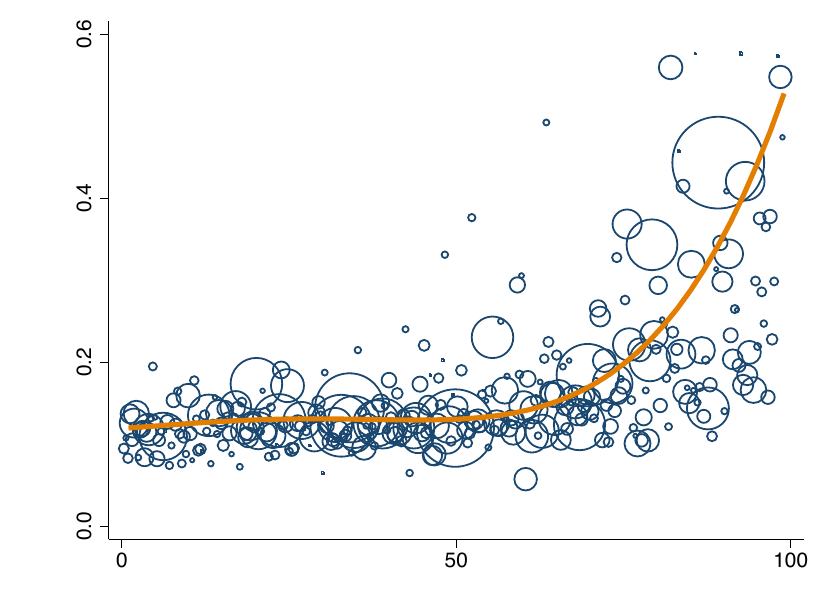} \caption{1980}
\end{subfigure}
\begin{subfigure}{0.47\textwidth}
\includegraphics[width=\textwidth,height=0.25\textheight]{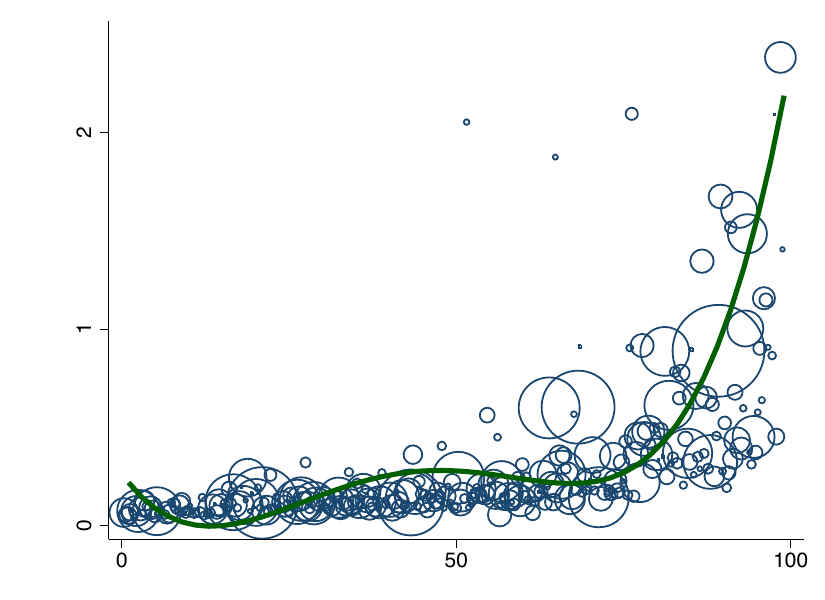} \caption{2005}
\end{subfigure}
\par\end{center}
\vspace{-0.75cm}

\caption{Earnings Dispersion by Occupation}
{\scriptsize{}{}\vspace{0.2cm}
 \Cref{fig:data} displays earnings dispersion within occupation.
On the horizontal axis, we rank occupations by the average earnings earned in each occupation. For every occupation, we calculate the dispersion in earnings within that occupation. The size of the circle indicates the share of employment within the occupation. The data pattern is summarized by the solid lines. \label{fig:data}}

 \end{figure}


\vspace{0.4 cm}
\noindent Our data sources are the Census IPUMS sample for 1980 and the American Community Survey for 2005. We consider individuals between ages 16 and 64 who worked during the previous year. Our measure of earnings is annual wage and salary income before taxes. Earnings are normalized by average earnings in the economy. The data contains information on the individual's occupation and thus can be used to calculate the earnings distribution within every occupation.\footnote{In order to ensure consistent definitions of occupations across the different years, we follow \citet{Autor:2013} by distinguishing 323 different non-farm occupations.}


Figure \ref{fig:data} shows earnings dispersion within occupations in 1980 and 2005. On the horizontal axis, we rank occupations by average earnings in each occupation. For every occupation, we calculate the variance of earnings within that occupation, where the circle size indicates the share of total employment within the occupation. The salient data patterns are captured by the solid lines. First, earnings dispersion within occupations is sizable, with average earnings variation within occupations equal to 0.18 in 1980 and equal to 0.39 in 2005. Second, earnings dispersion within occupations is relatively constant at the bottom two-thirds of occupations but increases for the top third of occupations.


The total variation in earnings can be decomposed into two terms. The first term is the variance of mean earnings across occupations, or the between-occupation variance. The second term is the average of within-occupation dispersion of worker earnings weighted by employment.  The total variation in earnings in 1980 equals 0.22. Of this total variation, 0.18 is accounted for by earnings dispersion within occupations, while 0.04 is accounted for by wage dispersion across occupations. From 1980 to 2005 the variation in earnings increased by 0.31 to 0.53. A third of this increase is attributed to increased variation between occupations, while two-thirds is attributed to increased variation within occupations. Over the same time period, the rank correlation between workers and jobs increased from 0.42 to 0.53.



%
%
%
%
%
%
%
%

\vspace{0.4 cm}
\noindent We evaluate the ability of our model to generate dispersion in earnings, its decomposition between and within occupations, as well as the rank correlation between workers and occupations. We then use the model to decompose increased earnings dispersion in the United States between 1980 and 2005 into supply and demand side factors. 


We parameterize the economy separately for 1980 and 2005. The worker distribution $F$ is a discretized lognormal distribution, $\log x \sim \mathcal{N}(\mu_x,\sigma^2_x)$. We set average worker skills to one in levels, or $\mu_x = -\sigma_x^2/2$, and set $\alpha(x) = x$. The distribution of jobs $G$ is a discretization of a mixture of two lognormal distributions with mean $\mu_{i}$ and variance $\sigma^2_{i}$ for each distribution $i \in \{ 1, 2\}$. The mixing weight on the first distribution is $p$.\footnote{The mismatch penalties exclusively depend on the distributions of worker skills and job complexity, not on the innate productive value of workers $\alpha$ and jobs $\theta$ (see Section \ref{s:wagesandfirmvalues}). In order to study earnings and earnings dispersion in occupations we thus do not need to specify $\theta$.}

We use the cross-sectional earnings distribution and earnings dispersion within occupations to inform the worker and job distributions. Specifically, we choose model parameters to minimize the squared loss between the following model and data statistics: earnings at each percentile (\Cref{f:model_earnings}), earnings dispersion within occupation at each occupation rank (\Cref{f:model_assignment}), and the rank correlation between workers and jobs. The model parameters are summarized in \Cref{t:modelparam}.\footnote{The parameters imply that there are more low and high-complexity jobs and fewer jobs with medium complexity in 2005. In line with the labor market polarization literature, we thus find that the distribution of jobs has become more polarized \citep{Acemoglu:1999,Autor:2013}.} In order to generate increased dispersion in earnings, the model variance of worker skill $\sigma^2_x$ increases by 0.16 from 1980 to 2005. 


\begin{figure}[!t]
\begin{center}
\begin{subfigure}{0.47\textwidth}
\includegraphics[width=\textwidth,height=0.25\textheight]{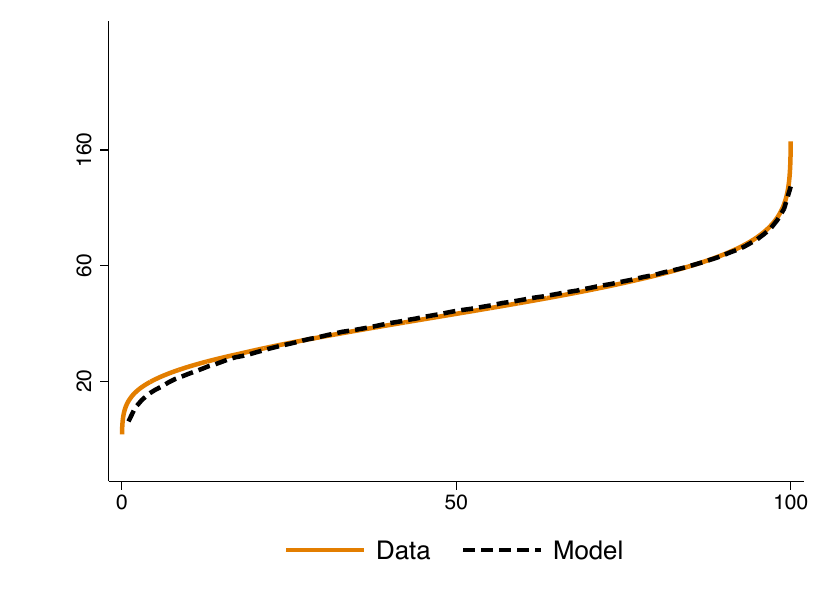} \caption{1980}
\end{subfigure}
\begin{subfigure}{0.47\textwidth}
\includegraphics[width=\textwidth,height=0.25\textheight]{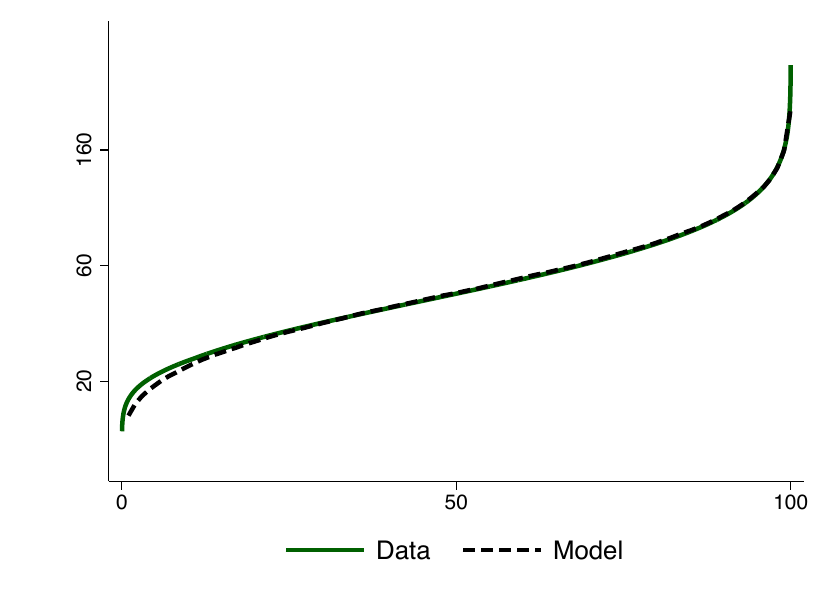} \caption{2005}
\end{subfigure}
\par\end{center}

\vspace{-0.75 cm}

\caption{Data and Model Distributions of Individual Earnings}
 {\scriptsize \vspace{0.2 cm} \Cref{f:model_earnings} compares the empirical earnings distribution to the model earnings distribution. The empirical distributions are represented by solid colored lines, while the model distributions are presented by black dashed lines. The left panel shows the empirical and model distribution for 1980; the right panel for 2005.\label{f:model_earnings} } 

\end{figure}

\begin{figure}[!t]
\begin{center}
\begin{subfigure}{0.47\textwidth}
\includegraphics[width=\textwidth,height=0.25\textheight]{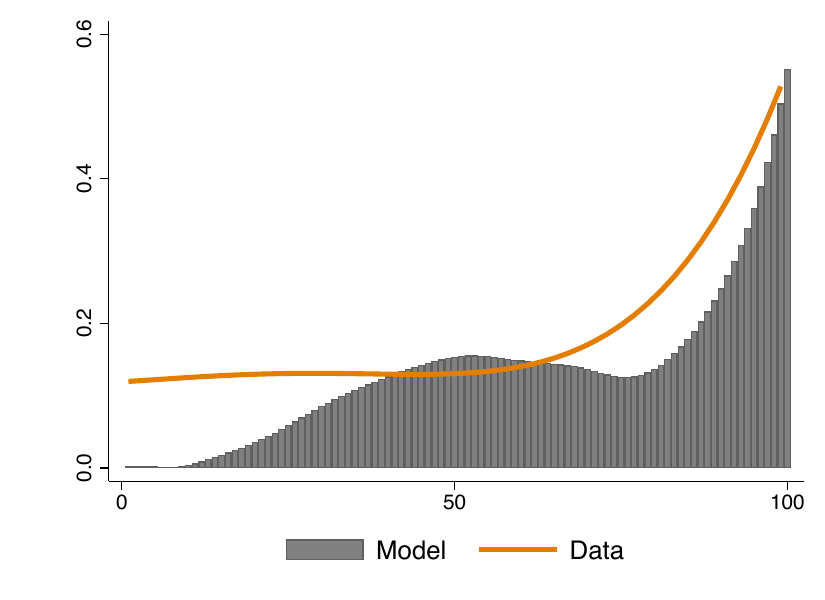} \caption{1980}
\end{subfigure}
\begin{subfigure}{0.47\textwidth}
\includegraphics[width=\textwidth,height=0.25\textheight]{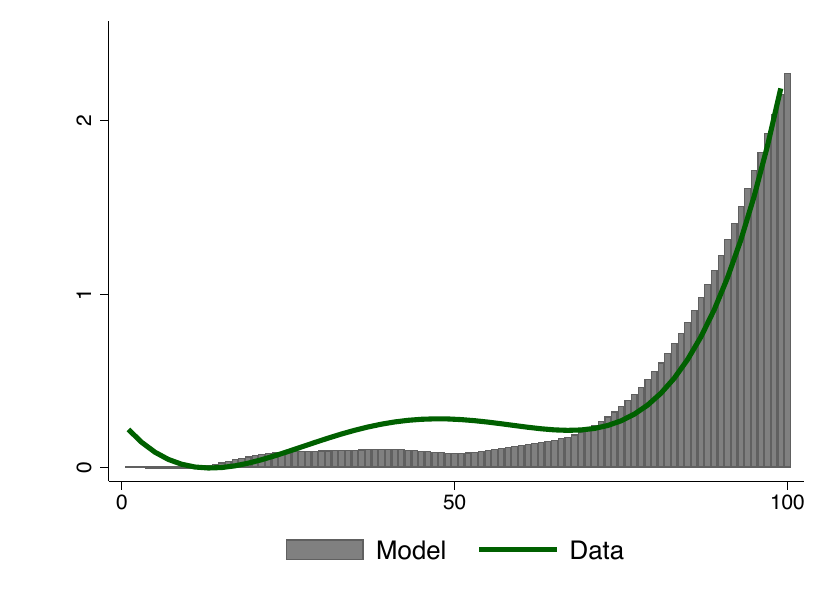} \caption{2005}
\end{subfigure}
\par\end{center}

\vspace{-0.75 cm}

\caption{Data and Model Earnings Dispersion Within Occupation}
{\scriptsize{}{}\vspace{0.2cm}
 \Cref{f:model_assignment} displays earnings dispersion within occupations for the model and for the data. On the horizontal axis, we rank occupations by the average earnings earned in each occupation. The solid lines are a fractional polynomial fit that captures the salient data patterns (\Cref{fig:data}). The bars represent the variation in earnings across model occupations.\label{f:model_assignment} }

\end{figure}


Figure \ref{f:model_earnings} shows that our model generates the observed dispersion in earnings, as well as the changes in the distribution of earnings over time. \Cref{f:model_assignment} shows earnings dispersion within occupations for the model and for the data. The model accounts for three-quarters of the absolute deviation in the middle of the distribution (20th to 80th percentile) and generates the significant increase in within-occupation earnings dispersion at the top of the distribution (80th to 100th percentile). 


\begin{table}[!t]
\def\arraystretch{1.35}
\begin{center}
\caption{Model and Data Earnings Decomposition} \label{t:decomposition}
\begin{tabular}{l|ccc|ccc}
\hline  \hline
\hspace{2.5 cm} & \multicolumn{3}{c|}{Data}     & \multicolumn{3}{c}{Model}     \\
Moment  \hspace{0.32 cm} & \hspace{0.32 cm} 1980 \hspace{0.32 cm}  & \hspace{0.32 cm} 2005 \hspace{0.32 cm} & \hspace{0.32 cm} change \hspace{0.32 cm} & \hspace{0.32 cm} 1980 \hspace{0.32 cm} & \hspace{0.32 cm} 2005 \hspace{0.32 cm} & \hspace{0.32 cm} change \hspace{0.32 cm} \\ \hline
Total & 0.22     & 0.53 & 0.31 & 0.22     & 0.53 & 0.31 \\
Between & 0.04     & 0.14 & 0.10 & 0.09     & 0.17 & 0.08 \\
Within & 0.18     & 0.39 & 0.21 & 0.13     & 0.36 & 0.23 \\ \hline
Rank correlation \hspace{0.17 cm} & 0.42 & 0.53 & 0.11 & 0.45 & 0.49 & 0.04 \\
 \hline \hline
\end{tabular}
\end{center}

\vspace{-0.3cm}

{\scriptsize \vspace{.1 cm} \Cref{t:decomposition} compares the empirical and model decomposition of earnings dispersion. The left panel shows the empirical decomposition of earnings variation in dispersion between occupations and dispersion within occupations, while the right panel shows the model analog.}
\end{table}

\Cref{t:decomposition} displays the empirical and structural decomposition of earnings variation in 1980 and 2005. Our stylized model can capture the salient patterns in the decomposition of earnings variation and its changes over time. The model generates an increase in within-occupation earnings variation of 0.23 points to 0.36 in 2005, which captures the observed increase in within-occupation variation of 0.21 points to 0.39 in the data. The model also generates an increase in the rank correlation, from 0.45 to 0.49.


The key feature of our model is its ability to generate earnings dispersion within the same occupation. The theoretical analysis above shows that there is dispersion of skills in the same occupation and thus there is corresponding dispersion of earnings within occupations. In contrast, a classic sorting model that delivers either positive or negative sorting cannot generate variation in skill levels within a particular occupation, and as a consequence, does not generate earnings dispersion within that occupation. Any model that pairs only one worker type to each job does not generate any earnings dispersion within occupations.

\begin{table}[!t]
\def\arraystretch{1.35}
\begin{center}
\caption{Earnings Relative to Peers} \label{t:wages_peers}
\begin{tabular}{c|cc|cc}
\hline  \hline
\hspace{1.5 cm} & \multicolumn{2}{c|}{Data}     & \multicolumn{2}{c}{Model}     \\
 \hspace{0.68 cm}  Percentile  \hspace{0.68 cm} & \hspace{0.68 cm} 1980 \hspace{0.68 cm}  & \hspace{0.68 cm} 2005 \hspace{0.68 cm}  & \hspace{0.68 cm} 1980 \hspace{0.68 cm} & \hspace{0.68 cm} 2005 \hspace{0.68 cm} \\ \hline
25 & 0.74     & 0.69   & 0.83    & 0.59 \\
50 & 0.93     & 0.86   & 0.84   & 0.84  \\
75 & 1.15     & 1.09   & 1.03    & 1.03 \\ 
90 & 1.38     & 1.33   & 1.24    & 1.43 \\ 
 \hline \hline
\end{tabular}
\end{center}

\vspace{-0.3cm}

{\scriptsize \vspace{.1 cm} \Cref{t:wages_peers} compares the model and data in terms of mean coworker earnings. The left panel shows mean coworker earnings at selected percentiles of the individual earnings distribution. The right panel shows their model analog.}
\end{table}

In addition to analyzing the variance decomposition of earnings, we evaluate the non-targeted earnings of workers relative to the earnings of workers in the same occupation at different percentiles in the earnings distribution. The relative earnings of a worker compared to their peers equals one when sorting is assortative, or, more generally, when sorting is one-to-one. \Cref{t:wages_peers} displays the earnings of workers relative to their peers in the model and the data. The workers at the median of the distribution earn 7 percent less than their peers in 1980 and 14 percent less in 2005. In the model, this workers earns 16 percent less than their peers in both 1980 and 2005. Model and data align qualitatively in both years.\footnote{We provide detailed intuition for the quantitative results by analyzing the equilibrium for 1980 in Technical \Cref{s:intquant}.}



\vspace{0.4 cm}
\noindent The changes in observed earnings patterns between 1980 and 2005 can be driven by changes in the supply or the demand side of the labor market, or by their combination. We use our framework to evaluate the drivers of changing earnings patterns. In order to do this, we analyze counterfactual changes in earnings by only changing the distribution of workers and by only changing the distribution of jobs.

\begin{table}[!t]
\def\arraystretch{1.35}%
\begin{center}
\caption{Counterfactual Decomposition of Earnings Variation} \label{t:decomposition_structural}
\begin{tabular}{l|ccc|cc|cc}
\hline  \hline
\hspace{1.5 cm} & \multicolumn{3}{c|}{Model}     & \multicolumn{2}{c|}{Job Effect} & \multicolumn{2}{c}{Worker Effect} \\
Moment  \hspace{0.22 cm} & \hspace{0.22 cm} 1980 \hspace{0.22 cm}  & \hspace{0.22 cm} 2005 \hspace{0.22 cm} & \hspace{0.22 cm} change \hspace{0.22 cm} & \hspace{0.22 cm} 2005 \hspace{0.22 cm} & \hspace{0.22 cm} change \hspace{0.22 cm} & \hspace{0.22 cm} 2005 \hspace{0.22 cm} & \hspace{0.22 cm} change \hspace{0.22 cm} \\ \hline
Between & 0.41     & 0.32 & $-0.09$  & 0.45 &  \phantom{$-$}0.04  & 0.25 & $-0.16$  \\
Within & 0.59    & 0.68 &  \phantom{$-$}0.09      & 0.55 & $-0.04$  & 0.75 & \phantom{$-$}0.16  \\  \hline
Correlation \hspace{0.07 cm} & 0.45 & 0.49 & \phantom{$-$}0.04 & 0.53 & \phantom{$-$}0.08 & 0.40 & $-0.05$  \\
 \hline \hline
\end{tabular}
\end{center}

\vspace{-0.3cm}

{\scriptsize \vspace{.1 cm} \Cref{t:decomposition_structural} compares the baseline and counterfactual model decomposition of earnings dispersion. The left panel shows the baseline model decomposition of earnings variation as in \Cref{t:decomposition}, while the middle and right panels show counterfactual decompositions. For the job effect counterfactual, we evaluate the model with the worker distribution for 1980 and the job distribution for 2005, while the worker effect counterfactual evaluates the model using the job distribution for 1980 and the worker distribution for 2005.}
\end{table}

\Cref{t:decomposition_structural} shows a decomposition of changes in earnings dispersion from 1980 to 2005. The left panel repeats the baseline model decomposition of \Cref{t:decomposition}, while the middle panel and the right panel present counterfactual results. The job effect counterfactual evaluates the model with the distribution of workers in 1980 and the distribution of jobs in 2005. The middle panel shows that by only changing the distribution of jobs, the share of within-occupation earnings dispersion would have decreased by 4 percentage points, while the rank correlation would have increased by 0.08. The worker effect counterfactual similarly evaluates the model using the job distribution of 1980 and the worker distribution of 2005. The right panel shows that the share of the within-occupation earnings dispersion would have increased by 16 percentage points, while the rank correlation would have decreased by 0.05. Our counterfactual analysis thus shows that both the changes in the worker and job distributions between 1980 and 2005 are important in generating the observed changes in the composition of earnings dispersion.

\section{Conclusion}

We characterize the optimal assignment, wages and comparative statics for 
an assignment problem with heterogeneous workers and
jobs and concave costs of skill gaps, a technology that is neither supermodular
nor submodular. We show that concavity generally arises when firms
make investments to mitigate the cost of skill gaps as in \citet{Stigler:1939} and \citet{Laffont:1986,Laffont:1991}. Our analysis introduces
composite sorting where multiple worker types are sorted to the same occupation
and worker types are simultaneously part of both positive and negative
sorting. Our first main result is to show that composite sorting has a particularly tractable structure when the distributions of workers and jobs follow normal mixture distributions. This empirically relevant case provides a significant reduction in complexity that facilitates our quantitative analysis with a large number of skill types. Our second set of results is on constructing the dual solution of the problem to determine wages and firm values. We show that wages have a striking regional hierarchical structure with relative earnings determined within skill groups and aggregated to determine earnings at different scales. Third, leveraging the characterization of optimal sorting, we derive comparative statics for the optimal assignment and show that sorting is more positive when the cost of
skill gaps is less concave. Moreover, we prove the existence of a threshold level of concavity beyond which sorting is positive in each layer $-$ the layered positive sorting. Our quantitative model can generate and help explain earnings dispersion within occupations, as well its changes over time. In sum, composite sorting provides a tractable assignment framework in between the polar cases of supermodularity and submodularity and delivers strikingly different results from the canonical assortative sorting models. Our results on the assignment problem, the dual solution, and comparative statics are new to both the economics and optimal transport literature.

\clearpage

%
%


%
%
%
%
  
{ {
\bibliographystyle{econometrica}	
\baselineskip15.0pt
\bibliography{bib}
} }

\clearpage

\pagebreak

\renewcommand{\theequation}{A.\arabic{equation}} \setcounter{equation}{0}
\renewcommand{\thefigure}{A.\arabic{figure}}\setcounter{figure}{0}
\renewcommand{\thetable}{A.\arabic{table}}\setcounter{table}{0}
\setcounter{page}{1}
\newpage \appendix
\newpage

\begin{center}
{\Large Composite Sorting \\}
\bigskip
{\Large Online Appendix \\}
\bigskip
{\large  Job Boerma, Aleh Tsyvinski, Ruodu Wang, and Zhenyuan Zhang \\}
\bigskip
{\large May 2025}
\end{center}
\vspace{0.8cm}


\section{Proofs}

In this appendix, we formally prove the results in the main text.

\subsection{Necessary Conditions for Optimal Sorting} \label{a:necconos}

In this appendix, we provide the proofs of the results in Section \ref{s:necessarycond}.

\subsubsection{Maximal Perfect Pairs} \label{p:common}

In this appendix, we formally state and prove the result on maximal perfect pairs.

\begin{lemma}{\textit{Maximal Perfect Pairs}.}\label{l:common}
Let $F\wedge G$ denote the common component of the worker distribution
$F$ and the job distribution $G$. Any optimal assignment $\pi$
between workers and jobs consists of perfect pairings on the support of $F\wedge G$
and an optimal assignment between workers $F-F\wedge G$ and jobs
$G-F\wedge G$. \end{lemma}

\begin{proof}We show that a perfect pair is made when feasible. By contradiction, suppose an optimal assignment contains pairings $(x,z)$ and $(x',z')$ when $x'=z$.\footnote{If either $x = x'$ or $z = z'$, a perfect pairing is naturally made since $x' = z$. We thus restrict our attention to the cases where $x \neq x'$ and $z \neq z'$.} 


By symmetry, it suffices to consider two cases. Consider first the case $x\leq z'<z=x'$. Since the cost of skill gaps $\bar{c}$ is strictly increasing, $c(x,z') + c(x',z) = c(x,z') < c(x,z) \leq c(x,z)+c(x',z')$ where the equality follows since $c(x',z)=0$. Thus, the cost of mismatch when making the perfect pairing is strictly lower than under the optimal configuration, which is a contradiction. 

Second, we consider the case where $x<x'=z<z'$. In this case, the cost of mismatch $\bar{c}$ is given by $c(x,z')+c(x',z) = \bar{c}(z'-x)$ since $c(x',z)=0$. To arrive at a contradiction, choose some weight $\lambda \in (0,1)$ to scale the maximum distance such that $z-x = (1-\lambda)(z' - x)$. Since the total distance is given by $(z - x) + (z' - x') = (z' - x)$, we also have $z'-x' = \lambda (z' - x)$. Since the cost of skill gaps is strictly concave, we use strict concavity and add the two previous equations to obtain $\bar{c}(z-x) + \bar{c}(z'-x') > \bar{c}(z'-x)$.\footnote{We introduce the notation $\bar{c}$ to denote the cost of mismatch  \eqref{eq:cxz} when the worker is underqualified, or $z> x$. Similarly, we use the notation $\underline{c}$ to denote the cost of mismatch when the worker is overqualified, or $z<x$.} The output loss can be strictly reduced by assigning worker $x$ to job $z'$ and by perfectly assigning worker $x'$ to job $z$, which is a contradiction.
\end{proof}

\subsubsection{No Intersecting Pairs} \label{p:nocross}

In this appendix, we formally state and prove the result on no intersecting pairs.

 \begin{lemma}{\textit{No Intersecting Pairs}}. \label{l:nocross}
Let $\pi$ be an optimal assignment. For any two pairs $(x,z)$ and
$(x',z')$ in the support $\Gamma_{\pi}$, their arcs do not intersect.
\end{lemma}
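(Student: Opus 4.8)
The plan is to argue by contradiction from the bilateral‑exchange (cyclical monotonicity) condition. Suppose the arcs of two pairs $(x,z),(x',z')\in\Gamma_\pi$ intersect. Since $g$ and $h$ enter $y$ additively and cancel under a bilateral swap, the optimality condition \eqref{e:stability} applied to these two pairs is equivalent to the inequality written in terms of the mismatch cost \eqref{eq:cxz}, namely $c(x,z)+c(x',z')\le c(x,z')+c(x',z)$. It is therefore enough to show that intersecting arcs force the \emph{reverse strict} inequality $c(x,z)+c(x',z')> c(x,z')+c(x',z)$, which contradicts optimality and proves the lemma.

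The first step is to reduce to four distinct points. By the definition of ``intersect'' given just before the lemma, two arcs that share an endpoint do not intersect; hence if two of $x,z,x',z'$ coincide the claim is vacuous (this also disposes of perfect pairs, whose arc is a single point). So assume $x,z,x',z'$ are distinct and let $p_1<p_2<p_3<p_4$ be their increasing rearrangement. Among the three ways to split four points into two pairs, only the split $\{p_1,p_3\},\{p_2,p_4\}$ produces intersecting arcs (the others give disjoint, resp.\ nested, arcs), so the two pairs must be $\{p_1,p_3\}$ and $\{p_2,p_4\}$. Since each pair consists of a worker and a job, the two workers cannot both lie in $\{p_1,p_3\}$ nor both in $\{p_2,p_4\}$; hence the set of worker positions is one of $\{p_1,p_2\}$, $\{p_3,p_4\}$, $\{p_1,p_4\}$, $\{p_2,p_3\}$.

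I would then verify $c(x,z)+c(x',z')> c(x,z')+c(x',z)$ in each of these four cases, reading off from \eqref{eq:cxz} which partner is under- and which is over-qualified, both before and after the job swap. When the workers sit on the same side of the job pair (worker positions $\{p_1,p_2\}$ or $\{p_3,p_4\}$), both pairs remain of the same type and the swap replaces the distance multiset $\{p_3-p_1,\,p_4-p_2\}$ by $\{p_4-p_1,\,p_3-p_2\}$, which has the same sum but is strictly more spread out (indeed $p_4-p_1$ strictly exceeds both $p_3-p_1$ and $p_4-p_2$); since $t\mapsto t^{\zeta_p}$ and $t\mapsto t^{\zeta_\kappa}$ are strictly concave ($\zeta_p,\zeta_\kappa\in(0,1)$) and $\rho_p,\rho_\kappa>0$, this strictly lowers the cost, which is exactly the ``small-plus-large beats two medium'' mechanism already invoked for Lemma \ref{l:common}. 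When the workers straddle the job pair (worker positions $\{p_1,p_4\}$ or $\{p_2,p_3\}$), the swap keeps one underqualified and one overqualified pair but strictly shrinks \emph{both} mismatch distances (e.g.\ from $p_3-p_1$ to $p_2-p_1$ and from $p_4-p_2$ to $p_4-p_3$), so strict monotonicity of $t\mapsto t^{\zeta_p}$ and $t\mapsto t^{\zeta_\kappa}$ alone yields the strict inequality.

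I expect the only genuine work to be the bookkeeping in this last step: the cost \eqref{eq:cxz} is asymmetric between under- and over-qualification ($\rho_p,\zeta_p$ versus $\rho_\kappa,\zeta_\kappa$), so there is no reflection symmetry to collapse the four cases and each must be checked separately — though, as noted, two of them need nothing beyond monotonicity. The one other point to state carefully is the reduction to four distinct positions (shared endpoints, and perfect pairs with degenerate arcs), so that ``intersect'' is used precisely as defined in the text preceding the lemma.
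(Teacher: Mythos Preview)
Your argument is correct and is essentially the paper's own proof: the paper likewise reduces to the interlocking configuration and distinguishes the ``alternating-type'' case (handled by strict monotonicity of $\bar c$ and $\underline c$) from the ``same-type'' case (handled by strict concavity). The only cosmetic difference is that the paper invokes a relabeling symmetry to cut the explicit casework to two configurations, whereas you (rightly noting the $\rho_p,\zeta_p$ vs.\ $\rho_\kappa,\zeta_\kappa$ asymmetry) write out all four; the underlying mechanisms and the handling of degenerate/shared endpoints are the same.
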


\begin{proof} To establish the result, we show that if two pairings $(x,z)$ and $(x',z')$ under an optimal assignment intersect, then the support of the assignment is not optimal.

By symmetry, it suffices to consider two cases. First, consider the case $x < z'< z <x'$. Since the cost function is increasing, $c(x,z') + c(x',z) = \bar{c}(z'-x) + \underline{c}(x' - z)< \bar{c}(z-x) + \underline{c}(x'-z') = c(x,z) + c(x',z')$. The output loss due to skill gaps is strictly reduced by assigning worker $x$ to job $z'$ and worker $x'$ to job $z$, which is a contradiction.



Second, consider the case $x < x' < z < z'$. In this case, the cost of skill gaps is $c(x,z')+c(x',z) = \bar{c}(z'-x) + \bar{c}(z - x')$. To arrive at a contradiction, choose some weight $\lambda \in (0,1)$ to average the minimum and maximum distance such that:
\begin{equation*}
z-x = (1-\lambda)(z' - x) + \lambda (z-x').
\end{equation*}
Since $(z - x) + (z' - x') = (z' - x) + (z - x')$, we moreover write: 
\begin{equation*}
z'-x' = \lambda (z' - x) + (1-\lambda) (z-x').
\end{equation*}
Since the cost of skill gaps is strictly concave, we can use strict concavity and add the two previous equations to obtain $\bar{c}(z-x) + \bar{c}(z'-x') > \bar{c}(z'-x) + \bar{c}(z-x')$. The output loss due to skill gaps can be strictly reduced by assigning worker $x$ to job $z'$ and worker $x'$ to job $z$, which is a contradiction.\end{proof}

\subsection{Proof of \Cref{t:numberpair}} \label{p:nm}

To prove \Cref{t:numberpair} we prove a general result on the number of sign changes for a linear combination of normal densities in \Cref{t:sign}. We start this section by formally defining sign changes. We then state \Cref{t:sign} and use it to prove \Cref{t:numberpair}. Finally, we prove \Cref{t:sign}.

We first define sign changes. For a given sequence $(x_1,\dots,x_k)$, we define $S^-(x_1,\dots,x_k)$ as the number of sign changes of the sequence, that is, the number of $j=1,\dots,k-1$ such that $x_jx_{j+1}<0$. For a function $f$ on an interval $I$, the number of sign changes is defined as $S^-(f)=\sup \hspace{0.01 cm} S^-(f(t_1),\dots,f(t_k))$, where the supremum is over all sets of $t_1 < t_2 < \dots < t_k$ such that $t_i \in I$ and $k$ is finite. Using this definition, we state \Cref{t:sign}.

\begin{theorem}\label{t:sign}
    Let $n\in\N$, $c_i \in\R$, $u_i \in \R$ and $s_i \in \R_+$. Define
    \begin{align}
        \phi(x)=\sum_{i=1}^n \frac{c_i}{\sqrt{\pi s_i}}e^{-\frac{(x-u_i)^2}{s_i}}.\label{eq:phiform}
    \end{align}
    Then $S^-(\phi)\leq 2(n-1)$.
\end{theorem}

\begin{proof}[Proof of Theorem \ref{t:numberpair}]
Let the normal mixture density functions of the worker distribution and the job distributions be given. The difference between the two density functions $h = f - g$ is of the form (\ref{eq:phiform}). By Theorem \ref{t:sign}, $S^-(h)\leq 2(n+m-1)$. It follows that any layer $\bar{h}$ in the measure of underqualification $H$ has at most $2(n+m-1)+1$ points. Since the number of points on every layer is even, it follows that there are at most $n+m-1$ pairs on every layer.
\end{proof}

\noindent It remains to prove \Cref{t:sign}. In order to do so, let $\phi_s(x)=\frac{1}{\sqrt{\pi s}} \exp \big( -\frac{x^2}{s} \big)$ for $s > 0$ and, in addition, let $\mathcal{S}$ be the Schwartz space, the space of all functions whose derivatives are rapidly decreasing:
\begin{equation*}
\mathcal{S}= \big\{f\in C^\infty(\R):\sup_{x} \big|x^nf^{(m)}(x) \big|<\infty,\,\forall \; n,m\in\{0,1,2,\dots\}\big\}
\end{equation*}
and denote its dual by $\mathcal{S}'$, which is the set of all tempered distributions and includes, for instance, $\mathcal{S}$ and Dirac delta functions. For densities $f\in\mathcal{S}$ and $g\in\mathcal{S}'$, denote by $f\ast g$ the convolution of $f$ and $g$. It follows that $\phi_s\ast\phi_{s'}=\phi_{s+s'}$. To prove \Cref{t:sign}, we use \Cref{lemma:total positivity2}. In turn, we use \Cref{lemma:total positivity} from \citet{Karlin:1968} to prove \Cref{lemma:total positivity2}.\footnote{See Theorem 3.1 in \citet{Karlin:1968}.}


\begin{lemma} \label{lemma:total positivity}
    It holds for all $s> 0$ and continuous real functions $\psi$ that $S^-(\psi \ast \phi_s) \leq S^-(\psi)$.
\end{lemma}

\begin{lemma}\label{lemma:total positivity2}
    Let $\tilde{\psi} \in \mathcal{S}'$ be the sum of a Schwartz function $\psi\in\mathcal{S}$ and a linear combination of finitely many Dirac delta functions. Then for all $s>0$, $S^-(\tilde{\psi} \ast \phi_s)\leq S^-(\tilde{\psi})$.
\end{lemma}

\begin{proof}
Without loss of generality, we assume $\tilde{\psi}=\delta_0+\psi$ where $\psi\in\mathcal{S}$ is a Schwartz function and $\delta_0\in\mathcal{S}'$ is the Dirac delta function at $0$. The general case follows the same proof with heavier notation. 
   For a continuous real function $\psi$ with $S^-(\psi)<\infty$, there exists $\delta>0$ such that for all continuous real functions $\hat{\psi}$ satisfying $\|\hat{\psi}-\psi\|_\infty<\delta$, we have $S^-(\hat{\psi})\geq S^-(\psi)$.  

   
   
Suppose for contradiction that for some $s>0$ we instead have $S^-(\tilde{\psi} \ast \phi_s)> S^-(\tilde{\psi})$. For $\varepsilon>0$, define the piecewise linear wedge function $\psi_\varepsilon$:
\begin{align*}
    \psi_\varepsilon(x)=\begin{cases}
        \frac{1}{\varepsilon}-\frac{x}{\varepsilon^2} & \hspace{1 cm} \text{ if } x \in [0,\varepsilon]\\
        \frac{1}{\varepsilon}+\frac{x}{\varepsilon^2} &\hspace{1 cm} \text{ if } x \in [-\varepsilon,0]\\
        0& \hspace{1 cm} \text{ elsewhere}.
    \end{cases}
\end{align*}  
Since $\phi_s$ is uniformly continuous, $\psi_\varepsilon \ast \phi_s\to\phi_s$ in $L^\infty$ as $\varepsilon\to 0$.\footnote{See Theorem 1.2.19 in \citet{Grafakos:2014}.} Therefore, for $\varepsilon$ small enough, it holds that $S^-(\tilde{\psi}\ast \phi_s)\leq S^-((\psi+\psi_\varepsilon)\ast \phi_s)$ by the previous paragraph.
    By Lemma \ref{lemma:total positivity}, $S^-(\tilde{\psi}\ast \phi_s)\leq S^-(\psi+\psi_\varepsilon)$. 
    
    On the other hand, we claim that for $\varepsilon$ small enough it holds $S^-(\psi+\psi_\varepsilon)=S^-(\psi+\delta_0)$. Indeed, since $S^-(\psi)<\infty$, there are three cases:
    \begin{enumerate}[noitemsep]
        \item If $\psi\geq 0$ in a neighborhood $N$ of $0$, $\psi+\psi_\varepsilon$ and $\psi+\delta_0$ both have no sign change in $N$, and have the same number of sign changes outside $N$ if $\delta$ is chosen small enough.
        \item If $\psi\leq 0$ in a neighborhood $N$ of $0$, $\psi+\delta_0$ have two sign changes in $N$. Since $\psi\in\mathcal{S}$ implies that both $|\psi|$ and $|\psi'|$ are uniformly bounded, $\psi+\psi_\varepsilon$ also has two sign changes in $N$ for $\varepsilon$ small enough. They have the same number of sign changes outside the neighborhood $N$ if $\varepsilon$ is small enough.
        \item if $\psi\leq 0$ at $0^-$ and $\psi\geq 0$ at $0^+$ (or vice versa), $\psi+\psi_\varepsilon$ and $\psi+\delta_0$ both have one sign change in a neighborhood $N$ of $0$ for $\varepsilon$ small, and the same number of sign changes  outside $N$ if $\varepsilon$ is chosen small enough.
    \end{enumerate}
    As a result, we can write down the following chain of inequalities
\begin{equation*} 
S^-(\tilde{\psi})<S^-(\tilde{\psi} \ast \phi_s)\leq S^-(\psi+\psi_\delta)=S^-(\psi+\delta_0)=S^-(\tilde{\psi}) ,
\end{equation*}
which gives a contradiction.
\end{proof}

\begin{proof}[Proof of Theorem \ref{t:sign}] We prove the result by induction. The base is $n=1$, when there is a single normal distribution and hence no sign change, or $S^-(\phi)=0$. Consider $n\in\N$ and $\phi$ of the form in equation \eqref{eq:phiform}. Let $t$ be the minimal variance across the normal distributions, or $t=\min\limits_{1\leq i\leq n} s_i$. Without loss of generality, let there be $k \geq 1$ different normal distributions with the lowest variance, indexed such that $\{i:s_i=t\}=\{n-k+1,\dots,n\}$. 

Observe that convolution with a delta function is equivalent to a lateral shift. It follows that $\phi=\phi_t\ast\tilde{\psi}$, where:
\begin{equation}
\tilde{\psi}(x)=\sum_{i=1}^{n-k} \frac{c_i}{\sqrt{\pi (s_i-t)}}e^{-\frac{(x-u_i)^2}{s_i-t}}+\sum_{i=n-k+1}^nc_i\delta_{u_i}=: \tilde{\phi} (x) + \sum_{i=n-k+1}^n c_i\delta_{u_i}\in\mathcal{S}' \label{e:tildepsi}
\end{equation}
Since by definition $\tilde{\phi}$ is a mixture of normal distributions, $\tilde{\phi}\in\mathcal{S}$.
By Lemma \ref{lemma:total positivity2}, it thus follows that: 
\begin{equation}
S^-(\phi) = S^-(\phi_t \ast \tilde{\psi} ) \leq S^-(\tilde{\psi})\leq 2k+S^-(\tilde{\phi}). \label{e:sminphi}
\end{equation}
where the first inequality follows by Lemma \ref{lemma:total positivity2}. The second inequality follows since each delta function adds at most two more sign changes, and there are $k$ delta functions in (\ref{e:tildepsi}). 

We note that $\phi$ was of the form in equation \eqref{eq:phiform} and similarly that $\tilde{\phi}$ is also of the form \eqref{eq:phiform} with $n-k$ remaining normal distributions. 
    Applying the induction hypothesis, we have $S^-(\tilde{\phi})\leq 2(n-k-1)$. Therefore, we conclude $S^-(\phi)\leq 2(n-1)$, as desired. 
\end{proof}

\subsection{Local Hierarchical Algorithm} \label{a:dualproof}

We propose a new algorithm specifically tailored to the model of composite sorting, which has two distinct merits.  First, it is typically more efficient than existing generic algorithms, as shown in Technical Appendix \ref{s:efficiency}. Second, and more importantly, this new algorithm reveals
a hierarchical structure of the dual potential functions, 
highlighting an implication of the absence of intersecting pairs to dual optimizers.

Recall that in our setting, we consider a problem with a finite number of workers with skill levels in 
$X$ and a finite number of jobs with difficulty levels in  $Z$, where $X$ and $Z$ are disjoint sets. We denote by $S=X\cup Z$ the set of all skill levels. Moreover, we recall that we construct the dual solution given an optimal sorting $\pi$.

Our algorithm relies on recursive computations of $\phi$ constrained on smaller subsets of $S$. 
To explain such a recursive procedure, we introduce the notion of subpairs.  
A pair $(x,z)$ is called a subpair of the pair $(x_{0},z_{0})$ if
$(x,z)$ is a non-nested pair inside the interval $[x_{0},z_{0}]$ that is not equal to $(x_0,z_0)$.

We process each pair  $(x_0,z_0)\in \Gamma_\pi$
sequentially to get a local dual optimizer on ${[x_0,z_0]}$, that is, a function
$\phi_{[x_{0},z_{0}]}$ such that for any $(x,z)\in X_{[x_{0},z_{0}]}\times Z_{[x_{0},z_{0}]}$,
$\phi_{[x_{0},z_{0}]}(x)-\phi_{[x_{0},z_{0}]}(z)\leq c(x,z)$ and it
holds with equality when $(x,z)\in\Gamma_\pi$. {We observe that this property is preserved if $\phi_{[x_{0},z_{0}]}$
is shifted by any constant $a\in \mathbb R$.} 


Below is a recursive construction of the wage penalties, where  $\phi$, $\phi_i$, $x_i$ and $z_i$ are local variables that vary across each iteration, and $\phi_{[x,z]}$ for $(x,z)\in \Gamma_\pi$ are global variables that are the output of the algorithm.  
\begin{enumerate}[noitemsep]
    \item Pick any pair $(x_0,z_0)\in \Gamma_\pi$ that has not been processed such that all subpairs of $(x_0,z_0)$ have been processed. Let $(x_1,z_1),\dots,(x_p,z_p)$ be the subpairs of $(x_0,z_0)$  ordered in a way that:
\begin{align*}
    |x_1-x_0|=\min_{i\in\{1,\dots,p\}}|x_i-x_0|\quad \text{and}\quad |z_p-z_0|=\min_{i\in\{1,\dots,p\}}|z_i-z_0|,
\end{align*}
     with potential functions $\phi_i:=\phi_{[x_i,z_i]}$ 
    for $ i=1,\dots, p$.  
    \item If $p=0$,  then let $\phi (z_0)=0$ and $\phi (x_0)=c(x_0,z_0)$.
    \item If $p\ge 1$, then continue with the following sub-steps.
    \begin{enumerate}
\item       If $p>1$,   let $(\beta_2,\dots,\beta_p)\in \R^{p-1}$ 
    be a solution to the inequality system \begin{align} \tag{\ref{eq:ineqs}}  \max(c_{00}-c_{0n}-c_{m0}, -c_{mn}) + c_{nn} \leq \sum_{k=n+1}^m\beta_k\leq \min(c_{0m}+c_{n0}&-c_{00} ,c_{nm}) -c_{mm} 
    \end{align} 
    for all $1\le n<m\le p$,
    where $c_{ij}=c(x_i,z_j)$ for $i,j\in\{1,\dots,p\}$. We show the existence of such $(\beta_2,\dots,\beta_p)$ in Lemma \ref{lemma:dualalgorithm} below.
    \item For $i=1,\dots,p $,
    let \begin{align}
        \phi(s)= \phi_i(s)+ \sum_{k=i+1}^p \beta _k +\phi_p(x_p)-\phi_i(x_i) \tag{\ref{eq:phidef}}
    \end{align}
    for $s\in X_{ [x_i,z_i]}$ or $s\in Z_{[x_i,z_i]}$.  The above sum $\sum\limits_{k=i+1}^p \beta _k$ is $0$ if $i=p$.
    \item Define $\phi(x_0)$ and $\phi(z_0)$ according to: 
    \begin{align}
    \phi(z_0) & =\begin{cases}
        \displaystyle\max_{i\in \{1,\dots,p\} } (\phi(x_i)-c_{i0}))&\text{ if }x_1\neq x_0,~ z_0 = z_p \\
        \displaystyle \left[ \max_{i\in \{1,\dots,p\} } (\phi(x_i)-c_{i0}), \min_{i\in \{1,\dots,p\} }(\phi(z_i)+c_{0i})-c_{00} \right] &\text{ if }x_1\neq x_0,~ z_0\neq z_p\\
        \displaystyle\min_{i\in \{1,\dots,p\} }(\phi(z_i)+c_{0i})-c_{00}&\text{ elsewhere};
    \end{cases} \tag{\ref{e:phiz0}} \\ 
    \phi(x_0) & =\phi(z_0)+c(x_0,z_0), \tag{\ref{e:phix0}} 
\end{align}
where the second case of definition (\ref{e:phiz0}) means that we pick an arbitrary value inside this interval.
\end{enumerate}
\item Let $\phi_{[x_0,z_0]}$ be equal to $\phi$.
    \item Return to step 1 with the next pair to process, or terminate if all pairs have been processed. 
\end{enumerate}

We note from step 3(b) that 
for $s,s'\in X_{[x_i,z_i]}\cup Z_{[x_i,z_i]}$,
we have $\phi(s)-\phi(s')=\phi_i(s)-\phi_i(s')$. This means that after each iteration, the value of $\phi(s)-\phi(s')$ does not change, and therefore it depends only on points in the region $[x_i,z_i]$. 

Our regional hierarchical mechanism allows for a class of dual solutions. For instance, there is freedom in the choice of the solution to the system of inequalities \eqref{eq:ineqs}, as well as the choices of $\phi(z_0)$ in the second case of \eqref{e:phiz0}. We remark that all dual solutions are obtained by choosing specific feasible solutions allowed by our regional hierarchical construction.\footnote{To observe this, first recall that for any dual solution $\phi_0$ and for any primal solution $\pi_0$, $\phi_0(x)-\phi_0(z) \leq c(x,z)$ holds with equality if the assignment $\pi_0$ pairs worker $x$ with job $z$. At each step of the regional mechanism, the only constraints on the choices of the dual solution arise from inequalities $\phi(x)-\phi(z)\leq c(x,z)$ and the equalities $\phi(x)-\phi(z)=c(x,z)$ for $(x,z)\in\Gamma_{\pi_0}$. The values of $\phi_0$ satisfy these constraints at each step, and therefore constitute as a valid dual construction. Note that we specify the relative wages in our algorithm, and hence a global constant shift suffices to recover the function $\phi_0$.} That is, there is no dual solution that falls outside this class. In our numerical analysis with large number of worker and job types, the class of dual solutions is such that the differences between the dual solutions are economically insignificant.  

The order of processing the pairs does not affect the output of the algorithm because each $\phi_{[x_0,z_0]}$ only depends on the local dual optimizers of its subpairs, which are all processed before this pair. A default order is to always choose the unprocessed pair $(x_0,z_0)$ with the smallest $x_0$ satisfying the condition in step 1. On the other hand, the choice of $(\beta_2,\dots,\beta_p)$ does affect the output of the algorithm. As a default,  $(\beta_2,\dots,\beta_p)$
can be chosen as the solution of \eqref{eq:ineqs} that is the smallest in dictionary order.\footnote{Note that such a smallest solution always exists since $(\beta_2,\dots,\beta_p)$ satisfying \eqref{eq:ineqs} lies in a compact region.}
 In this way, we obtain a unique output of the algorithm. Nevertheless, in the next result, we will show that a dual potential $\phi$ is obtained from the algorithm with arbitrary choices of $(\beta_2,\dots,\beta_p)$ satisfying (\ref{eq:ineqs}) in each iteration.

\subsection{ Proof of Theorem \ref{thm:dualalg}} \label{pthm:dualalg}

We prove Theorem \ref{thm:dualalg} in two parts. First, we prove there exists a solution $(\beta_2,\dots,\beta_p)$ to \eqref{eq:ineqs}, in Lemma \ref{lemma:dualalgorithm}. Second, we prove that the function $\phi$ defined in \eqref{eq:phidef}-\eqref{e:phix0} is indeed a local dual optimizer on $S_{[x_0,z_0]}$.


\begin{lemma}\label{lemma:dualalgorithm}
       Suppose $(x_1,z_1),\dots,(x_p,z_p)$ are ordered subpairs of pair $(x_0,z_0)$ in the optimal assignment $\pi$. Then the system of inequalities, where for all $1\leq n<m\leq p$:
    \begin{align}
        \tag{\ref{eq:ineqs}} \max(c_{00}+c_{nn}-c_{0n}-c_{m0},c_{nn}-c_{mn})\leq \sum_{k=n+1}^m\beta_k\leq \min(c_{0m}+c_{n0}&-c_{00}-c_{mm},c_{nm}-c_{mm})
    \end{align}
    admits a solution $(\beta_2,\dots,\beta_p)$.
\end{lemma}

\begin{proof}[Proof of Lemma \ref{lemma:dualalgorithm}] 
We use Farkas' Lemma to prove this existence result. We state Farkas' Lemma for completeness.

\begin{lemma} \label{Lemma:farkas}
    Let $A$ be a $d_1\times d_2$ matrix, $b\in \R^{d_1}$, and let $x=(x_1,\dots,x_{d_2})^\top$ be a set of real-valued variables. Then the system $Ax\geq b$ allows a set of solutions if and only if for any $y\in [0,\infty)^{d_1}$ such that $y^\top A=0$, it holds $y^\top b\leq 0$. 
\end{lemma}

\vspace{0.25 cm}
\noindent We aim to show that equation \eqref{eq:ineqs} admits a solution $(\beta_2,\dots,\beta_p)$. We observe that we can think of \eqref{eq:ineqs} equivalently as the following set of inequalities: 
\begin{align*}
& \sum_{k=n+1}^m\beta_k \geq c_{00}+c_{nn}-c_{0n}-c_{m0} \\
& \sum_{k=n+1}^m\beta_k \geq c_{nn}-c_{mn} \\
- & \sum_{k=n+1}^m\beta_k \geq c_{mm} - c_{nm} \\
- & \sum_{k=n+1}^m\beta_k \geq c_{00}+c_{mm}-c_{0m}-c_{n0}  
\end{align*}
for all $1 \leq n < m  \leq p$. All inequalities implied by \eqref{eq:ineqs} are linear in the variables $(\beta_2,\dots,\beta_p)$. Matrix $A$ is given by columns with values $(-1,0,+1)$, while vector $b$ is governed by the costs $c$. 

By Lemma \ref{Lemma:farkas} it suffices to prove the following.\footnote{Equation (\ref{eq:4termseq}) is the analog of $y^\top A=0$ in the statement of Farkas' Lemma. Specifically, we use $y^\top A=0$ if and only if $y^\top A z=0$ for all $z \in \mathbb{R}^{d_1}$. Applied to our setting, where $\beta$ takes the position of $x$ in Farkas' Lemma, this states that the weighted sum of all left-hand sides in the system of inequalities equals zero. Equation (\ref{eq:4termsineq}) below is similarly the analog of $y^\top b\leq 0$ in the statement of Farkas' Lemma.} For any set of non-negative weights $(\lambda^+_{mn},\lambda^-_{mn},\omega^+_{mn},\omega^-_{mn}),$ ${1\leq n<m\leq p}$ on each of the inequalities above such that 
\begin{align}\label{eq:4termseq}
    \sum_{1\leq n<m\leq p}(\lambda^+_{mn}+\omega^+_{mn})\sum_{k=n+1}^m\beta_k= \sum_{1\leq n<m\leq p}(\lambda^-_{m,n}+\omega^-_{m,n})\sum_{k=n+1}^m\beta_k, \text{ for all }(\beta_2,\dots,\beta_p),
\end{align}
it holds that
\begin{align}
    &\sum_{1\leq n<m\leq p} \Big( \lm_{mn}(c_{00}+c_{nn}-c_{0n}-c_{m0})+\om_{mn}(c_{nn}-c_{mn}) \Big) \nonumber\\
    &\hspace{4 cm}\leq \sum_{1\leq n<m\leq p} \Big( \lp_{mn}(c_{0m}+c_{n0}-c_{00}-c_{mm})+\op_{mn}(c_{nm}-c_{mm}) \Big) .\label{eq:4termsineq}
\end{align}

We start by simplifying equations \eqref{eq:4termseq} and \eqref{eq:4termsineq}. We first simplify equation \eqref{eq:4termseq}. Since \eqref{eq:4termseq} has to hold for all $(\beta_2,\dots,\beta_p)$, we note that the coefficient on each $\beta_k$ has to equal zero. For each $2\leq k\leq p$, equating the coefficients for $\beta_k$ requires
\begin{align}\label{eq:4eqform2}
    \sum_{m,n} \big( \lambda^+_{mn}+\omega^+_{mn} \big) =\sum_{m,n} \big( \lambda^-_{mn}+\omega^-_{mn} \big),
\end{align}
where we sum over all $(m,n)$ satisfying $1\leq n<k\leq m\leq p$, that is, we sum over all equations where $\beta_k$ appears. Furthermore, subtracting equation \eqref{eq:4eqform2} evaluated at $k$ from equation \eqref{eq:4eqform2} evaluated at $k+1$ yields:
\begin{align}\label{eq:4eqform3}
    \sum_{k<m\leq p}(\lp_{mk} +\op_{mk}) - \sum_{k<m\leq p}(\lm_{mk}+\om_{mk}) = \sum_{1\leq n<k}(\lp_{kn}+\op_{kn}) - \sum_{1\leq n<k}(\lm_{kn}+\om_{kn}),
\end{align}
 for all $2 \leq k<p$. 

We next simplify \eqref{eq:4termsineq}. Rearranging \eqref{eq:4termsineq} by collecting terms by coefficients in front of $c_{ij}$ leads to the equivalent form:
\begin{align}
&\sum_{1\leq n<m\leq p}\big( \lambda^{-}_{mn} + \lambda^{+}_{mn} \big) c_{00} + \sum_{1<m\leq p} \big( \lambda^{-}_{m1}+\om_{m1} \big) c_{11} + \sum_{1\leq n<p} \big( \lambda^{+}_{pn}+\op_{pn} \big) c_{pp}\nonumber\\
&\hspace{2cm}+\sum_{k=2}^{p-1} \Big( \sum_{k<m\leq p} \big( \lambda^{-}_{mk}+\om_{mk} \big) + \sum_{1\leq n<k} \big( \lambda^{+}_{kn} + \op_{kn} \big) \Big) c_{kk}\nonumber\\
&\leq \sum_{1<m\leq p}\lambda^{-}_{m1}c_{01}+\sum_{1\leq n<p} \lambda^{+}_{pn}c_{0p}+\sum_{1\leq n<p}\lambda^{-}_{pn}c_{p0}+\sum_{1<m \leq p}\lambda^{+}_{m1} c_{10} \nonumber\\
&\hspace{2cm}+\sum_{k=2}^{p-1} \Big( \sum_{1\leq n<k}\lambda^{+}_{kn}+\sum_{k<m\leq p}\lambda^{-}_{mk}\Big) c_{0k}+ \Big( \sum_{1\leq n<k}\lambda^{-}_{kn} + \sum_{k<m\leq p}\lambda^{+}_{mk}\Big) c_{k0}\nonumber\\
&\hspace{2cm}+\sum_{1\leq n<m\leq p}\om_{mn} c_{mn} +\sum_{1\leq n<m\leq p}\op_{mn}c_{nm},\label{eq:verylong}
\end{align}
where the left-hand side of the inequality collects all ``diagonal'' elements, and the right-hand side collects all other elements.

Our next step in proving that equation \eqref{eq:verylong} indeed holds, is to show that both sides of equation \eqref{eq:verylong} represent transport costs of an assignment between a measure of workers $\tilde{F}$ and a measure of jobs $\tilde{G}$. Specifically, consider the assignment problem between a measure $\tilde{F}$ and a measure $\tilde{G}$, satisfying:
\begin{align}
    \tilde{F}&=\sum_{1\leq n<m\leq p} \big(\lambda^{-}_{mn}+\lambda^{+}_{mn}\big)\delta_{x_0}+\sum_{1<m\leq p}\big(\lambda^{-}_{m1}+\om_{m1}\big)\delta_{x_1}+\sum_{1\leq n<p}\big(\lambda^{+}_{pn}+\op_{pn}\big)\delta_{x_p}\nonumber\\
    &\hspace{2cm}+\sum_{k=2}^{p-1}\Big(\sum_{k<m\leq p}\big(\lambda^{-}_{mk}+\om_{mk} \big)+\sum_{1\leq n<k}\big( \lambda^{+}_{kn}+\op_{kn} \big)\Big)\delta_{x_k},\label{eq:mutilde}
\end{align}
and, similarly,
\begin{align}
    \tilde{G} & = \sum_{1\leq n<m \leq p} \big( \lambda^{-}_{mn} + \lambda^{+}_{mn} \big) \delta_{z_0} + \sum_{1<m\leq p}\big( \lambda^{-}_{m1} + \om_{m1} \big) \delta_{z_1} + \sum_{1\leq n<p} \big( \lambda^{+}_{pn} + \op_{pn} \big) \delta_{z_p} \nonumber\\
    &\hspace{2cm}+ \sum_{k=2}^{p-1} \Big( \sum_{k<m\leq p} \big(\lambda^{-}_{mk}+\om_{mk}\big)+\sum_{1\leq n<k}\big( \lambda^{+}_{kn}+\op_{kn} \big) \Big) \delta_{z_k}.\label{eq:nutilde}
\end{align}
Both measures may not be probability measures, but they do have the same total mass. 

The fact that the left-hand side of \eqref{eq:verylong} represents a transport cost between workers $\tilde{F}$ and jobs $\tilde{G}$ is evident. Under this assignment, each worker type is assigned to an identically indexed job, which has an identical mass by construction of the worker distribution $\tilde{F}$ in \eqref{eq:mutilde} and the job distribution $\tilde{G}$ in \eqref{eq:nutilde}. To establish the same on the right-hand side requires work. Consider first the worker $x$ marginal on the right-hand side of \eqref{eq:verylong}.

\begin{enumerate}[noitemsep]
    \item The mass on $x_0$ is 
    \begin{align*}
        &\sum_{1<m\leq p}\lambda^{-}_{m1}+\sum_{1\leq n<p} \lambda^{+}_{pn}+\sum_{k=2}^{p-1} \Big(\sum_{1\leq n<k}\lambda^{+}_{kn}+\sum_{k<m\leq p}\lambda^{-}_{mk}\Big) =\sum_{1\leq n<m\leq p} \Big( \lambda^{-}_{mn}+\lambda^{+}_{mn} \Big).
    \end{align*}
    \item Using equation \eqref{eq:4eqform2} with $k=2$, the mass on $x_1$ is
    \begin{align*}
        \sum_{1<m\leq p}\lambda^{+}_{m1}+\sum_{1<m\leq p}\op_{m1}=\sum_{1<m\leq p} \big( \lambda^{-}_{m1}+\om_{m1} \big).
    \end{align*}
    \item For $2\leq k<p$, using \eqref{eq:4eqform3} and grouping terms, the mass on $x_k$ is 
    \begin{align*}
        \sum_{1\leq n<k}\lambda^{-}_{kn}+\sum_{k<m\leq p}\lambda^{+}_{mk}+\sum_{1\leq n<k} \om_{k,n}+\sum_{k<m\leq p}\op_{mk}=\sum_{k<m\leq p}(\lambda^{-}_{mk}+\om_{mk})+\sum_{1\leq n<k}(\lambda^{+}_{kn}+\op_{kn}).
    \end{align*}   
    \item Using equation \eqref{eq:4eqform2} with $k=p$, the mass on $x_p$ is 
    \begin{align*}
        \sum_{1\leq n<p} \big( \lambda^{-}_{pn}+\om_{pn} \big)= \sum_{1\leq n<p} \big( \lambda^{+}_{pn} + \op_{pn} \big).
    \end{align*}
\end{enumerate}
Combining these four terms we see that the $x$-marginal of the right-hand side of \eqref{eq:verylong} corresponds with that of \eqref{eq:mutilde}. We proceed to show that the same is true for the distribution of jobs.

\begin{enumerate}[noitemsep]
    \item The mass on $z_0$ is
    \begin{align*}
        \sum_{1\leq n<p}\lm_{pn}+\sum_{1<m\leq p}\lp_{m1}+\sum_{k=2}^{p-1} \Big( \sum_{1\leq n<k}\lambda^{-}_{kn}+\sum_{k<m\leq p}\lambda^{+}_{mk} \Big)=\sum_{1\leq n<m\leq p}\big( \lambda^{-}_{mn}+\lambda^{+}_{mn} \big),
    \end{align*}
    where the equality follows by simple accounting. 
    \item The mass on $z_1$ is
    \begin{align*}
        \sum_{1<m\leq p}\lm_{m1}+\sum_{1<m\leq p}\om_{m1}=\sum_{1<m\leq p} \big( \lm_{m1}+\om_{m1} \big).
    \end{align*}
    \item For $2\leq k<p$, the mass on $z_k$ is
    \begin{align*}
        \sum_{1\leq n<k}\lambda^{+}_{kn}+\sum_{k<m\leq p}\lambda^{-}_{mk}+\sum_{k<m\leq p}\om_{mk}+\sum_{1\leq n<k}\op_{kn}=\sum_{k<m\leq p} \big( \lambda^{-}_{mk}+\om_{mk} \big) +\sum_{1\leq n<k} \big( \lambda^{+}_{kn}+\op_{kn} \big).
    \end{align*}
    \item Finally, the mass on $z_p$ is given by
    \begin{align*}
        \sum_{1\leq n<p}\lp_{pn}+\sum_{1\leq n<p}\op_{pn}=\sum_{1\leq n<p} ( \lambda^{+}_{pn}+\op_{pn} ).
    \end{align*}
\end{enumerate}
We have thus proved the marginal distributions on both sides of the costs \eqref{eq:verylong} are the worker distribution $\tilde{F}$ and the job distribution $\tilde{G}$. 


Why is the left-hand side of equation \eqref{eq:verylong} the optimal transportation cost between the worker distribution $\tilde{F}$ and job distribution $\tilde{G}$? To characterize an optimal assignment between the constructed measures $\tilde{F}$ and $\tilde{G}$, we decompose the corresponding measure of underqualification $\tilde{H} := \tilde{F}-\tilde{G}$ into layers. By the definition of the worker measure $\tilde{F}$ in equation (\ref{eq:mutilde}) and the job measure $\tilde{G}$ in equation (\ref{eq:nutilde}), we know that for each $k$ we have $\tilde{F}(x_k)=\tilde{G}(z_k)$. This means each layer $\ell$ will consist of a subset $S\subseteq\{0,\dots,p\}$ and the distributions within the layer $F_\ell$ and $G_\ell$ will be uniform on $\{x_k\}_{k\in S}$ and $\{z_k\}_{k\in S}$ respectively. From the assumption of the theorem, we recall that the optimal assignment $\pi$ pairs $x_k$ with $z_k$ for every $k$ in the optimal assignment problem with uniform distributions on $\{x_k\}_{0\leq k\leq p}$ and $\{z_k\}_{0\leq k\leq p}$. Since a restriction of an optimal assignment is also optimal on the restricted marginals, we know that an optimal assignment between $F_\ell$ and $G_\ell$ matches $x_k$ to $z_k$ for each $k\in S$. After adding the layers, the same holds for an optimal assignment between $\tilde{F}$ and $\tilde{G}$ by the principle of layering. Therefore, the pairs $\{ (x_k,z_k) \}$ are paired under an optimal assignment between $\tilde{F}$ and $\tilde{G}$. This establishes the inequality \eqref{eq:verylong}, hence we finally conclude \eqref{eq:ineqs} has a solution. \end{proof}

\noindent Next, we continue to prove the second part of the result, that the function $\phi$ defined in \eqref{eq:phidef}-\eqref{e:phix0} is indeed a local dual optimizer on $S_{[x_0,z_0]}$. 

First, we record the following observation for our construction of $\phi$.

\begin{lemma}\label{lemma:concave}
    Suppose $h:[0,\infty)\to\R$ is concave. Then for $0\leq x\leq y$ and $a>0$ we have
    \begin{align*}
        h(x+a)-h(x)\geq h(y+a)-h(y).
    \end{align*}
\end{lemma}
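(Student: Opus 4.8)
The plan is to exploit the standard characterization of concavity via convex combinations of two fixed endpoints. Since $0\le x\le y$ and $a>0$, both of the points $x+a$ and $y$ lie in the interval $[x,\,y+a]$, and in fact they are complementary convex combinations of the endpoints $x$ and $y+a$. Concretely, I would set $t:=\dfrac{a}{\,y-x+a\,}$, which lies in $(0,1]$ because $a>0$ and $y-x\ge 0$, and check the two elementary identities $x+a=(1-t)\,x+t\,(y+a)$ and $y=t\,x+(1-t)\,(y+a)$.

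Next I would apply the definition of concavity of $h$ to each of these convex combinations, obtaining $h(x+a)\ge (1-t)\,h(x)+t\,h(y+a)$ and $h(y)\ge t\,h(x)+(1-t)\,h(y+a)$. Adding these two inequalities, the coefficient of $h(x)$ on the right-hand side is $(1-t)+t=1$ and likewise the coefficient of $h(y+a)$ is $t+(1-t)=1$, so the sum collapses to $h(x+a)+h(y)\ge h(x)+h(y+a)$, which rearranges to exactly the claimed inequality $h(x+a)-h(x)\ge h(y+a)-h(y)$.

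The argument is routine and I do not expect a genuine obstacle; the only points needing (minimal) care are the degenerate case $x=y$, where $t=1$ and the inequality holds trivially (indeed with equality), and confirming $t\in(0,1]$ so that the concavity inequality is legitimately applied to a convex combination of two points of $[0,\infty)$. In essence this is just the familiar monotonicity-of-increments property of concave functions — the increment over a window of fixed width $a$ is non-increasing as the window slides right — and the convex-combination computation above makes it precise in a single step.
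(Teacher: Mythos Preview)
Your proof is correct and essentially identical to the paper's: both write $y$ and $x+a$ as complementary convex combinations of the endpoints $x$ and $y+a$, apply concavity to each, and add. The paper just compresses the two inequalities into a single displayed line with the explicit fractions $\frac{a}{y+a-x}$ and $\frac{y-x}{y+a-x}$ rather than naming the parameter $t$.
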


\begin{proof}
    From concavity, it follows that
    \begin{align*}
        h(y)+h(x+a)\geq \frac{(y-x)h(y+a)+ah(x)}{y+a-x}+\frac{(y-x)h(x)+ah(y+a)}{y+a-x}=h(x)+h(y+a),
    \end{align*}
    completing the proof.
\end{proof}

\noindent We also use the following lemma to prove that \eqref{e:phiz0} and \eqref{e:phix0} define a dual optimizer.
\begin{lemma}\label{lemma:maxmin}
  Suppose that $\phi$ is a dual potential on $I_{(x_0,z_0)}$, where $\{(x_i,z_i)\}_{i\in \{1,\dots,p\}}$ are subpairs of the pair $(x_0,z_0)$.  It holds that
    \begin{align}
        \max_{i\in \{1,\dots,p\} } (\phi(x_i)-c(x_i,z_0))=\max_{x\in I_{[x_0,z_0]}\setminus\{x_0\}} (\phi(x)-c(x,z_0))\label{eq:max}
    \end{align}
    and 
    \begin{align}
        \min_{i\in \{1,\dots,p\} }(\phi(z_i)+c(x_0,z_i))=\min_{z\in J_{[x_0,z_0]}\setminus\{z_0\}}(\phi(z)+c(x_0,z)).\label{eq:min}
    \end{align}
\end{lemma}

\begin{proof}
    To prove \eqref{eq:max}, it suffices to show that for all $i\in \{1,\dots,p\}$ and $x\in I_{(x_i,z_i)}$, 
    \begin{align}
        \phi(x_i)-c(x_i,z_0)\geq \phi(x)-c(x,z_0).\label{eq:toshowmax}
    \end{align}
    Using properties of the dual potential $\phi$, we have
    \begin{align*}
       \phi(x_i)-c(x_i,z_0)-\phi(x)+c(x,z_0)&=\phi(z_i)+c(x_i,z_i)-c(x_i,z_0)-\phi(x)+c(x,z_0)\\
       &\geq c(x_i,z_i)-c(x_i,z_0)-c(x,z_i)+c(x,z_0).
    \end{align*}
    Since both $c(x_i,z_i)+c(x,z_0)$ and $c(x_i,z_0)+c(x,z_i)$ describe the matching costs between $\{x_i,x\}$ and $\{z_0,z_i\}$ where the arcs $(x_i,z_i)$ and $(x,z_0)$ intersect, we must have by Lemma \ref{l:nocross} that
    \begin{align*}
         c(x_i,z_i)-c(x_i,z_0)-c(x,z_i)+c(x,z_0)\geq 0.
    \end{align*}
    This proves \eqref{eq:max}. The other claim \eqref{eq:min} is similar. Again it suffices to prove 
     \begin{align}
        \phi(z_i)+c(x_0,z_i)\leq \phi(z)+c(x_0,z).\label{eq:toshowmin}
    \end{align}
    By  properties of the dual potential $\phi$, we have
    \begin{align*}
        \phi(z_i)+c(x_0,z_i)-\phi(z)-c(x_0,z)&=\phi(x_i)-c(x_i,z_i)+c(x_0,z_i)-\phi(z)-c(x_0,z)\\
        &\leq c(x_i,z)-c(x_i,z_i)+c(x_0,z_i)-c(x_0,z)\leq 0.
    \end{align*}
    This proves \eqref{eq:toshowmin}.
\end{proof}

\begin{proof}[Proof of Theorem \ref{thm:dualalg}] 
    Suppose that $(x_1,z_1),\dots,(x_p,z_p)$ are ordered subpairs of pair $(x_0,z_0)$ in the optimal assignment $\pi$, and that $\phi_i$ are  dual potentials on $S_{[x_i,z_i]}:=I_{[x_i,z_i]}\cup J_{[x_i,z_i]}$ for all $1\leq i\leq p$. We first prove that, with the possibilities of multiple workers on the same skill level and multiple jobs on the same difficulty level, our  $\phi$ in \eqref{eq:phidef}-\eqref{e:phix0} is well-defined.\footnote{Note that the cases where a worker has the same skill level as the difficulty level of a job has been excluded, as a consequence of Lemma \ref{l:common}.} 
    
    Suppose that $x_n=x_{n+1}$ or $z_n=z_{n+1}$ for some $1\leq n<p$. Then any solution $(\beta_2,\dots,\beta_p)$ to the system of inequalities \eqref{eq:ineqs} satisfies
        \begin{equation*}
            \max(c_{00}-c_{0n}-c_{n+1,0},-c_{n+1,n})+c_{nn}\leq\beta_{n+1}\leq \min(c_{0,n+1}+c_{n0}-c_{00},c_{n,n+1})-c_{n+1,n+1},
        \end{equation*}
        where the inequality follows from \eqref{eq:ineqs} with $m=n+1$. As a consequence,
        \begin{equation*}
            c_{nn}-c_{n+1,n}\leq\beta_{n+1}\leq c_{n,n+1}-c_{n+1,n+1},
        \end{equation*} and hence we must have 
        $\beta_{n+1}=c_{nn}-c_{n+1,n+1}$. In particular, the $\phi$ defined in \eqref{eq:phidef} satisfies $\phi(x_n)=\phi(x_{n+1})$ in the case $x_n=x_{n+1}$, and $\phi(z_n)=\phi(z_{n+1})$ in the case $z_n=z_{n+1}$.
                
 We next prove that $\phi$ is a local dual optimizer on the domain $S_{[x_0,z_0]}\setminus\{x_0,z_0\}$. Suppose that $x\in I_{[x_n,z_n]}$ and $z\in J_{[x_m,z_m]}$. The equality $\phi(x)-\phi(z)=c(x,z)$ when $(x,z)$ is a worker-job pair immediately follows because the same condition is satisfied by $\phi_i$ for all $1\leq i\leq p$. Our goal is to prove $\phi(x)-\phi(z)\leq c(x,z)$ when worker $x$ and job $z$ are not paired.

We consider three cases.

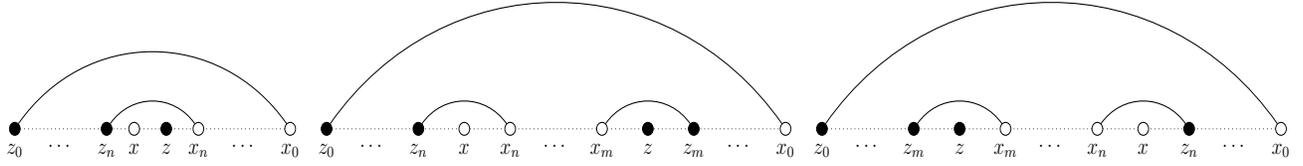
\begin{figure}[!t]
    \begin{center}\resizebox{17.4cm}{2.7cm}{
        \begin{tikzpicture}
\node[circle,fill=black,draw, minimum size=0.1pt,scale=0.6,label=below:{$z_0$}] (A) at  (6,0){};

\node[label=below:{$\dots$}] (D) at  (7,0) {} ;
\node[circle,draw,fill=black, minimum size=0.1pt,scale=0.6,label=below:{$z_{n}$}] (E) at  (8,0){};
\node[circle,draw,minimum size=0.1pt,scale=0.6,label=below:{$x$}] (x) at  (8.6,0){};
\node[circle,draw,fill=black,minimum size=0.1pt,scale=0.6,label=below:{$z$}] (z) at  (9.3,0){};

\node[circle,draw, minimum size=0.1pt,scale=0.6,label=below:{$x_{n}$}] (I) at  (10,0){};
\node[label=below:{$\dots$}] (K2) at  (11,0){};

\node[circle,draw, minimum size=0.1cm,scale=0.6,label=below:{$x_0$}] (L) at  (12,0) {} ;

\draw[dotted] (A) -- (x);
\draw[dotted] (x) -- (I);
\draw[dotted] (L) -- (I);
\path[-,every node/.style={font=\sffamily\small}] (L) edge[bend right=50] node [left] {} (A);

\path[-,every node/.style={font=\sffamily\small}] (I) edge[bend right=50] node [left] {} (E);
        \end{tikzpicture}
        
        \begin{tikzpicture}
\node[circle,fill=black,draw, minimum size=0.1pt,scale=0.6,label=below:{$z_0$}] (A) at  (6,0){};

\node[label=below:{$\dots$}] (D) at  (7,0) {} ;
\node[label=below:{$\dots$}] (d) at  (11,0) {} ;
\node[circle,draw,fill=black, minimum size=0.1pt,scale=0.6,label=below:{$z_{n}$}] (E) at  (8,0){};
\node[circle,draw,minimum size=0.1pt,scale=0.6,label=below:{$x$}] (x) at  (9,0){};

\node[circle,draw, minimum size=0.1cm,scale=0.6,label=below:{$x_m$}] (H) at  (12,0) {} ;\node[circle,fill=black,draw, minimum size=0.1cm,scale=0.6,label=below:{$z$}] (H2) at  (13,0) {} ;
\node[circle,fill=black,draw, minimum size=0.1cm,scale=0.6,label=below:{$z_m$}] (H2) at  (14,0) {} ;
\node[circle,draw, minimum size=0.1pt,scale=0.6,label=below:{$x_{n}$}] (I) at  (10,0){};
\node[label=below:{$\dots$}] (K2) at  (15,0){};

\node[circle,draw, minimum size=0.1cm,scale=0.6,label=below:{$x_0$}] (L) at  (16,0) {} ;

\draw[dotted] (A) -- (x);
\draw[dotted] (x) -- (I);
\draw[dotted] (H) -- (I);
\draw[dotted] (H) -- (L);
\path[-,every node/.style={font=\sffamily\small}] (L) edge[bend right=50] node [left] {} (A);

\path[-,every node/.style={font=\sffamily\small}] (I) edge[bend right=50] node [left] {} (E);
\path[-,every node/.style={font=\sffamily\small}] (H2) edge[bend right=50] node [left] {} (H);
        \end{tikzpicture}

        \begin{tikzpicture}
\node[circle,fill=black,draw, minimum size=0.1pt,scale=0.6,label=below:{$z_0$}] (A) at  (6,0){};

\node[label=below:{$\dots$}] (D) at  (7,0) {} ;
\node[label=below:{$\dots$}] (d) at  (11,0) {} ;
\node[circle,draw,fill=black, minimum size=0.1pt,scale=0.6,label=below:{$z_{m}$}] (E) at  (8,0){};
\node[circle,draw,fill=black,minimum size=0.1pt,scale=0.6,label=below:{$z$}] (x) at  (9,0){};

\node[circle,draw, minimum size=0.1cm,scale=0.6,label=below:{$x_n$}] (H) at  (12,0) {} ;\node[circle,draw, minimum size=0.1cm,scale=0.6,label=below:{$x$}] (H2) at  (13,0) {} ;
\node[circle,fill=black,draw, minimum size=0.1cm,scale=0.6,label=below:{$z_n$}] (H3) at  (14,0) {} ;
\node[circle,draw, minimum size=0.1pt,scale=0.6,label=below:{$x_{m}$}] (I) at  (10,0){};
\node[label=below:{$\dots$}] (K2) at  (15,0){};

\node[circle,draw, minimum size=0.1cm,scale=0.6,label=below:{$x_0$}] (L) at  (16,0) {} ;

\draw[dotted] (A) -- (x);
\draw[dotted] (x) -- (I);
\draw[dotted] (H) -- (I);
\draw[dotted] (H) -- (H2);
\draw[dotted] (L) -- (H2);
\path[-,every node/.style={font=\sffamily\small}] (L) edge[bend right=50] node [left] {} (A);

\path[-,every node/.style={font=\sffamily\small}] (I) edge[bend right=50] node [left] {} (E);
\path[-,every node/.style={font=\sffamily\small}] (H3) edge[bend right=50] node [left] {} (H);
        \end{tikzpicture}
        }
    \end{center}
\vspace{-0.35 cm}
    
    \caption{Three Cases for the proof of Theorem \ref{thm:dualalg} } \label{f:xnxm}
    {\scriptsize \vspace{.2 cm} Figure \ref{f:xnxm} illustrates the three different cases that we consider in the proof of Theorem \ref{thm:dualalg}. The first case is when $n=m$, the second case is when $n<m$, and the third case is when $m<n$. In each case, worker $x \in (x_n,z_n)$ and job $z \in (x_m,z_m)$.}
\end{figure}

\emph{Case I: $n=m$}.  This follows immediately since $\phi_n$ is a dual potential on $S_{[x_n,z_n]}$.

\emph{Case II: $n<m$}. Observe
\begin{align*}
    \phi(x)-\phi(z)& = \big( \phi(x)-\phi(z_n) \big)-\sum_{k=n}^{m}(\phi(x_k)-\phi(z_k))+\sum_{k=n+1}^{m}(\phi(x_{k-1})-\phi(z_k))+(\phi(x_m)-\phi(z))\\
    &\leq c(x,z_n)-\sum_{k=n}^{m}c_{kk}+\sum_{k=n+1}^{m}(\beta_k+c_{kk})+c(x_m,z) \\
    &= c(x,z_n)-c_{nn}+c(x_m,z)+\sum_{k=n+1}^{m}\beta_k\\
    &\leq c(x,z_n)-c_{nn}+c_{nm}-c_{mm}+c(x_m,z).
\end{align*}
The first inequality follows since both $\phi(x) - \phi(z_n) \leq c(x,z_n)$ and $\phi(x_m) - \phi(z) \leq c(x_m,z)$ follow from the dual potential within the same part, and $\phi(x_k)-\phi(z_k) = c_{kk}$ follows by the dual potential within the same part for paired workers and jobs, and finally $\beta_k+c_{kk} = \phi(x_{k-1})-\phi(z_k)$ by equation (\ref{eq:phidef}). The final inequality is implied by the upper bound on $\sum_{k=n+1}^{m}\beta_k$ from (\ref{eq:ineqs}). To show $\phi(x)-\phi(z)\leq c(x,z),$ it suffices to prove
\begin{align}\label{eq:concavity}
    c(x,z_n)+c(x_m,z)-c(x,z)\leq c_{nn}+c_{mm}-c_{nm}.
\end{align} 
Note that $c(x,z_n)\leq c(x_n,z_n)$ and $c(x_m,z)\leq c(x_m,z_m)$ since $x\in I_{[x_n,z_n]}$ and $z\in J_{[x_m,z_m]}$. By symmetry, we may without loss of generality assume that $\max(x_n,z_n)\leq \min(x_m,z_m)$. There are four cases:
\begin{enumerate}[noitemsep]
    \item $z_n\leq x_n\leq z_m\leq x_m$. Then $c(x,z)\geq c(x_n,z_m)$ and the claim follows.
    \item $x_n\leq z_n\leq z_m\leq x_m$. In this case $z_n-x\leq z_n-x_n$. Applying Lemma \ref{lemma:concave} with $a=z-z_n$ yields $c(x,z_n)-c(x,z)\leq c(x_n,z_n)-c(x_n,z)\leq c(x_n,z_n)-c(x_n,z_m)$. Using that $c(x_m,z)\leq c(x_m,z_m)$ equation (\ref{eq:concavity}) follows.
    \item $z_n\leq x_n\leq x_m\leq z_m$. In this case $z-x_m\leq z_m-x_m$. Applying Lemma \ref{lemma:concave} with $a=x_m-x_n$ yields $c(x_m,z)-c(x,z) \leq c(x_m,z)-c(x_n,z)\leq c(x_m,z_m)-c(x_n,z_m)$, where the first inequality follows from $c(x_n,z) \leq c(x,z)$. Further using $c(x,z_n)\leq c(x_n,z_n)$, inequality (\ref{eq:concavity}) follows.
    \item $x_n\leq z_n\leq x_m\leq z_m$. In this case, the configuration between $x_n$ and $z_n$ is identical to Case 2, and the configuration between $x_m$ and $z_m$ is identical to Case 3. We apply Lemma \ref{lemma:concave} as in Case 2 and in Case 3. First, $c(x,z_n)-c(x,z)\leq c(x_n,z_n)-c(x_n,z)$. Second, $c(x_m,z)-c(x_n,z)\leq c(x_m,z_m)-c(x_n,z_m)$. Summing the two inequalities delivers inequality (\ref{eq:concavity}).
\end{enumerate}
This completes the proof of \eqref{eq:concavity} for Case II.

\emph{Case III: $m<n$}. Observe that
\begin{align*}
    \phi(x)-\phi(z)&=(\phi(x)-\phi(z_n))+\sum_{k=m+1}^{n-1}(\phi(x_k)-\phi(z_k))-\sum_{k=m+1}^{n}(\phi(x_{k-1})-\phi(z_k))+(\phi(x_m)-\phi(z))\\
    &\leq c(x,z_n)+\sum_{k=m+1}^{n-1}c_{kk}-\sum_{k=m+1}^{n}(\beta_k+c_{kk})+c(x_m,z) \\
    &= c(x,z_n)-c_{nn}+c(x_m,z)-\sum_{k=m+1}^{n}\beta_k\\
    &\leq c(x,z_n)-c_{nn}+c_{nm}-c_{mm}+c(x_m,z).
\end{align*}
The first inequality follows since both $\phi(x) - \phi(z_n) \leq c(x,z_n)$ and $\phi(x_m) - \phi(z) \leq c(x_m,z)$ follow from the dual potential within the same part, and $\phi(x_k)-\phi(z_k) = c_{kk}$ follows by the dual potential within the same part for paired workers and jobs, and finally $\beta_k+c_{kk} = \phi(x_{k-1})-\phi(z_k)$ by equation (\ref{eq:phidef}). The final inequality is implied by the lower bound on $\sum \beta_k$ from (\ref{eq:ineqs}). The rest follows similarly as in Case II.

\vspace{0.4 cm}
We next check that $\phi$ is indeed a dual potential on $S_{[x_0,z_0]}$. Again, due to the possibilities of multiple workers on the same skill level and multiple jobs on the same difficulty level, we first verify whether our $\phi$ in \eqref{eq:phidef}-\eqref{e:phix0} is well-defined.

    \begin{enumerate}[noitemsep]
\item Suppose that $z_0=z_p$ and $x_0\neq x_1$. Using \eqref{e:phiz0}, we obtain
        \begin{equation*}
            \phi(z_0)=\max_{i\in\{1,\dots,p\}}(\phi(x_i)-c_{i0})=\max_{i\in\{1,\dots,p\}}(\phi(x_i)-c_{ip})\geq \phi(x_p)-c_{pp}=\phi(z_p),
        \end{equation*}
        where the second equality follows since $z_0 = z_p$. To show the inverse inequality, let $i\in\{1,\dots,p\}$. By construction, $\phi(x_i)-c_{ip}\leq \phi(z_p)$. Since $i$ is arbitrary, the finishes the proof.
        \item Suppose that $x_0=x_1$ and $z_0\neq z_p$. Using \eqref{e:phiz0} and \eqref{e:phix0}, we have 
        \begin{equation*}
            \phi(x_0)=\min_{i\in\{1,\dots,p\}}(\phi(z_i)+c_{0i})=\min_{i\in\{1,\dots,p\}}(\phi(z_i)+c_{1i})=\phi(x_1),
        \end{equation*}
                where the second equality follows as $x_0 = x_1$, and where the last equality follows because $\phi(z_1)+c_{11}=\phi(x_1)$ and for any $i$, $\phi(z_i)+c_{1i}\geq \phi(x_1)$.
        \item Suppose that $x_0=x_1$ and $z_0=z_p$. That $\phi(x_0)=\phi(x_1)$ is the same as done in the previous case, and hence we only need to show that $\phi(z_0)=\phi(z_p)$. By \eqref{e:phiz0} and using $x_0=x_1$, 
        \begin{equation*}
            \phi(z_0)=\min_{i\in\{1,\dots,p\}}(\phi(z_i)+c_{0i})-c_{00}=\min_{i\in\{1,\dots,p\}}(\phi(z_i)+c_{1i})-c_{1p}\leq \phi(z_p),
        \end{equation*}
        where the final inequality follows as one of $i =p$. It then suffices to show that the last inequality can be reversed. Let $i\in\{1,\dots,p\}$. By \eqref{eq:ineqs}, we know that
        \begin{equation*}
            \sum_{k=i+1}^p\beta_k\geq c_{ii}+c_{00}-c_{0i}-c_{p0}=c_{ii}+c_{0p}-c_{0i}-c_{pp}.
        \end{equation*}
        Inserting into \eqref{eq:phidef} and using that $\phi_i(x_i)-\phi_i(z_i)=c_{ii}$, we obtain
        \begin{equation*}
            \phi(z_i)+c_{0i}=\sum_{k=i+1}^p\beta_k+\phi(x_p)+c_{0i}-c_{ii}\geq \phi(x_p)+c_{0p}-c_{pp}=\phi(z_p)+c_{0p}.
        \end{equation*}
        Since $x_0=x_1$, we know that 
        $\phi(z_i)+c_{1i}-c_{1p}\geq \phi(z_p)$. Since $i$ is arbitrary, this completes the proof.
        
        \end{enumerate}

\vspace{0.3 cm}

\noindent It then remains to check that $\phi$ is a dual potential on $S_{[x_0,z_0]}$, that is, to show that
\begin{enumerate}[noitemsep]
    \item $\phi(x_0)-\phi(z_0)=c(x_0,z_0)$;
    \item For $x\in I_{[x_0,z_0]}\setminus\{x_0\}$, $\phi(x)-\phi(z_0)\leq c(x,z_0)$;
    \item For $z\in J_{[x_0,z_0]}\setminus\{z_0\}$, $\phi(x_0)-\phi(z)\leq c(x_0,z)$.
\end{enumerate}That $\phi(x_0)-\phi(z_0)=c(x_0,z_0)$ follows from the definition in \eqref{e:phix0}. In other words, we need to show that
\begin{align*}
    \phi(z_0)\geq\max_{x\in I_{[x_0,z_0]}\setminus\{x_0\}} (\phi(x)-c(x,z_0))\hspace{1.5 cm}\text{and}\hspace{1.5 cm} \phi(x_0)\leq \min_{z\in J_{[x_0,z_0]}\setminus\{z_0\}}(\phi(z)+c(x_0,z)).
\end{align*}
Indeed, this follows immediately from \eqref{e:phiz0}, \eqref{e:phix0} and Lemma \ref{lemma:maxmin}. It remains to prove that the interval in the second case of \eqref{e:phiz0} is non-empty, or
\begin{align*}
    \max_{i\in \{1,\dots,p\} } (\phi(x_i)-c_{i0})\leq \min_{i\in \{1,\dots,p\} }(\phi(z_i)+c_{0i})-c_{00}.
\end{align*}
To see this, let $m,n\in\{1,\dots,p\}$ be arbitrary, and we show that
\begin{align}
    \phi(x_n)-\phi(z_m)\leq c_{0m}+c_{n0}-c_{00}.\label{eq:phixz00}
\end{align}
Recall that $\phi(x_{i-1})-\phi(z_{i})=\beta_i+c_{ii}$ for $2\leq i\leq p$ by equation (\ref{eq:phidef}). Again, we have the three cases of Figure \ref{f:xnxm} to consider to show that the sufficient condition (\ref{eq:phixz00}) is satisfied.

\emph{Case I: $n=m$}. In this case, \eqref{eq:phixz00} becomes 
$c_{nn}+c_{00}\leq c_{n0}+c_{0n}.$
This is immediate from optimality of the assignment.

\emph{Case II: $n<m$}. Similar to the other Case II above, evaluated at $x=x_n$ and $z=z_m$, we obtain
\begin{align}
    \phi(x_n)-\phi(z_m)&\leq  c(x_n,z_n)-c_{nn}+c(x_m,z_m)+\sum_{k=n+1}^{m}\beta_k\leq c_{0m}+c_{n0}-c_{00},\label{eq:same}
\end{align}as desired, 
where the second inequality follows from the upper bound in \eqref{eq:ineqs}.



\emph{Case III: $n>m$}. Similar to the other Case III above, evaluated at $x=x_n$ and $z=z_m$, we have
\begin{align}
    \phi(x_n)-\phi(z_m)&\leq  c(x_n,z_n)-c_{nn}+c(x_m,z_m)-\sum_{k=m+1}^{n}\beta_k\leq -c_{00}+c_{0m}+c_{n0}, \label{e:ineqfinal}
\end{align}as desired, 
where the last step follows from \eqref{eq:ineqs}. \end{proof}


\noindent Finally, we emphasize that our construction relies on the concavity of the mismatch cost function $c(x,z)$ in two respects. First, the property of no intersecting pairs is essential for our induction structure. Second, \eqref{eq:concavity} requires concavity as well.

\subsection{Proof of Theorem \ref{prop:hiereff}} \label{a:hiereff}

In this appendix we provide a proof of Theorem \ref{prop:hiereff}.

\vspace{0.35 cm}
\noindent There are at most $N_s$ layers in the measure of underqualification. Each of these layers corresponds to a collection $\mathcal I$ of pairs, each of which contains ordered subpairs $(x_1,z_1),\dots,(x_p,z_p)$ that induces a system of inequalities (\ref{eq:ineqs}) where the algorithm is performed, with complexity $O(p^4)$. Given layer $\ell$ of the measure of underqualification, the sum of the sizes of such systems (that is, the number of subpairs) is bounded from above by the total number of pairs in layers $\{ \ell-1,\ell,\ell+ 1\}$, and hence is at most $3(m+n-1)$ by Theorem \ref{t:numberpair}.

In other words, the total complexity to solve the systems is at most $O(\sum\limits_{i \in \mathcal{I}} p_i^4)$, where $\sum\limits_{i \in \mathcal{I}} p_i\leq 3(m+n-1)$, and hence is at most $O((m+n)^4)$. The number of such systems is bounded above by the number $N_s$ of layers. Therefore, the complexity of our dual algorithm is $O((m+n)^4 N_s)$.

\subsection{Proof of Theorem \ref{p:dual}}\label{s:pdual}

In this appendix, we prove Theorem \ref{p:dual}. We make use of Lemma \ref{l:triangle} and Lemma \ref{lemma:tildepsi}, which we prove first. 

\begin{lemma}{\textit{Triangle Inequality}.} \label{l:triangle}
    For all $x,y,z\in\R$, it holds that $c(x,y)+c(y,z)\geq c(x,z)$.
\end{lemma}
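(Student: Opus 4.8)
\textbf{Proof plan for the Triangle Inequality (Lemma \ref{l:triangle}).}

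The plan is to reduce the claim to the elementary subadditivity of a concave increasing function on $[0,\infty)$ vanishing at the origin, and then to handle the fact that $c$ is built from two such functions --- $\bar c(t)=\rho_p t^{\zeta_p}$ on the underqualification side and $\underline c(t)=\rho_\kappa t^{\zeta_\kappa}$ on the overqualification side --- rather than a single one. First I would record the basic fact: if $f:[0,\infty)\to[0,\infty)$ is concave with $f(0)\ge 0$, then $f$ is subadditive, i.e. $f(a+b)\le f(a)+f(b)$ for $a,b\ge 0$; this follows from writing $a=\frac{a}{a+b}(a+b)+\frac{b}{a+b}\cdot 0$ and $b=\frac{b}{a+b}(a+b)+\frac{a}{a+b}\cdot 0$ and adding the two concavity inequalities (exactly the computation already used in Lemma \ref{lemma:concave}). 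Applying this to $\bar c$ and to $\underline c$ separately gives $\bar c(s+t)\le\bar c(s)+\bar c(t)$ and $\underline c(s+t)\le\underline c(s)+\underline c(t)$. I would also note monotonicity of both branches, which is immediate from $\zeta_p,\zeta_\kappa\in(0,1)$ and $\rho_p,\rho_\kappa>0$.

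Next I would set up the case analysis on the relative order of $x$, $y$, $z$ on the real line. By the symmetry $c(x,y)=c(y,x)$ only as a matching of distances --- careful: $c$ is \emph{not} symmetric, since $c(x,z)$ depends on whether $z\ge x$ or $z<x$ --- I cannot simply assume an ordering, so I would enumerate the genuinely distinct configurations. Write $c(x,z)=C(x,z)$ where $C$ equals $\bar c(|x-z|)$ when the second argument exceeds the first and $\underline c(|x-z|)$ otherwise. The key structural observation is: if $y$ lies \emph{between} $x$ and $z$, then the segment from $x$ to $z$ is partitioned as $|x-z|=|x-y|+|y-z|$, and moreover the ``direction'' ($\ge$ or $<$) of the pair $(x,z)$ agrees with the direction of whichever of $(x,y)$, $(y,z)$ covers the relevant endpoint; in this sub-case $c(x,z)$ uses a single branch, say $\bar c$, and one gets $c(x,z)=\bar c(|x-y|+|y-z|)\le \bar c(|x-y|)+\bar c(|y-z|)$; then bound each term by the corresponding $c(x,y)$ or $c(y,z)$ --- here one needs that on each sub-interval the cost $c$ is \emph{at least} $\bar c$ of that length, which holds because $\bar c$ and $\underline c$ are each the cost on their own side and we only need a lower bound by the \emph{smaller} of the two branch values; concretely $c(\cdot,\cdot)\ge \min(\bar c,\underline c)$ evaluated at the distance, and $\bar c(t)\le \bar c(t)$ trivially, so I should instead argue branch-by-branch that the piece actually equals $\bar c$ or $\underline c$ of its length as dictated by orientation, giving equality rather than a bound. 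If instead $y$ lies \emph{outside} the interval $[x,z]$ (i.e. $y<\min(x,z)$ or $y>\max(x,z)$), then $|x-y|+|y-z|\ge |x-z|$ with the two pieces now possibly using different branches, and one uses monotonicity: $c(x,z)=C(x,z)$ at distance $|x-z|\le |x-y|$, so $c(x,z)\le \text{(branch of }(x,z))(|x-y|)$, and then relate this to $c(x,y)$; the cleanest route is $c(x,z)\le \max(\bar c,\underline c)(|x-z|)$ and $c(x,y)\ge \min(\bar c,\underline c)(|x-y|)\ge\min(\bar c,\underline c)(|x-z|)$, which is \emph{not} enough in general, so I would instead exploit that when $y$ is outside, $x$ lies between $y$ and $z$ (or $z$ lies between $x$ and $y$), reducing to the ``between'' case with roles permuted: e.g. $c(y,z)\ge c(y,x)+c(x,z)\ge c(x,z)-c(x,y)$ plus nonnegativity of $c(x,y)$, wait --- this needs the between-case inequality applied to the triple $(y,x,z)$, which is legitimate since it is a strictly smaller configuration in the natural ordering, but to avoid circularity I would simply prove the ``between'' case first and then derive the ``outside'' cases from it together with $c\ge 0$.

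The main obstacle I anticipate is precisely the asymmetry of $c$: the standard proof that a metric-like cost with concave distance satisfies the triangle inequality assumes a single concave profile, whereas here the profile switches at the diagonal, so I must verify that no configuration forces an illegitimate mixing of the two branches. I expect the decisive lemma to be: for any $x,z$ and any $y$ between them, $c(x,z)=c(x,y)+c(y,z)$ when $x,y,z$ are monotonically ordered with the job-side/worker-side orientation constant along the segment, and $c(x,z)\le c(x,y)+c(y,z)$ otherwise, the inequality coming from concavity (subadditivity) of the single branch that is actually in force on the $x$-to-$z$ segment, combined with the fact that the sub-segment costs are each $\ge$ the value of that same branch at their length (true because $\bar c$ and $\underline c$ cross only at $0$ and one can check $\min(\bar c(t),\underline c(t))$ is still achieved appropriately, or more simply because on each sub-interval the orientation is determined and equals the cost exactly). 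Once the ``between'' case is settled, the ``outside'' cases follow by applying it after relabelling and invoking $c\ge 0$. I would carry the steps out in the order: (i) subadditivity and monotonicity of each branch; (ii) the ``$y$ between $x$ and $z$'' case via branch analysis; (iii) the ``$y$ outside'' cases by reduction to (ii); (iv) conclude.
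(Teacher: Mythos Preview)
Your approach is essentially the paper's: split into the case where $y$ lies between $x$ and $z$ (handled by subadditivity of a single concave branch, since all three costs then live on the same side of the diagonal) and the case where $y$ lies outside. However, your treatment of the ``outside'' case is tangled and, as sketched, incorrect. You propose to derive it by applying the ``between'' inequality to the relabelled triple $(y,x,z)$, but that yields $c(y,z)\le c(y,x)+c(x,z)$, which is the wrong direction; you catch this (``wait'') but do not repair it.

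The paper's fix is the simple one you mention first and then abandon: pure monotonicity of the relevant branch. Writing $c(x,z)=h(z-x)$ with $h$ nonnegative, $h(0)=0$, increasing and concave on $[0,\infty)$, decreasing and concave on $(-\infty,0]$, assume $x<z$ (the case $x>z$ is symmetric in the structure of $h$). If $y\le x$, then $z-y\ge z-x>0$ and $c(y,z)=h(z-y)\ge h(z-x)=c(x,z)$; if $y\ge z$, then $y-x\ge z-x>0$ and $c(x,y)=h(y-x)\ge h(z-x)=c(x,z)$. In either outside configuration one of the two left-hand terms \emph{alone} dominates $c(x,z)$, because it uses the \emph{same} branch as $c(x,z)$ at a larger argument; there is no branch-mixing difficulty, and no reduction to the between case is needed. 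The remaining case $x<y<z$ has all three costs on the $\bar c$ branch, and subadditivity of $\bar c$ finishes it. Your worry about ``illegitimate mixing of the two branches'' never materialises.
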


\begin{proof}
Our cost of mismatch takes the form $c(x,z)=h(z-x)$ where $h$ is strictly concave and increasing on $[0,\infty)$, strictly concave and decreasing on $(-\infty,0]$, satisfying $h(0)=0$. 

The case where $x=z$ is trivial, so we focus our attention on the case where $x \neq z$. By symmetry, we assume $x< z$ without loss of generality. If $z-y\geq z-x > 0$, then necessarily $c(y,z)\geq c(x,z)$ and hence $c(x,y)+c(y,z)\geq c(x,z).$ The same argument applies when $y - x \geq z-x > 0$. In the remaining scenario where both $(z-y)$ and $(y-x)$ are in $ [0,z-x)$, we have by concavity of $h$ that\footnote{The interval is bounded below by zero because when $z-y \geq z-x$ is not true, then $y-x>0$, and similarly when $y-x \geq z-x$ is not true then $z-y > 0$.} 
\begin{align*}
    c(x,y)+c(y,z)&=h(y-x)+h(z-y)\geq \frac{y-x}{z-x}\, h(z-x) + \frac{z-y}{z-x}\, h(z-x)=h(z-x)=c(x,z),
\end{align*}
where the inequality follows since $y - x = \frac{y-x}{z-x} \times (z-x) + \frac{z-y}{z-x} \times 0$. This completes the proof.    
\end{proof}

In the main text, we established the connection between the dual optimizers for the cost minimization problem and the dual optimizers for the output maximization problem (\Cref{s:wagesandfirmvalues}, discussion follows \Cref{thm:dualalg}). In this appendix, we exploit this connection to simplify the exposition to the proof of Theorem  \ref{p:dual}. Specifically, we use that we can equivalently characterize the dual functions $(\phi,\psi)$ for the overlapping segments of the worker and the job distribution, with the understanding that we can obtain wages and job values using $w(x) = \alpha(x) - \phi(x)$ and $v(z)=\theta(z) - \psi(z)$.

To formulate Theorem \ref{p:dual} in terms of the dual potentials for the mismatch cost minimization problem, we need to describe our sequential construction of the functions. For the interpretation of these objects, we refer the reader to Section \ref{s:wagesandfirmvalues}. We define sequentially the dual maps, analogous to our previous definitions \eqref{e:hatphi} and \eqref{e:psiz2}. Starting from $\tilde{\phi}=g-\tilde{w}$, where $\tilde{w}$ are the dual values for mismatched workers $x \in I$, let 
\begin{align}
\tilde{\psi}(z):=\min \limits_{x \in I} \big( c(x,z) - \tilde{\phi}(x) \big) \ \hspace{1 cm} \text{ and }\ \hspace{1 cm} \hat{\phi}(x):=\min_{z\in I\cup J} \big(c(x,z)-\tilde{\psi}(z) \big) , \label{e:hatphi2}
\end{align} 
where we recall that $J$ is the set of mismatched jobs. Moreover, let
\begin{align}
\hat{\psi}(z):=\min_{x \in I \cup J} \; \big( c(x,z) - \hat{\phi}(x) \big), \label{e:hatpsi2}
\end{align}
 $\phi(x)=\hat{\phi}(x)$ for $x\in I$,  $\psi(z)=\hat{\psi}(z)$ for $z\in J$, and set $\phi(x)=-\psi(x)$  for $x\in J$ and $\psi(z)=-\phi(z)$ for $z\in I$. Finally, we define for $x\in K$
\begin{align}
    \phi(x)=\min_{z\in I\cup J} \big( c(x,z)-\psi(z) \big) \label{eq:defK}
\end{align}
and $\psi(z)=-\phi(z)$ for $z\in K$. 
It is easy to check that with these definitions, $\phi=g-w$ and $\psi=h-v$, with $w,v$ given in Theorem \ref{p:dual}.

To prove the result, we first define $c$-conjugate functions and analyze some of their properties.

\vspace{0.2 cm}
\noindent For $\phi: I \rightarrow \R$, we define the $c$-conjugate function for all jobs $z \in J$ as
\begin{equation}
\phi^c(z) := \min_{x\in I} \; \big( c(x,z) - \phi(x) \big). \label{e:ctrans}
\end{equation}
Denote by $\bar{c}(z,x)=c(x,z)$ and we further define for $x\in I$
\begin{equation}
    \phi^{c\bar{c}}(x)=(\phi^c)^{\bar{c}}(x) = \min\limits_{z\in J} \; \big( \bar{c}(z,x) - \phi^c(z) \big) =\min\limits_{z\in J} \; \big({c}(x,z) - \phi^c(z) \big). \label{e:ctrans2}
\end{equation} 

\noindent Given these definitions, the following statements follow:\footnote{See Chapter 1 of  \citet{Santambrogio:2015} for further details.}
\begin{enumerate}[noitemsep]
    \item $\phi^{c\bar{c}}\geq \phi$. \\
    This follows since for each $x\in I$ and $z\in J$, $\phi(x)+\phi^c(z)\leq c(x,z)$ or $\phi(x) \leq c(x,z) - \phi^c(z)$ by the definition of the $c$-conjugate function. By taking the infimum $z \in J$ this leads to $\phi^{c\bar{c}}(x)\geq \phi(x)$ by the definition \eqref{e:ctrans2}. 
    \item If  $\phi=\psi^{\bar{c}}$ for some $\psi$, then $\phi=\phi^{c\bar{c}}$.\\ 
    First, we observe that $\phi=\psi^{\bar{c}}$ naturally implies $\phi^c=\psi^{\bar{c}c}$. To see this, note that $\phi^c=\psi^{\bar{c}c}\geq \psi$, which follows from the previous statement. This inequality, by uniformly decreasing from $\phi^c$ to $\psi$, together with the definition \eqref{e:ctrans2}, implies we uniformly increase the conjugate, or $\phi^{c\bar{c}}=(\phi^c)^{\bar{c}}\leq \psi^{\bar{c}}=\phi$. We establish $\phi=\phi^{c\bar{c}}$ by combining $\phi^{c\bar{c}}\leq \phi$ with Statement 1. 

    
    \item If $(\phi,\psi)$ is an optimal dual pair, then so is $(\phi,\phi^c)$. \\
    Suppose $(\phi,\psi)$ is a dual pair, then $\phi(x)+\psi(z)\leq c(x,z)$. It holds by the definition in (\ref{e:ctrans}) that $\phi(x)+\phi^c(z)\leq c(x,z)$ as well as $\phi^{c}(z)\geq \psi(z)$. Since $\phi^{c}(z)\geq \psi(z)$ and $(\phi,\phi^c)$ is a dual solution, it follows that if $(\phi,\psi)$ is a solution to the dual maximization problem, then so is $(\phi,\phi^c)$. 
\end{enumerate}
 
\begin{lemma}\label{lemma:tildepsi}
    $\hat{\phi}(x)+\tilde{\psi}(z)\leq c(x,z)$ for all workers $x$ and jobs $z$ such that $x,z \in I\cup J$, and equality holds for $(x,z)\in \Gamma_\pi$.
\end{lemma}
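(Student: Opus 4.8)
\textbf{Proof proposal for Lemma \ref{lemma:tildepsi}.}

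The plan is to establish the inequality $\hat\phi(x)+\tilde\psi(z)\le c(x,z)$ for all $x,z\in I\cup J$ directly from the definitions in \eqref{e:hatphi2}, and then verify that equality holds on the support $\Gamma_\pi$ of the optimal assignment. For the inequality, recall $\hat\phi(x)=\min_{z'\in I\cup J}c(x,z')-\tilde\psi(z')$, so in particular, choosing $z'=z$, we get $\hat\phi(x)\le c(x,z)-\tilde\psi(z)$, which rearranges immediately to $\hat\phi(x)+\tilde\psi(z)\le c(x,z)$. This direction is essentially immediate from the construction of $\hat\phi$ as a $c$-conjugate.

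The substantive part is the equality claim on $\Gamma_\pi$. Here I would split into two cases according to whether the pair $(x,z)\in\Gamma_\pi$ is a mismatched pair (i.e.\ $x\in I$, and also $z\in I$ after the removal of perfect pairs — recall the remaining pairs live in the layered, alternating structure) or involves a perfectly paired type. Actually, since $\tilde w$ and hence $\tilde\phi=g-\tilde w$ is, by the earlier discussion, a dual optimizer for the mismatch problem between the mismatched workers $I$ and mismatched jobs $J$, we know from Theorem \ref{thm:dualalg} (applied to the mismatch subproblem) that $\tilde\phi(x)+\psi_{\mathrm{mm}}(z)=c(x,z)$ for $(x,z)$ a mismatched pair, where $\psi_{\mathrm{mm}}=-\tilde\phi$ on $Z$; and $\tilde\psi$ as defined in \eqref{e:hatphi2} is the $c$-conjugate $\tilde\phi^c$, so by Statement 3 above $(\tilde\phi,\tilde\psi)$ is again an optimal dual pair for the mismatch problem, giving $\tilde\phi(x)+\tilde\psi(z)=c(x,z)$ on mismatched pairs of $\Gamma_\pi$. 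The key further step is: for such $x$, $\hat\phi(x)=\tilde\phi(x)$. This holds because $\hat\phi=\tilde\psi^{\bar c}=(\tilde\phi^c)^{\bar c}=\tilde\phi^{c\bar c}$, and since $\tilde\phi$ is itself a $c$-conjugate (being $g-\tilde w$ where $\tilde w$ comes from the local hierarchical construction, i.e.\ $\tilde\phi=\psi_{\mathrm{mm}}^{\bar c}$ up to the sign convention), Statement 2 gives $\tilde\phi^{c\bar c}=\tilde\phi$ on $I$. Hence $\hat\phi(x)+\tilde\psi(z)=\tilde\phi(x)+\tilde\psi(z)=c(x,z)$ for mismatched pairs.

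For pairs in $\Gamma_\pi$ involving perfectly paired types (i.e.\ $(x,z)$ with $x=z$), the cost is $c(x,z)=0$, and I would argue equality by showing $\hat\phi(x)+\tilde\psi(x)=0$: since $\tilde\psi(x)=\min_{x'\in I}c(x',x)-\tilde\phi(x')\le c(x,x)-\tilde\phi(x)$ only bounds things one way if $x\in I$, but for a perfectly paired $x$ (which lies in $K$, not in $I\cup J$) this case does not actually arise in the statement, since the lemma is quantified over $x,z\in I\cup J$ and $\Gamma_\pi$ restricted to $I\cup J$ consists only of mismatched pairs — the perfect pairs involve types in $K$. So the equality claim really only needs the mismatched-pair case. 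I would double-check this quantifier scope carefully: the lemma says "for all workers $x$ and jobs $z$ such that $x,z\in I\cup J$", and "equality holds for $(x,z)\in\Gamma_\pi$" should be read as $(x,z)\in\Gamma_\pi\cap((I\cup J)\times(I\cup J))$, which is exactly the mismatched pairs.

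The main obstacle I anticipate is bookkeeping with the sign conventions and the $c$- versus $\bar c$-conjugates: one must be careful that $\tilde\psi$ as defined really is $\tilde\phi^c$ (it is, by inspection of \eqref{e:hatphi2} and \eqref{e:ctrans}), that $\hat\phi$ really is $\tilde\psi^{\bar c}=\tilde\phi^{c\bar c}$, and that $\tilde\phi$ satisfies the hypothesis of Statement 2 (being itself a $\bar c$-conjugate) — this last point is where one invokes that $\tilde w$ came from the local hierarchical mechanism of Theorem \ref{thm:dualalg}, whose output $\phi$ (here $\tilde\phi$) is constructed so that $\psi=-\phi$ and $(\phi,\psi)$ is optimal, hence $\phi=\psi^{\bar c}$ after one conjugation step (Statement 3 applied once more). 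Once these identifications are nailed down, the proof is short; the risk is purely in getting a conjugate direction backwards.
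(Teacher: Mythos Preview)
Your overall strategy matches the paper's: the inequality is immediate from the definition of $\hat\phi$, and for equality on $\Gamma_\pi$ you reduce to showing $\hat\phi(x)=\tilde\phi(x)$ for $x\in I$, then use that $(\tilde\phi,\tilde\psi)$ is an optimal dual pair for the mismatch subproblem. You also correctly note that the equality claim only concerns mismatched pairs, since perfect pairs live in $K$, outside the quantifier scope $I\cup J$.

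However, your argument for $\hat\phi=\tilde\phi$ on $I$ has two genuine gaps. First, the identification $\hat\phi=\tilde\psi^{\bar c}=\tilde\phi^{c\bar c}$ is not correct as written: by \eqref{e:hatphi2}, $\hat\phi(x)=\min_{z\in I\cup J}\bigl(c(x,z)-\tilde\psi(z)\bigr)$, whereas $\tilde\phi^{c\bar c}(x)=\min_{z\in J}\bigl(c(x,z)-\tilde\phi^c(z)\bigr)$ per \eqref{e:ctrans2}. The minimum defining $\hat\phi$ runs over the larger set $I\cup J$, so a priori $\hat\phi\le\tilde\phi^{c\bar c}$ and the inequality could be strict. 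The paper closes this by splitting the minimum and checking that the extra terms with $z\in I$ do not bind: for $x\in I$ and $z\in I$, the very definition $\tilde\psi(z)=\min_{x'\in I}c(x',z)-\tilde\phi(x')$ gives $\tilde\psi(z)\le c(x,z)-\tilde\phi(x)$, i.e.\ $c(x,z)-\tilde\psi(z)\ge\tilde\phi(x)$. That short computation is exactly what is missing from your chain of conjugate identities.

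Second, your invocation of Statement~2 to obtain $\tilde\phi^{c\bar c}=\tilde\phi$ requires $\tilde\phi$ to already be a $\bar c$-conjugate of something, and your justification via Statement~3 does not establish this: Statement~3 says that if $(\phi,\psi)$ is optimal then so is $(\phi,\phi^c)$, which does not yield $\phi=\psi^{\bar c}$. The output of the local hierarchical mechanism is an optimal dual potential but not automatically a conjugate. The paper handles this cleanly by replacing $\tilde\phi$ with $\tilde\phi^{c\bar c}$ at the outset (which is still optimal by Statements~1 and~3 and is tautologically a $\bar c$-conjugate), rather than trying to argue that the original $\tilde\phi$ already has this form.
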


\begin{proof} That $\hat{\phi}(x)+\tilde{\psi}(z)\leq c(x,z)$ follows from definition of the dual optimizer for workers $\hat{\phi}(x)$ in (\ref{e:hatphi}). Next, we prove $\hat{\phi}(x)+\tilde{\psi}(z) = c(x,z)$ for all workers and jobs $(x,z) \in \Gamma_\pi$. 

To prove that $\hat{\phi}(x)+\tilde{\psi}(z) = c(x,z)$ for workers and jobs $(x,z) \in \Gamma_\pi$, we fix some pair $(x,z) \in \Gamma_\pi$. Since $(\tilde{\phi}, \tilde{\phi}^c)$ is a dual solution to the assignment problem between remaining workers and jobs, $\tilde{\phi}(x)+\tilde{\phi}^c(z) = c(x,z)$ for all $(x,z) \in \Gamma_\pi$. Given the definition of the dual value for jobs $z \in I \cup J$ in (\ref{e:hatphi2}) we obtain that $\tilde{\psi}(z) =\min\limits_{x \in I} \big( c(x,z) - \tilde{\phi}(x) \big) = \tilde{\phi}^c(z)$ and hence that $\tilde{\phi}(x)+\tilde{\psi}(z) = c(x,z)$ for all $(x,z) \in \Gamma_\pi$. To conclude the proof it remains to show that $\hat{\phi}(x)=\tilde{\phi}(x)$ for every worker $x\in I$. 

We next show that $\hat{\phi}(x)=\tilde{\phi}(x)$ for every worker $x\in I$. Since we replaced, without loss of generality, the dual potential $\tilde{\phi}$ by the $c$-transform $\tilde{\phi}^{c\bar{c}}$,\footnote{We can always improve on the original $\tilde{\phi}$ by doing a double $c$-conjugate transform by Statement 1 that is weakly better in terms of the dual maximization problem, see Remark 1.13 in \citet{Santambrogio:2015}.}  it follows from the definition of the $c$-transform that for all $x \in I$:
\begin{equation*}
\tilde{\phi}(x) = \tilde{\phi}^{c\bar{c}}(x) = \min_{z \in J} \big( c(x,z) - \tilde{\phi}^c(z) \big). \label{e:ctrans}
\end{equation*}
Further, since $z\in J$, by definition of the dual potential for jobs $\tilde{\psi}(z) =\min\limits_{x \in I} \big( c(x,z) - \tilde{\phi}(x) \big) = \tilde{\phi}^c(z)$, where the second equality follows from the definition of the $c$-transform. We substitute this relationship into the previous expression for $\tilde{\phi}(x)$ to write
\begin{equation*}
\tilde{\phi}(x) = \min_{z \in J} \big( c(x,z) - \tilde{\phi}^c(z) \big) = \min_{z \in J} \big( c(x,z) - \tilde{\psi}(z) \big).
\end{equation*}

We can use the definition of the dual optimizers (\ref{e:hatphi}) to write that for all workers $x \in I$: 
\begin{align*}
\hat{\phi}(x)=\min \big( \min_{z\in J} \big( c(x,z) - \tilde{\psi}(z)\big) ,\, \min_{z\in I} \big( c(x,z)-\tilde{\psi}(z) \big)\big) = \min \big( \tilde{\phi}(x),\, \min_{z\in I}\big( c(x,z)-\tilde{\psi}(z) \big) \big)
\end{align*}
where the first equality follows by splitting the sets in (\ref{e:ctrans}) and the second equality follows from the equation above. 

Hence, we want to show for $(x,z) \in I$ the infimum is attained by $\tilde{\phi}(x)$. We show $c(x,z) \geq \tilde{\psi}(z) + \tilde{\phi}(x)$. This follows since the dual optimizer for all jobs is defined $\tilde{\psi}(z):=\min \limits_{x \in I} \big( c(x,z) - \tilde{\phi}(x)\big)$
for all $z \in I\cup J$.
\end{proof}


\noindent Having established the two claims, we next prove Theorem \ref{p:dual}. 

\begin{proof}[Proof of Theorem \ref{p:dual}] The proof is divided in three parts. We first show the inequality holds on $I\cup J$. To do so, we consider four cases:

\begin{enumerate}[noitemsep]
\item If $x\in I$ and $z\in J$, we have
\begin{align*}
    \phi(x)+\psi(z)&=\phi(x)-\phi(z)=\hat{\phi}(x)-\max_{x\in I\cup J}(\hat{\phi}(x)-c(x,z)) \leq c(x,z)
\end{align*}
because in the final step we subtract the maximum, but a feasible deduction is $\hat{\phi}(x)-c(x,z)$.
\item If $x,z \in I$, then by Lemma \ref{l:triangle} we have
\begin{align*}
    \phi(x)+\psi(z)=\phi(x)-\phi(z)&=\hat{\phi}(x)-\hat{\phi}(z) =\min_{y\in I\cup J}(c(x,y)-\tilde{\psi}(y))- \hspace{-0.2 cm} \min_{y\in I\cup J}(c(z,y)-\tilde{\psi}(y))\leq c(x,z) 
\end{align*}
where the final equality follows by (\ref{e:hatphi2}). The concluding inequality is obtained as follows. Suppose $y_0$ attains the infimum for the second term, the same $y_0$ may not attain the infimum for the first term but is feasible, so the left-hand side will be lower than when both terms are evaluated at $y_0$, or $\min\limits_{y\in I\cup J}(c(x,y)-\tilde{\psi}(y))-\min\limits_{y\in I\cup J}(c(z,y)-\tilde{\psi}(y))\leq c(x,y_0) - c(z,y_0)$. We combine the right-hand side with the triangle inequality of Lemma \ref{l:triangle} to write $c(x,z) + c(z,y_0) \geq c(x,y_0)$ or $c(x,z) \geq c(x,y_0)- c(z,y_0)$ to obtain the inequality. 
\item If $x\in J$ and $z \in I$. 
\begin{align*}
    \phi(x) + \psi(z) & = \max_{y\in I\cup J} \big( \hat{\phi}(y)-c(y,x) \big) - \hat{\phi}(z).
\end{align*}
We next want to show that this expression is less than $c(x,z)$. This is equivalent to showing that $\hat{\phi}(y)-c(y,x) \leq c(x,z) + \hat{\phi}(z)$ for all $y \in I\cup J$. To establish this, fix $y$, and evaluate: 
\begin{align*}
    \hat{\phi}(y) - \hat{\phi}(z) & = \min_{w\in I\cup J} \; ( c(y,w)-\tilde{\psi}(w) ) - \min_{w\in I\cup J} \; ( c(z,w)-\tilde{\psi}(w) )
\end{align*}
where the  equality follows from the definition of $\hat{\phi}$ in equation (\ref{e:hatphi2}). Let $w_0$ be the value that attains the infimum in the second term on the right, which is also feasible for the first term so that $\hat{\phi}(y) - \hat{\phi}(z) \leq c(y,w_0) - c(z,w_0)$. To bound this further, we use the triangle inequality of Lemma \ref{l:triangle} twice to write $c(y,w_0)- c(z,w_0) \leq c(y,z)$ as well as $c(y,z) \leq c(y,x) + c(x,z)$. Using the triangle inequalities, we thus write $\hat{\phi}(y) - \hat{\phi}(z) \leq c(y,x) + c(x,z)$, which is what we wanted to show since $y$ is arbitrary.

\item If worker and job $x,z\in J$, use (\ref{e:hatpsi2}) to write
\begin{align*}
    \phi(x) + \psi(z) & = \max\limits_{y\in I\cup J}(\hat{\phi}(y)-c(y,x)) - \max\limits_{y\in I\cup J}(\hat{\phi}(y)-c(y,z)).
\end{align*}
To bound the right-hand side, let $y_0$ denote the value that attains the supremum in the first term, which is also feasible for the second term. Hence, the right-hand side is bounded above by $-c(y_0,x) + c(y_0,z)$. By the triangle inequality of Lemma \ref{l:triangle} it follows that $-c(y_0,x) + c(y_0,z) \leq c(x,z)$ and hence we have $\phi(x) + \psi(z) \leq c(x,z)$.
\end{enumerate}

\noindent The second part of the proof shows that the equality holds everywhere on $I\cup J$ with respect to the optimal assignment $\pi$. We distinguish two cases:
\begin{enumerate}[noitemsep]
    \item The worker is perfectly matched to their job, or $(x,z)\in\{(x,x):x\in\R\}$. Since the dual functions are defined as $\psi(x) = - \phi(x)$ we have $\phi(x)+\psi(x)=0$. As a result, $\phi(x)+\psi(x)=0=c(x,x)$, as the cost of mismatch is zero.
    \item The worker is mismatched in their job, or $(x,z)\in \Gamma_\pi$, implying worker $x\in I$ and job $z\in J$. Using definition (\ref{e:hatpsi2}), $\psi(z) = - \phi(z)$,
\begin{align*}
    \phi(x)+\psi(z)&=\hat{\phi}(x)-\max_{y\in I\cup J}(\tilde{\phi}(y)-c(y,z)).
\end{align*}
By Lemma \ref{lemma:tildepsi}, $\hat{\phi}(x)+\tilde{\psi}(z)\leq c(x,z)$ for all $(x,z)$. In particular, for a given job $z$, $\hat{\phi}(x)+\tilde{\psi}(z)\leq c(x,z)$ for all $x$, and $\min\limits_{x\in I\cup J} (c(x,z) -\hat{\phi}(x) ) \geq \tilde{\psi}(z)$, or, equivalently, $-\max\limits_{x\in I\cup J}\big(\hat{\phi}(x)-c(x,z)\big) \geq \tilde{\psi}(z)$. Combining this inequality with the previous expression, we obtain the inequality 
\begin{align*}
    \phi(x)+\psi(z)& \geq \hat{\phi}(x) + \tilde{\psi}(z) = c(x,z)
\end{align*}
where the final equality follows by Lemma \ref{lemma:tildepsi}. Since we have shown the opposite inequality above in the first case of the first part of this proof, we obtain that $\phi(x)+\psi(z)= c(x,z)$.
\end{enumerate}

\noindent In the third part of the proof we further establish that the dual inequality $\phi(x)+\psi(z)\leq c(x,z)$ holds when $x\in K$ or $z\in K$. There are three cases.
\begin{enumerate}[noitemsep]
    \item $x\not\in K,~z\in K$. For any $x'\not\in K$, we have $\phi(x)+\psi(x') = \phi(x) - \phi(x') \leq c(x,x')$ when $x\not\in K$ by the first part of this proof. Following the triangle inequality of Lemma \ref{l:triangle},
    $\phi(x)-\phi(x') \leq c(x,x')\leq c(x,z)+c(z,x')$, giving $\phi(x)- c(x,z) \leq c(z,x') + \phi(x') = c(z,x') - \psi(x')$. Taking infimum over $x'\in I\cup J$ gives $\phi(x)-c(x,z)\leq \phi(z)=-\psi(z)$ using the definition of $\phi$.
    
    \item $x\in K,~z\not\in K$. For any $x'\in I\cup J$, by the definition of the wage function (\ref{eq:defK}), we have that
    $\phi(x) = \min\limits_{x' \in I \cup J} \big(c(x,x') - \psi(x')\big)$, such that $\phi(x) - \phi(x') \leq c(x,x') \leq c(x,z)+c(z,x')$, where the final step follows by the triangle inequality of Lemma \ref{l:triangle}. Alternatively, we write $\phi(x) - c(x,z) \leq  c(z,x') - \psi(x')$. Taking infimum in $x'\in I\cup J$ gives $\phi(x)+\psi(z)\leq c(x,z)$ using the definition of $\phi$.
    \item $x,z\in K$. We want to establish 
    $\phi(x)+\psi(z)\leq c(x,z)$. Using the definitions of the dual potentials in \eqref{eq:defK}, 
\begin{equation*}
\phi(x)+\psi(z) = \min \limits_{x' \in I \cup J} \big( c(x,x') - \psi(x') \big) + \max \limits_{x' \in I \cup J} \big( - c(z,x') + \psi(x') \big) .
\end{equation*}
    Suppose the maximum in the second term is attained by the worker value $x_0$, and also evaluate the first term at $x_0$ where it may not attain the minimum, implying $\phi(x)+\psi(z) \leq c(x,x_0) - c(z,x_0)$. By the triangle inequality $c(x,x_0) - c(z,x_0) \leq c(x,z) $ and hence it indeed follows that $\phi(x)+\psi(z)\leq c(x,z)$.
\end{enumerate}
By observing that $\phi=-\psi$ on the set $K$, the equality $\phi(x)+\psi(z)=0= c(x,z)$ holds when $x,z\in K$ and $(x,z)\in\Gamma_\pi$. This completes the proof in view of Lemma \ref{lemma:dual}.
\end{proof}

\subsection{Proof of Theorem \ref{p:cs}} \label{a:cs}

We provide a proof of Theorem \ref{p:cs}. To do so, we first provide a formal definition of the concordance order, and then proceed to establish two intermediary results. 

\vspace{0.35 cm}
\noindent The distribution function $\pi$ is smaller in concordance order than $\hat{\pi}$, written $\pi \preceq \hat{\pi}$, if for any cutoff coordinate $(x_c,z_c)$ we have $\pi((-\infty,x_c]\times(-\infty,z_c])\leq \hat{\pi}((-\infty,x_c]\times(-\infty,z_c])$ and $\pi([x_c,\infty)\times[z_c,\infty))\leq \hat{\pi}([x_c,\infty)\times[z_c,\infty))$.\footnote{By definition, two measures are comparable in concordance order only when they have the same pair of marginal distributions, making the concordance order a natural tool to compare different assignments. For example, when $\pi$ is the negative sorting or $\hat{\pi}$ is the positive sorting, $\pi \preceq \hat{\pi}$. We observe that $\preceq$ is a partial order on the probability measures with fixed marginals.}

%
%
%
%
%

\begin{lemma}{\textit{Local Cyclical Monotonicity.}} \label{lemma:cyclical}
Suppose $\pi_\ell$ is an assignment between distributions $F_\ell$ and $G_\ell$ on a given layer satisfying the non-crossing property. Then $\pi_\ell$ is an optimal assignment if and only if the following holds:
\begin{enumerate}[noitemsep]
    \item For any arc $(x_0,z_0)$ in $\pi_\ell$ and subpairs $\{ (x_i,z_i) \}^p_{i = 1}$ of $(x_0,z_0)$, $\pi_\ell$ is optimal on the assignment problem with workers $\{ x_i \}_{i=0}^p$ and jobs $\{ z_i \}_{i=0}^p$;
    \item For exposed arcs $\{ (\tilde{x}_i,\tilde{z}_i) \}^{\tilde{p}}_{i = 1}$ in $\pi_\ell$, $\pi_\ell$ is optimal on the assignment problem with workers $\{ \tilde{x}_i \}_{i=0}^p$ and jobs $\{ \tilde{z}_i \}_{i=0}^p$.
\end{enumerate} 
\end{lemma}

\begin{proof}The ``only if" direction follows directly by cyclical monotonicty. The proof shows the ``if" direction, where we iteratively eliminate concealed pairs to construct an optimal assignment between $F_\ell$ and $G_\ell$, and show that such an assignment is precisely $\pi_\ell$.

At each step of the procedure, consider pairs of the assignment $\pi_\ell$ whose subpairs contain no further subpairs. These pairs are mutually disjoint since the assignment satisfies the non-crossing property. Take such a pair $(x_0,z_0)$ with subpairs $\{ (x_i,z_i) \}^p_{i = 1}$. The skill interval $\mathcal I_0$ formed by pairing $(x_0, z_0)$ consists precisely of the workers $x_i$ and jobs $z_i$, where $0\leq i\leq p$. By assumption, $\pi_\ell|_{\mathcal I_0}$ is an optimal assignment between these workers and jobs. Observe that all pairs $\{ (x_i,z_i) \}^p_{i = 1}$ are therefore concealed under $(x_0,z_0)$ in this assignment. By Lemma \ref{l:hidden} in the Technical Appendix, there exists an optimal sorting between $F_\ell$ and $G_\ell$ that contains the pairs $\{ (x_i,z_i) \}^p_{i = 1}$. Since $\pi_\ell$ will contain the pairs $\{ (x_i,z_i) \}^p_{i = 1}$, we remove those pairs from our consideration, that is, we replace $F_\ell$ by $F_\ell\setminus\{x_i\}^p_{i = 1}$ and $G_\ell$ by $G_\ell\setminus\{z_i\}^p_{i = 1}$. 

Since $\pi_\ell$ has finitely many arcs, we can continue this procedure until we have nothing left but exposed arcs $\{ (\tilde{x}_i,\tilde{z}_i) \}^{\tilde{p}}_{i = 1}$ in the assignment $\pi_\ell$. By our assumption, the exposed arcs are locally optimal in $\pi_\ell$, 
and hence we conclude $\pi_\ell$ is an optimal assignment between the distribution of workers $F_\ell$ and the distribution of jobs $G_\ell$.
\end{proof}

\begin{lemma}
    \label{l:positive}
    Suppose cost function $c(x,z)$ is of the concave form \eqref{eq:cxz} and for some increasing and convex $\kappa$, $\hat{c}(x,z)=\kappa(c(x,z))$ is also of the concave form \eqref{eq:cxz}. On a layer $\ell$, if positive sorting is optimal with cost $c$, then it is optimal with cost $\hat{c}$.
\end{lemma}

\begin{proof}
By contradiction, suppose positive sorting is not optimal for the cost function $\hat{c}$. Then there exists an optimal assignment for cost function $\hat{c}$ that contains negatively sorted pairs $(x,z)$ and $(x',z')$ where $x<x'$ and $z'<z$ and is such that
\begin{align}
    \hat{c}(x,z')+\hat{c}(x',z)> \hat{c}(x,z)+\hat{c}(x',z').\label{eq:4c'}
\end{align}

Without loss of generality, we consider $x<z$, so that by the non-crossing property, and since $(x,z)$ and $(x',z')$ are negatively sorted, we either have $x<z'<x'<z$, or we have $x<x'<z'<z$. When $x<x'<z'<z$, the non-crossing principle directly contradicts that positive sorting is optimal under costs $c$. 


Suppose $x<z'<x'<z$, and under the original cost function $c$, we have by assumption that
\begin{align}
    c(x,z')+c(x',z)\leq c(x,z)+c(x',z').\label{eq:4c}
\end{align} 
Observe $\max (c(x,z'),c(x',z))\leq c(x,z)$, and that by optimality of negative sorting in \eqref{eq:4c'},  $\hat{c}(x',z')\le \min(\hat{c}(x,z'),\hat{c}(x',z))$, and hence, ${c}(x',z')\leq \min({c}(x,z'),{c}(x',z))$ since the function $\kappa$ is increasing. By combining the two previous inequalities ${c}(x',z')\leq \min({c}(x,z'),{c}(x',z)) \leq \max (c(x,z'),c(x',z))\leq c(x,z)$.

To arrive at a contradiction, choose some weight $\lambda, \beta \in [0,1]$ to average the minimum and maximum cost such that:
\begin{align*}
c(x',z) & = \lambda c(x',z') + (1-\lambda) c(x,z); \\
c(x,z') & = \beta c(x',z') + (1-\beta) c(x,z).
\end{align*}
By adding these two equalities and comparing to inequality (\ref{eq:4c}), it has to be true that $\lambda + \beta \geq 1$. Next, we use the convexity of the function $\kappa$ to establish 
\begin{align*}
\kappa(c(x',z)) & = \kappa \big( \lambda c(x',z') + (1-\lambda) c(x,z) \big) \leq  \lambda \kappa(c(x',z')) + (1-\lambda) \kappa(c(x,z)); \\
\kappa(c(x,z')) & = \kappa \big( \beta c(x',z') + (1-\beta) c(x,z) \big) \leq  \beta \kappa(c(x',z')) + (1-\beta) \kappa(c(x,z)).
\end{align*}
By adding these two inequalities, we write:
\begin{align*}
\hat{c}(x',z) + \hat{c}(x,z') & \leq (\lambda + \beta) \hat{c}(x',z') + (2 - (\lambda + \beta)) \hat{c}(x,z) \leq \hat{c}(x',z') + \hat{c}(x,z)
\end{align*}
where the final equality follows as $\lambda + \beta \geq 1$ as well as $\hat{c}(x',z') \leq \hat{c}(x,z)$ because $c(x',z') \leq c(x,z)$. This contradicts \eqref{eq:4c'}, concluding the proof.  \end{proof}

\noindent Using the two Lemmas we prove Theorem \ref{p:cs}.

\begin{proof}[Proof of Theorem \ref{p:cs}]
Since the marginal distributions of workers $F$ and jobs $G$ are fixed, the optimal assignments $\pi$ and $\hat{\pi}$ have the same layering structure. Hence, we fix a layer $\ell$ and consider the assignment problem between workers $F_\ell$ and jobs $G_\ell$ in that layer. We treat $\pi_\ell$ and $\hat{\pi}_\ell$ as assignments for layer $\ell$.

If $\pi_\ell$ is an optimal assignment with cost $\hat{c}$, the result directly follows when we pick $\hat{\pi}_\ell=\pi_\ell$. Suppose instead $\pi_\ell$ is not optimal with cost $\hat{c}$. Then, by Lemma \ref{lemma:cyclical}, there are two possibilities:
\begin{enumerate}[noitemsep]
    \item There exists a pair $(x_0,z_0)$ of $\pi_\ell$ with subpairs $\{ (x_i,z_i) \}^p_{i = 1}$ such that the locally optimal assignment with workers $x_i$ and jobs $z_i$, where $0\leq i\leq p$, can be improved for cost $\hat{c}$.
    \item The locally optimal assignment on the set of exposed arcs $\{ (\tilde{x}_i,\tilde{z}_i) \}^{\tilde{p}}_{i = 1}$ can be improved for cost $\hat{c}$.
\end{enumerate} 
In both cases, the assignment that improves upon $\pi$ also satisfies the non-intersecting property and has a strictly smaller total cost of skill gaps. The improved assignment may still not be optimal. However, by iterating the procedure outlined in this paragraph, and since $F_\ell$ and $G_\ell$ are finite, we eventually reach an optimal assignment for the mismatch costs $\hat{c}$. In other words, step-by-step improvements on the local assignment problems give an optimal assignment. 

Before proceeding, we observe that the second case can be directly ruled out by \Cref{l:positive}. Because the set of exposed arcs $\{ (\tilde{x}_i,\tilde{z}_i) \}^{\tilde{p}}_{i = 1}$ is positively sorted, there is no way to improve this further since whenever positive sorting is optimal for the concave costs, positive sorting is also optimal for the less concave costs by \Cref{l:positive}. Hence, we focus on the first case in the remainder of this proof.

By the transitivity of the concordance order, it suffices to show that for each improvement step (on a pair and its subpairs), the concordance order is increased.

To show that for each improvement step, the concordance order is increased, fix a pair $(x_0,z_0)$ with subpairs $\{ (x_i,z_i) \}^p_{i = 1}$ as in the top panel of Figure \ref{f:cc'}, and assume they form a locally optimal assignment $\pi_c$ for the mismatch cost function $c$. Without loss of generality, take $x_0<z_0$. Since we are working on a single layer, we must have $x_0<z_1<x_1<\dots<z_p<x_p<z_0$.   

We claim that there exists some locally optimal assignment $\pi_{\hat{c}}$ for the more linear mismatch cost $\hat{c}$ with the following structure. For some $q\geq 1$, $\pi_{\hat{c}}$ consists of pairs $\{ (\tilde{x}_i,\tilde{z}_i) \}^q_{i=1}$ that are not contained in any pair, and for each $i$, optimal sorting on the interval $(\tilde{x}_i,\tilde{z}_i)$ is positive, that is, $\pi_{\hat{c}}|_{(\tilde{x}_i,\tilde{z}_i)}$ is the positive sorting. This configuration is shown in the bottom panel of Figure \ref{f:cc'}. When this claim is true, it is straightforward to verify that $\pi_c \preceq \pi_{\hat{c}}$. Indeed, the positive sorting patterns within each interval $(\tilde{x}_i,\tilde{z}_i)$, for all $1\leq i\leq q$ coincide under the assignments $\pi_c$ and $\pi_{\hat{c}}$, and the remaining part is positive sorting for $\pi_{\hat{c}}$, where we recall that the positive sorting has the largest concordance order among all assignments. 

To prove the above claim, suppose that in assignment $\pi_{\hat{c}}$, the worker $x_0$ is optimally paired to job $z_k$ for some $k$. Consider the sorting problem between workers and jobs on the skill interval $(x_0,z_k)$. Since $\pi_c$ is optimal with the cost of skill gaps $c$,  positive sorting is locally optimal on $(x_0,z_k)$ given the cost of skill gaps $c$. By Lemma \ref{l:positive}, positive sorting is also locally optimal on $(x_0,z_k)$ for the cost $\hat{c}$. Therefore, $\pi_{\hat{c}}$ consists of positive sorting on the region $(x_0,z_k)$. 

We continue this procedure to the right, that is, we start with worker $x_{k+1}$, and repeat the above argument. Continuing this procedure thus yields the desired structure for $\pi_{\hat{c}}$, meaning that within each exposed arc there is positive sorting.\end{proof}

\vspace{0.15 cm}
\noindent We show that the sufficient conditions of \citet{Anderson:2022} for sorting to be more positive as the output function changes do not apply in our setting in Technical \Cref{s:andersonsmith}.

\subsection{Proof of \Cref{t:tolinear}}\label{App:linearcost}

After maximizing perfect pairs, by Lemma \ref{l:common}, we can restrict attention to assignments between worker and job distributions that are supported on disjoint sets. This means that the distributions $F$ and $G$ are supported on a finite set $S$, and we denote by $\delta$ the smallest pairwise distance between elements in $S$, and by $D$ the largest pairwise distance between elements.\footnote{This proof can be extended to the case of continuous distributions when $F$ and $G$ are compactly supported with the measure of underqualification $H:=F-G$ satisfying that both $H$ and $-H$ have finitely many local maxima and those maxima are strictly above zero.}



We show there exists $0<\bar{\zeta}<1$ such that for any $\zeta_p, \zeta_u \in[\bar{\zeta},1]$, the layered positive assignment $\pi$ is optimal with respect to the mismatch cost $c(x,z)$.  To prove the result, consider $\bar{\zeta}$ such that for any pair $(\delta',D')\in\{(\delta',D')\mid \delta\leq \delta'\leq D'\leq D,~D'-\delta'\geq 2\delta\}$:
\begin{align}
2^{1-\bar{\zeta}}(D'-\delta')^{\bar{\zeta}} \leq D'^{\bar{\zeta}}.\label{eq:p0}
\end{align}
Such a $\bar{\zeta}$ exists because both sides of \eqref{eq:p0} are uniformly continuous in $\bar{\zeta}$ on $\{(\delta',D')\mid \delta\leq \delta'\leq D'\leq D,~D'-\delta'\geq 2\delta\}$ and ``$<$" holds uniformly when $\bar{\zeta}=1$. Consider $\zeta_p,\zeta_u \in [\bar{\zeta},1]$. It suffices to prove that the optimal assignment within a layer does not contain any nested arc for the mismatch cost 
\begin{equation}
c(x,z)=\begin{cases}
    B_p(z-x)^{\zeta_p} &\text{ if }z\geq x;\\
    B_u(x-z)^{\zeta_u} &\text{ if }z<x.
    \end{cases}
\end{equation}

By the principle of layering in Lemma \ref{l:layer}, we decompose both measures $F_n$ and $G_n$ into layers. Here we assume without loss of generality that the lowest skill worker comes before the lowest skill job: $x_1<z_1$. On each layer there are $2 k$ equal masses on the skill levels $x_1 < z_1 < \dots < x_k < z_k$ in $S$ that are at least $\delta$ apart. The maximum distance within this layer is exceeded by $D \geq z_k - x_1$. Let $\{x_j\}_{1\leq j\leq k}$ be the locations of mass on the layer for workers $F_n$, and let $\{z_j\}_{1\leq j\leq k}$ be the locations of mass on the layer for jobs $G_n$. By contradiction, suppose that the optimal assignment within this layer instead does contain a nested arc, so it holds for some $x_1\leq u<v<s<t\leq z_k$ $-$ where we assume $t$ and $v$ are jobs and $u$ and $s$ are workers $-$ that:
\begin{equation*}
B_p(t-u)^{\zeta_p} + B_u(s-v)^{\zeta_u} \leq B_p(t-s)^{\zeta_p} + B_p(v-u)^{\zeta_p}.
\end{equation*}
By concavity of the function $x\mapsto B_px^{\zeta_p}$ for $ x \geq 0$,
\begin{equation*}
B_p(t-s)^{\zeta_p} + B_p(v-u)^{\zeta_p}\leq 2B_p\left(\frac{1}{2}(t-s)+\frac{1}{2}(v-u)\right)^{\zeta_p}=2^{1-{\zeta_p}}B_p\big((t-u)-(s-v)\big)^{\zeta_p}.
\end{equation*}
Putting $\zeta=\zeta_p$, $t-u=D'$, and $s-v=\delta'$ in equation \eqref{eq:p0} leads to a contradiction.
Hence the optimal assignment within this layer does not contain any nested arc for $\zeta_p \in [\bar{\zeta},1]$. 

For the alternative case $-$ where $t$ and $v$ are workers and $u$ and $s$ are jobs $-$ the conclusion follows from the exact same steps, with the subscripts on $B$ and $\zeta$ interchanged in the previous paragraph. 

\vspace{0.2cm}
\noindent The implication of the proposition is that for mismatch power values close to one, the solution can be directly constructed by constructing the measure of
underqualification, and constructing the positive alternating assignment by layer.\footnote{\citet{Juillet:2020} calls the layered positive assignment an excursion coupling and shows  that the layered positive assignment is
the limit of some optimal couplings as $\zeta\to 1^-$. We complement their result by proving the existence of a threshold $\bar{\zeta}$ beyond which the layered positive assignment is optimal for our environment.}
While this assignment generates positive sorting within each
layer, we emphasize this does not imply positive sorting overall.


\newpage
\renewcommand{\theequation}{B.\arabic{equation}} \setcounter{equation}{0}
\renewcommand{\thefigure}{B.\arabic{figure}}\setcounter{figure}{0}
\renewcommand{\thetable}{B.\arabic{table}}\setcounter{table}{0}
\setcounter{page}{1}
\newpage

\begin{center}
{\Large Composite Sorting \\}
\bigskip
{\Large Technical Appendix \\}
\bigskip
{\large  Job Boerma, Aleh Tsyvinski, Ruodu Wang, and Zhenyuan Zhang \\}
\bigskip
{\large May 2025}
\end{center}
\vspace{0.8cm}

\section{Additional Results}

In this appendix, we present additional technical results.

\subsection{General Production Function} \label{s:generalprod}

In this section, we present two generalizations of the model.

\subsubsection{Concave Distance Function}

Using Legendre transformations, we show that the indirect output function is generally a strictly concave function in mismatch given strictly convex cost functions. To be specific, let $\Psi$ capture the cost function $\Psi_p$ or $\Psi_u$ in Section \ref{s:environment}, and use $d:=|x-z|$ to denote the distance to obtain 
\begin{equation}
\mathcal{C}(d) = \min_{\gamma \geq 0} \; \big( \gamma d + \Psi(\gamma) \big).
\end{equation}
This problem has a unique solution characterized by $d = - \Psi'(\gamma)$. From the envelope condition, we obtain $\mathcal{C}'(d)= \gamma > 0$, showing that the cost function is strictly increasing in the distance.

To characterize the second derivative, we write the cost minimization problem as a maximization problem of the form:
\begin{equation*}
\hat{\mathcal{C}}(d) = \max_{\gamma \geq 0} \; \big( - \gamma d - \Psi(\gamma) \big),
\end{equation*}
where $\hat{\mathcal{C}} = - \mathcal{C}$, which shows $\hat{\mathcal{C}}$ is the Legendre transformation of the strictly convex function $\Psi$. Since the Legendre transformation of a strictly convex function is also strictly convex, the indirect cost function $\mathcal{C}$ is a strictly concave function of the distance.  As a result, choosing an assignment to maximize:
\begin{equation*}
y(x,z) = z + x - \mathcal{C}(|x-z|)
\end{equation*}
where $\mathcal{C}$ is now our concave cost distance function.


\subsubsection{Asymmetric Distance Function}

Next, we incorporate differential distance functions for both $x - z > 0$ and $x - z < 0$. This is a trivial extension, let $\bar{\Psi}$ denote the cost function for $x - z > 0$ and $\underline{\Psi}$ denote the cost function for $x - z < 0$. In this case, the cost minimization problem is:
\begin{equation*}
\bar{\mathcal{C}}(d) = \min_{\gamma \geq 0} \; \big( \gamma d + \bar{\Psi}(\gamma) \big)
\end{equation*}
when $d > 0$. By the same arguments on the Legendre transformation, this gives rise to a strictly concave function of the distance $\bar{\mathcal{C}}(d)$. Analogously, when $d < 0$, we generically obtain a distinct strictly concave function of the distance $\underline{\mathcal{C}}(d)$. As a result, we choose an assignment to maximize:
\begin{equation*}
y(x,z) = z + x - \bar{\mathcal{C}}(\{x-z \}_+) - \underline{\mathcal{C}}(\{z-x \}_+).
\end{equation*}

\subsection{Uniqueness of Optimal Sorting} \label{s:uniquenesssorting}

In this appendix, we discuss the uniqueness of optimal sorting for the mismatch cost function (\ref{eq:cxz}).

\begin{proposition}
    For any fixed distributions of workers $F$ and of jobs $G$, the set of $(\zeta_p,\zeta_u)\in (0,1)^2$ where the optimal assignment is not unique has Lebesgue measure zero.
\end{proposition}

\begin{proof}
First, recall that every optimal assignment has the non-crossing property and the layering structure. In other words, the non-uniqueness problem arises only when we solve the assignment problems in each layer. Since layering does not depend on the cost function and there are finitely many layers, we consider without loss a fixed layer $\ell$ with $2n_\ell$ points, with an alternating pattern in the marginals $F_\ell$ and $G_\ell$.

By Birkhoff's theorem \citep{Birkhoff:1946}, every assignment between $F_\ell$ and $G_\ell$ is a mixture of bijective assignments. Therefore, it suffices to restrict to the set of bijective matchings between $F_\ell$ and $G_\ell$. Since $F_\ell$ and $G_\ell$ have finite support, there exist finitely many assignments, and hence it suffices to prove that for any two assignments $\pi$ and $\pi'$, their costs equal on a set  $(\zeta_p,\zeta_u)$ of measure zero. In the following we fix $\pi$ and $\hat{\pi}$. Their respective costs are of the form $\mathcal{C}(\pi)=\sum\limits_{j=1}^{n_\ell} \big( a^{\vphantom{\zeta}}_jd_j^{\zeta_p}+b^{\vphantom{\zeta}}_je_j^{\zeta_u}\big)$ and $\mathcal{C}(\hat{\pi})=\sum\limits_{j=1}^{n_\ell} \big( \hat{a}^{\vphantom{\zeta}}_j \hat{d}_j^{\zeta_p}+\hat{b}^{\vphantom{\zeta}}_j \hat{e}_j^{\zeta_u} \big)$ for some $d_j,e_j,\hat{d}_j,\hat{e}_j\geq 0$. Equating $\C(\pi)=\C(\hat{\pi})$ leads to an equation of the form 
\begin{equation}
     \sum_{j=1}^{2 n_\ell}a^{\vphantom{\zeta}}_j d_j^{\zeta_p}=\sum_{j=1}^{2 n_\ell}b^{\vphantom{\zeta}}_j e_j^{\zeta_u},\label{eq:solve for}
\end{equation}
    where $d_j,e_j\geq 0$. Note that both sides are constant zero only if $F_\ell=G_\ell$, which is not feasible.

We next establish that the set of solutions $(\zeta_p,\zeta_u)\in (0,1)^2$ to \eqref{eq:solve for} has zero Lebesgue measure. For a fixed value $\zeta_p$, the equation \eqref{eq:solve for} has finitely many solutions $\zeta_u\in(0,1)$ (see \citet{Tossavainen:2006}). Similarly, for a fixed value $\zeta_u$ it has finitely many solutions $\zeta_p\in(0,1)$. Clearly, the zero set to equation \eqref{eq:solve for} is a measurable set. Therefore, Fubini's theorem applied to the indicator function yields that such a set must have zero measure.   
\end{proof}

\subsection{Equilibrium} \label{s:equilibriumdefn}

We formally define an equilibrium for this economy.

\vspace{0.45 cm}
\noindent To define an equilibrium, we specify the firm problem and the worker problem. A firm with job $z$ employs a worker $x$ to maximize profits taking the wage schedule $w$ as given. The firm problem is: 
\begin{align}
v(z)=\max_{x \in X} \hspace{0.2 cm}  y(x,z)-w(x) .\label{e:firm_problem}
\end{align}
Taking firm compensation $v$ as given, worker $x$ chooses to work in occupation $z$ to maximize wage income: 
\begin{align}
w(x)=\max_{z \in Z} \hspace{0.2 cm} y(x,z)-v(z) .\label{e:worker_problem}
\end{align}
\textbf{Equilibrium}. An equilibrium is a wage function $w$, a firm value function $v$, and a feasible assignment $\pi$, such that firms solve their profit maximization problem (\ref{e:firm_problem}), workers solve the worker problem (\ref{e:worker_problem}), and a feasibility constraint is satisfied
\begin{equation}
\int y(x,z ) \,\text{d} \pi = \int w(x) \,\text{d} F + \int v(z) \,\text{d} G, 
\end{equation}
which states that the total quantity of output produced, $\int y(x,z )\, \text{d} \pi$, equals the total quantity of output distributed to workers and jobs.

\subsection{Layering } \label{p:layer}

We observe that each layer consists of an alternating configuration of workers
and jobs, that is, either $x_{1}<z_{1}<x_{2}<z_{2}<\dots< x_{n}<z_{n}$ (for layers above 0) or $z_{1}<x_{1}<z_{2}<x_{2} <\dots <z_{n}<x_{n}$ (for layers below 0). We define an alternating assignment problem as an assignment
problem between $n$ workers and $n$ jobs, where workers and jobs are arranged in increasing order, and alternating
such that every worker skill level is followed by a job difficulty level, except for the last one. Let $F_{\ell}$ and $G_{\ell}$ be the measures of the workers and the jobs in each layer.

After providing a decomposition into layers with alternating configurations,
Lemma \ref{l:layer} shows how to solve the full assignment problem using the solutions to the assignment problem
within each layer. To prove Lemma \ref{l:layer}, we make use of the following result due to \citet{Villani:2009}. We repeat the result here for completeness.

\begin{lemma}{\textit{Stability of Optimal Assignment}.}\label{lemma:stability} Let $c(x,z)$ be a continuous non-negative cost function, and $\{F_n\}_{n\in\N},\,\{G_n\}_{n\in\N}$ be sequences of distributions of workers and jobs, respectively. Suppose $F_n\to F,\,G_n\to G$ weakly for some $F,\,G$,\footnote{This means that $F_n\to F$ on continuity points of $F$ and respectively for $G$.} and let $\pi_n$ be an optimal assignment between $F_n$ and $G_n$. If $\pi_n\to \pi$ in distribution, then $\pi$ is an optimal assignment between $F$ and $G$.
\end{lemma}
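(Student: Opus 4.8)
The plan is to check two things: that the limit $\pi$ is a feasible coupling of $F$ and $G$, and that it is cost-minimizing among such couplings. Feasibility is the easy part: the coordinate projections $(x,z)\mapsto x$ and $(x,z)\mapsto z$ are continuous, so the weak convergence $\pi_n\to\pi$ forces the first and second marginals of $\pi_n$ to converge weakly to those of $\pi$. Since the marginals of $\pi_n$ are $F_n$ and $G_n$, which by assumption converge weakly to $F$ and $G$, uniqueness of weak limits yields $\pi\in\Pi(F,G)$.

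For optimality, fix an arbitrary competitor $\tilde\pi\in\Pi(F,G)$ with $\int c\,\d\tilde\pi<\infty$ (otherwise there is nothing to prove) and aim to show $\int c\,\d\pi\le\int c\,\d\tilde\pi$. Two ingredients combine. First, because $c\ge 0$ is continuous, the functional $\mu\mapsto\int c\,\d\mu$ is lower semicontinuous under weak convergence: writing $c=\sup_M(c\wedge M)$ with each $c\wedge M$ bounded and continuous, one has $\int(c\wedge M)\,\d\pi=\lim_n\int(c\wedge M)\,\d\pi_n\le\liminf_n\int c\,\d\pi_n$, and letting $M\to\infty$ gives $\int c\,\d\pi\le\liminf_n\int c\,\d\pi_n$. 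Second, I would produce couplings $\tilde\pi_n\in\Pi(F_n,G_n)$ with $\int c\,\d\tilde\pi_n\to\int c\,\d\tilde\pi$; optimality of $\pi_n$ then gives $\int c\,\d\pi_n\le\int c\,\d\tilde\pi_n$ for every $n$, and chaining the inequalities yields $\int c\,\d\pi\le\liminf_n\int c\,\d\pi_n\le\liminf_n\int c\,\d\tilde\pi_n=\int c\,\d\tilde\pi$, which is the desired bound.

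The construction of $\tilde\pi_n$ is the technical core. On the real line, let $\sigma_n$ be the monotone (quantile) coupling of $F_n$ and $F$, and $\tau_n$ the monotone coupling of $G$ and $G_n$; since $F_n\to F$ and $G_n\to G$ weakly, the quantile functions converge Lebesgue-almost everywhere, so $\sigma_n$ and $\tau_n$ converge weakly to couplings concentrated on the diagonal. Using the gluing lemma I would paste $\sigma_n$, $\tilde\pi$, and $\tau_n$ together along their shared marginals $F$ and $G$ into a measure $\Lambda_n$ on $\R^4$ with marginals $(F_n,F,G,G_n)$, and set $\tilde\pi_n$ to be the push-forward of $\Lambda_n$ onto the first and last coordinates, so that $\tilde\pi_n\in\Pi(F_n,G_n)$. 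The family $\{\Lambda_n\}$ is tight, since each of its one-dimensional marginals is weakly convergent hence tight; its $(2,3)$-marginal equals $\tilde\pi$ for every $n$; and any subsequential weak limit is concentrated on $\{x=x',\,z=z'\}$. Hence every subsequential limit of $\tilde\pi_n$ equals $\tilde\pi$, i.e.\ $\tilde\pi_n\to\tilde\pi$ weakly.

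The main obstacle is upgrading this weak convergence to convergence of the cost integrals $\int c\,\d\tilde\pi_n\to\int c\,\d\tilde\pi$: lower semicontinuity only gives the inequality in the unhelpful direction, so one needs uniform integrability of $c$ along $\{\tilde\pi_n\}$. In the setting of this paper all distributions in play are supported in one fixed compact interval, so $c$ is bounded and uniformly continuous on the relevant square and this convergence is automatic; more generally it follows from a domination bound $c(x,z)\le a(x)+b(z)$ with $a\in L^1(F)$ and $b\in L^1(G)$, together with a truncation argument, the monotone couplings keeping the relevant tail masses under control. An alternative route that avoids the competitor construction is to observe that each support $\Gamma_{\pi_n}$ is $c$-cyclically monotone, that $c$-cyclical monotonicity is preserved under passing to the Kuratowski limit of supports (here continuity of $c$ is used), so $\Gamma_\pi$ is $c$-cyclically monotone, and then to invoke the standard theorem that a coupling with $c$-cyclically monotone support and finite total cost is optimal.
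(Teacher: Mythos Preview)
The paper does not supply its own proof of this lemma: it simply states the result and attributes it to Villani (2009), using it as a black box in the proof of the layering lemma. So there is no ``paper's proof'' to compare against line by line.

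That said, your argument is essentially correct and follows one of the standard routes. The feasibility step and the lower-semicontinuity step are fine. Your competitor construction via monotone couplings and the gluing lemma is the right idea, and you correctly flag the only real hazard, namely upgrading $\tilde\pi_n\to\tilde\pi$ weakly to $\int c\,\d\tilde\pi_n\to\int c\,\d\tilde\pi$; as you note, in this paper's setting all supports lie in a fixed compact interval, so $c$ is bounded continuous on the relevant square and the issue evaporates. Your alternative route via preservation of $c$-cyclical monotonicity under Kuratowski limits of supports is in fact closer to Villani's own proof of the general stability theorem (Theorem~5.20 in \emph{Optimal Transport: Old and New}), so either line would be acceptable here.
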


Given the measures of workers and jobs in each layer, we next observe that the worker and job distributions are supported on disjoint sets and on a finite set of skills $\{s_j\}_{1 \leq j \leq S}$. We smooth both the discrete distribution of workers and the discrete distribution of jobs by replacing each atom in the worker and job distribution at level $s_j$ by a uniform distribution on $[s_j,s_j + \varepsilon]$ with the same mass for every $1\leq j\leq S$, where $\varepsilon$ is small enough such that the intervals $[s_j,s_j+\varepsilon]$ for all $1\leq j\leq S$ do not intersect. In Figure \ref{f:tilt}, we provide an illustration of this procedure. We denote the smoothed measure of workers by $F_\varepsilon$ and the smoothed measure of jobs by $G_\varepsilon$, and the corresponding underqualification measure by $H_\varepsilon$. An optimal assignment given worker measure $F_\varepsilon$ and job measure $G_\varepsilon$ is given by $\pi_\varepsilon$. Since the mismatch cost $c$ is continuous, by the stability of the optimal transport, $\pi_\varepsilon \to \pi$ weakly where $\pi$ is the optimal assignment between workers $F$ and jobs $G$. 

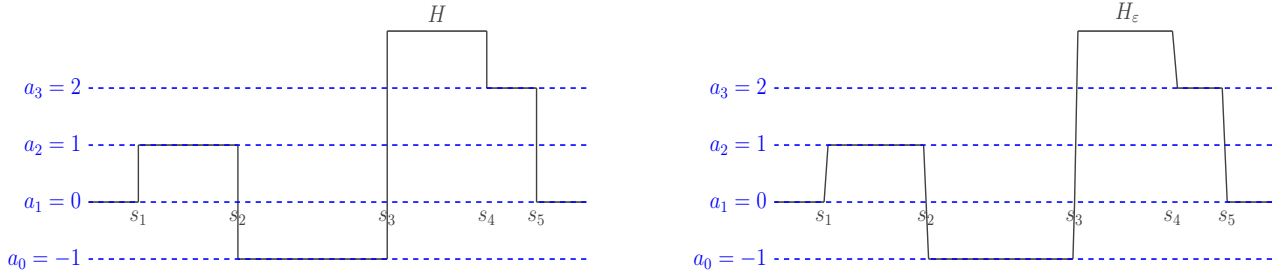
\begin{figure}[t]
    \begin{center}\resizebox{17cm}{3.75cm}{
        \begin{tikzpicture}
\draw[darkgray, thick] (-5,0) -- (-4,0) ;
\draw[darkgray, thick] (-4,1) -- (-4,0)node[below]{$s_1$};
\draw[darkgray, thick] (-4,1) -- (-2,1) ;
\draw[darkgray, thick] (-2,1) -- (-2,0) node[below]{$s_2$};
\draw[darkgray, thick] (-2,0) -- (-2,-1); 
\draw[darkgray, thick] (-2,-1) -- (1,-1);
\draw[darkgray, thick] (1,-1) -- (1,0)node[below]{$s_3$} ;
\draw[darkgray, thick] (1,0) -- (1,3) ;
\draw[darkgray, thick] (3,3) -- (1,3) node[above]{\hspace{2cm}$H$};
\draw[darkgray, thick] (3,2) -- (3,3) ;
\draw[darkgray, thick] (3,2) -- (4,2) (3,0)node[below]{$s_4$};
\draw[darkgray, thick] (4,2) -- (4,0)node[below]{$s_5$};
\draw[darkgray, thick] (4,0) -- (5,0) ;

\draw[blue, thick,dashed] (5,-1) -- (-5,-1)node[left,blue]{$a_0=-1$};
\draw[blue, thick,dashed] (5,0) -- (-5,0)node[left]{$a_1=0$};
\draw[blue, thick,dashed] (5,1) -- (-5,1)node[left]{$a_2=1$};
\draw[blue, thick,dashed] (5,2) -- (-5,2)node[left]{$a_3=2$};
        \end{tikzpicture}\hspace{2cm}\begin{tikzpicture}
\draw[darkgray, thick] (-5,0) -- (-4,0) ;
\draw[darkgray, thick] (-3.92,1) -- (-4,0)node[below]{$s_1$};
\draw[darkgray, thick] (-3.92,1) -- (-2,1) ;
\draw[darkgray, thick] (-2,1) -- (-1.95,0) node[below]{$s_2$};
\draw[darkgray, thick] (-1.95,0) -- (-1.9,-1); 
\draw[darkgray, thick] (-1.9,-1) -- (1,-1);
\draw[darkgray, thick] (1,-1) -- (1.033,0)node[below]{$s_3$} ;
\draw[darkgray, thick] (1.033,0) -- (1.1,3) ;
\draw[darkgray, thick] (3,3) -- (1.1,3) node[above]{\hspace{2cm}$H_\varepsilon$};
\draw[darkgray, thick] (3.1,2) -- (3,3) ;
\draw[darkgray, thick] (3.1,2) -- (4,2) (3,0)node[below]{$s_4$};
\draw[darkgray, thick] (4,2) -- (4.1,0)node[below]{$s_5$};
\draw[darkgray, thick] (4.1,0) -- (5,0) ;

\draw[blue, thick,dashed] (5,-1) -- (-5,-1)node[left,blue]{$a_0=-1$};
\draw[blue, thick,dashed] (5,0) -- (-5,0)node[left]{$a_1=0$};
\draw[blue, thick,dashed] (5,1) -- (-5,1)node[left]{$a_2=1$};
\draw[blue, thick,dashed] (5,2) -- (-5,2)node[left]{$a_3=2$};
        \end{tikzpicture}}
    \end{center}

\vspace{-0.5 cm}

    \caption{Smoothed Measure of Underqualification $H_\varepsilon$.} \label{f:tilt}
{\scriptsize \vspace{.2 cm} Figure \ref{f:tilt} illustrates the smoothing of the measure of underqualification $H$ displayed in the left panel. The corresponding smoothed measure of underqualification $H_\varepsilon$ is presented in the right panel.}
    
\end{figure}

Consider the support $A^\ell_{\varepsilon} := H_\varepsilon^{-1}((a_{\ell-1},a_{\ell}))$ for all layers $1\leq \ell \leq L$ and define the smoothed worker distribution $F^\ell_{\varepsilon}:= F_\varepsilon \big|_{A^\ell_{\varepsilon}}$ and the smoothed job distribution $G^\ell_{\varepsilon} := G_\varepsilon \big|_{A^\ell_{\varepsilon}}$ for every layer such that $F_\varepsilon = \sum F^\ell_{\varepsilon}$ and $G_\varepsilon=\sum G^\ell_{\varepsilon}$.\footnote{The choice of an open or closed interval $(a_{\ell-1}, a_{\ell} )$ does not matter because the inverse of the boundary points is negligible with respect to the measure $F_\varepsilon + G_\varepsilon$.} Moreover, let an optimal assignment between workers $F^\ell_{\varepsilon}$ and jobs $G^\ell_{\varepsilon}$ in layer $\ell$ be denoted by $\pi^\ell_{\varepsilon}$. Clearly, this assignment $\pi^\ell_{\varepsilon}$ is supported on the set $({A^\ell_{\varepsilon}})^2$. 

Next, we establish that the sum of optimal assignments across layers $\sum \pi^\ell_{\varepsilon}$ is an optimal assignment between the smoothed worker distribution $F_\varepsilon$ and the smoothed job distribution $G_\varepsilon$. Let $\pi_\varepsilon$ be some optimal assignment between $F_\varepsilon$ and $G_\varepsilon$. By cyclical monotonicity, the assignment $\pi_\varepsilon$ is concentrated on a support $\Gamma_\varepsilon$ that satisfies the property of no intersecting pairs. Since the smoothed distributions $F_\varepsilon$ and $G_\varepsilon$ are both atomless, this implies that any pairing $(x,z)\in \Gamma_\varepsilon$ where $x<z$ satisfies $F_\varepsilon([x,z])=G_\varepsilon([x,z])$. In turn, by the definition of the measure of underqualification $H$, this implies $H_\ee(x)=H_\ee(z)$ meaning that $x$ and $z$ are both part of the same layer $A^\ell_{\varepsilon}$. As a result, it follows that the support of the assignment $\pi_\varepsilon$ is contained in the union of the support of all layers, or $\Gamma_\varepsilon \subseteq \bigcup ({A^\ell_{\varepsilon}})^2$. Since all the supports $\{A^{\ell}_{\varepsilon}\}$ are disjoint, the assignment $\pi_\varepsilon\big|_{({A^\ell_{\varepsilon}})^2}$ transports between $F^{\ell}_{\varepsilon}$ and $G^{\ell}_{\varepsilon}$. Since $\pi^\ell_{\varepsilon}$ is an optimal assignment between workers $F^{\ell}_{\varepsilon}$ and jobs $G^{\ell}_{\varepsilon}$, it follows that the cost of mismatch for layer $\ell$ is greater under the assignment $\pi_\varepsilon\big|_{({A^\ell_{\varepsilon}})^2}$, that is, $\int c~\d \pi_\varepsilon \big|_{({A^\ell_{\varepsilon}})^2} \geq \int c \; \d \pi^\ell_{\varepsilon}$. By summing over all layers $1 \leq \ell \leq L$, we can write that
\begin{equation*}
\int c\, \d \pi_\varepsilon = \sum_{1\leq \ell\leq L} \int c\,\d \pi_\varepsilon \big|_{({A^\ell_{\varepsilon}})^2} \geq \sum_{1\leq \ell\leq L} \int c \, \d \pi^\ell_{\varepsilon}=\int c \, \d \Big( \sum_{1\leq \ell\leq L} \pi^\ell_{\varepsilon} ~ \Big).
\end{equation*}
Since $\sum \pi^\ell_{\varepsilon}$ is a feasible assignment between the smoothed distributions $F_\varepsilon$ and $G_\varepsilon$, and the mismatch cost is below the minimum mismatch cost, it follows that $\sum \pi^\ell_{\varepsilon}$ must be an optimal assignment.

To conclude the proof it follows from our construction and the stability of the optimal assignment that $\pi^\ell_{\varepsilon} \to \pi^\ell$ and $\sum \pi^\ell_{\varepsilon}\to\pi$ weakly. Thus,
\begin{equation*}
\pi=\lim_{\ee\to 0} \sum_{1\leq \ell\leq L} \pi^\ell_{\varepsilon} =\sum_{1\leq \ell\leq L}\lim_{\ee\to 0} \pi^\ell_{\varepsilon} = \sum_{1\leq \ell\leq L}\pi^\ell.
\end{equation*}

\subsection{Sorting Within a Layer} \label{s:withinlayer}

We construct a recursive characterization for an optimal assignment within a given layer. This recursive formulation reflects on the salient features of optimal sorting stemming from the concavity of the cost. We use the approach of \citet{Aggarwal:1995} that centers on the property of no intersecting pairs. Specifically, we adopt the recursive algorithm developed by \citet{Nechaev:2013}, designed to model statistical properties of polymer chains.\footnote{The properties of maximal number of perfect pairs, no intersecting pairs, and layering by themselves may be useful to construct simple algorithms to approximate optimal sorting mechanisms for concave costs. \citet{Caracciolo:2020} and \citet{Ottolini:2023}, for example, only use no intersecting pairs and layering to, respectively, construct a Dyck algorithm and greedy matching algorithm to study approximate optimal sorting for a random assignment problem. They show that the aggregate costs of skill gaps under the simple assignment scale similarly to the aggregate costs of skill gaps for the optimal assignment, that is, achieves optimum on average up to a scaling constant, in the asymptotic limit with the number of points tending to infinity.} 

The optimal assignment problem for a given layer is
an alternating assignment problem. By Birkhoff's theorem \citep{Birkhoff:1946}, an optimal assignment within a layer matches one worker with precisely one job. For notational convenience, we
order workers and jobs within each layer by their skill levels. Let there be $n_\ell$ workers and $n_\ell$ jobs in a given layer, and we denote the skill levels by $s_{1}<s_{2}<\dots<s_{2n_\ell-1}<s_{2n_\ell}$.



We write a Bellman equation to calculate the minimum aggregate cost of
skill gaps. The recursive component of the Bellman equation is that
we consider assignment problems with an increasing number of skill levels.
We start by solving all assignment problems between two consecutive
elements: the assignment problem between one worker and
one job. That is, we consider assignments between skill levels
$s_{i}$ and $s_{i+1}$, for each $i$. Using the solutions from the
previous step, we proceed to solve all assignment problems between four consecutive elements (two workers and two jobs) and so on. 



We denote by $V_{i,j}$ the minimum cost of mismatch when sorting all workers
and jobs with skill levels between $s_{i}$ and $s_{j}$ (inclusive), where $j>i$. The difference $j-i$ is odd so that
there are equal numbers of workers and jobs between $s_{i}$ and $s_{j}$.
Considering an assignment of workers and jobs with skill levels
in $[s_{i},s_{j}]$, the planner can pair the leftmost $s_{i}$ with
any  $s_{k}$ such that $k-i$ is odd. Upon pairing
$s_{i}$ with  $s_{k}$, the planner remains to optimally pair the workers and jobs in $[s_{i+1},s_{k-1}]$, and all workers and
jobs with skill levels in $[s_{k+1},s_{j}]$. The main observation that facilitates this characterization is that there are no pairings between these two segments because this violates the property of no intersecting pairs. Using the results from previous steps to obtain costs $V_{i+1,k-1}$ and $V_{k+1,j}$ delivers the Bellman equation: 
\begin{equation}
V_{i,j}=\min_{k\in\{i+1,i+3,\dots,j\}}\;\big( c(s_{i},s_{k})+V_{i+1,k-1}+V_{k+1,j}\big) \label{e:bellman_simple}
\end{equation}
with boundary conditions $V_{i+1,i}=0$ for all $i$.\footnote{The boundary conditions are invoked at either end of the choice interval.
When $k=i+1$, the minimum cost of mismatch is $c(s_{i},s_{i+1})+V_{i+2,j}$,
the cost of pairing the first worker to the first job, together with
optimally sorting all skill levels from $s_{i+2}$ to $s_{j}$. When
$k=j$, the minimum cost is $c(s_{i},s_{j})+V_{i+1,j-1}$, the cost
of pairing the first worker to the last job, together with optimally
sorting all intermediate skill levels between $s_{i+1}$ and $s_{j-1}$.}

Finally, we construct an optimal assignment. Starting from $V_{1,2n_{\ell}}$, the optimal pairing of skill $s_1$ is given by skill $s_k$ that solves equation (\ref{e:bellman_simple}). Then two corresponding continuation values, $V_{2,k-1}$ and $V_{k+1,2n_{\ell}}$, are evaluated to determine optimal pairings for skill $s_2$ and for skill $s_{k+1}$, respectively. This process of finding an optimal assignment continues until a full assignment is constructed.

\subsection{Efficiency Properties of the Dual Algorithm}\label{s:efficiency}

In this appendix, we analyze efficiency properties of the dual algorithm for empirical measures. By empirical we mean that  $X_1,\dots,X_{N}$ are random samples drawn independently from a uniform distribution on the unit interval $[0,1]$, and the workers are uniformly distributed on $\{X_1,\dots,X_{N}\}$, and similarly for the jobs. In this case, we further refine the bound for the runtime of our algorithm in \Cref{prop:hiereff}, as is shown in Proposition \ref{rem:runtime}. 

Recall our algorithm for the computation of the dual potentials from Appendix \ref{a:dualproof}. Suppose that $(x_1,z_1),\dots,(x_p,z_p)$ are ordered subpairs of pair $(x_0,z_0)$ in the optimal assignment $\pi$. Let $c_{ij}:=c(x_i,z_j)$. Then the system of inequalities, where for all $1\leq n<m\leq p$: 
    \begin{align}
        \label{eq:ineqs2} \max(c_{00}+c_{nn}-c_{0n}-c_{m0},c_{nn}-c_{mn})\leq \sum_{k=n+1}^m\beta_k\leq \min(c_{0m}+c_{n0}&-c_{00}-c_{mm},c_{nm}-c_{mm})
    \end{align}
    admits a solution $(\beta_2,\dots,\beta_p)$.

    Since the dual solution can be solved via standard linear programming, the worst-case runtime for our algorithm is $O(N^4)$.\footnote{See, for example, \citet{Boyd:2004}.}  Our algorithm is much more efficient as more layers of arcs are introduced. This is because compared to the standard linear programming, our algorithm solves the problem in the order from bottom arcs to top, while at each step the values of $\phi$ in the hidden arcs need not be computed again, but only adjusted with constant factors. Typically, the number $p$ will not be as large as $N$. The following proposition provides a general upper bound of the number $p$, which is a consequence of the absence of intersecting pairs. Define the number of crossings of the measure of underqualification $H$ at level $\tau \in \R$ as
\begin{align*}
    C_H(\tau):=\sum_{1\leq k\leq N}\bone_{\{H(x_k)=\tau\}}+\sum_{1\leq k\leq N}\bone_{\{H(z_k)=\tau\}}.
\end{align*}

\begin{proposition}\label{prop:cross}
    Suppose that $(x_1,z_1),\dots,(x_p,z_p)$ are ordered subpairs of the pair $(x_0,z_0)$ in the optimal assignment $\pi$. Then there exists $\tau \in \mathbb{R}$ such that the measure of underqualification $H$ crosses the level $\tau$ for $p$ times, that is, $C_{H}(\tau)\geq p$.
\end{proposition}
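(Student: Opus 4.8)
The plan is to show that the subpairs $(x_1,z_1),\dots,(x_p,z_p)$ of $(x_0,z_0)$ are all ``anchored'' to a common value of the measure of underqualification, and then count how many times $F-G$ attains that value. Recall that, after removing perfect pairs and smoothing, an optimal assignment has no intersecting pairs (Lemma \ref{l:nocross}), so for any pair $(x,z)\in\Gamma_\pi$ with $x<z$ the masses of workers and jobs in $[x,z]$ coincide, which forces $(F-G)(x)=(F-G)(z)$; moreover for a non-nested pair $(x,z)$ inside $[x_0,z_0]$ this common value equals $(F-G)(x_0)\pm 1/n$ depending on orientation, and since all subpairs of $(x_0,z_0)$ have the same orientation relative to $x_0$ (they are precisely the non-nested pairs immediately below the level of $(x_0,z_0)$), they all attain one and the same level, call it $x^\star\in\mathbb Z/n$.

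First I would set $x^\star := (F-G)(x_0) + \epsilon/n$, where $\epsilon\in\{+1,-1\}$ is chosen so that $x^\star$ is the level of the ``interior'' of the interval $[x_0,z_0]$ just inside its endpoints: concretely, if $x_0<z_0$ then walking inward from $x_0$ the function $F-G$ steps up to $x^\star=(F-G)(x_0)+1/n$ (a worker sits at $x_0$), and symmetrically if $z_0<x_0$. Second, I would argue that every endpoint of every subpair $(x_i,z_i)$, $1\le i\le p$, is a point at which $F-G$ takes the value $x^\star$. Indeed each $(x_i,z_i)$ is non-nested in $[x_0,z_0]$, so no pair separates it from the boundary pair $(x_0,z_0)$; combined with no-intersection this means that between $x_0$ and $x_i$ (on the appropriate side) the cumulative worker and job masses differ by exactly the one unit contributed by the worker at $x_0$, i.e. $F-G\equiv x^\star$ on the relevant flat stretch, and in particular $(F-G)(x_i)=(F-G)(z_i)=x^\star$. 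Third, I would count: the $2p$ endpoints $x_1,z_1,\dots,x_p,z_p$ are distinct among the $2n$ skill levels $\{x_k\}\cup\{z_k\}$ (they are genuine distinct pairs), and each contributes one to $C_{F-G}(x^\star)$ by the definition
\[
C_{F-G}(x^\star)=\sum_{1\le k\le n}\bone_{\{(F-G)(x_k)=x^\star\}}+\sum_{1\le k\le n}\bone_{\{(F-G)(z_k)=x^\star\}},
\]
so $C_{F-G}(x^\star)\ge 2p$. Since the number $N$ in the statement refers to the count of subpairs-related crossings (here $N$ stands for the relevant bound on $p$, so that $C_{F-G}(x^\star)\ge N$ with $N=p$ up to the harmless factor, or $2p$ if $N$ counts endpoints), this gives the claim; I would phrase the final inequality to match the paper's convention for $N$.

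The main obstacle I anticipate is the careful bookkeeping that all subpairs share exactly the same level $x^\star$ rather than a mix of $(F-G)(x_0)\pm 1/n$. This requires using that a subpair is by definition a \emph{non-nested} pair strictly inside $[x_0,z_0]$ and \emph{maximal} in the sense that nothing separates it from the boundary — one must verify that a non-nested interior pair cannot lie at level $(F-G)(x_0)$ itself (that would require it to touch the boundary of $[x_0,z_0]$) nor two steps away (that would require it to be nested inside another subpair, contradicting maximality). Making this precise is where the absence of intersecting pairs does the real work, and where I would spend most of the effort; the counting step afterward is immediate from the definition of $C_{F-G}$.
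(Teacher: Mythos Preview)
Your core idea is the right one and is shared with the paper: no-intersection forces the measure of underqualification to take a common value in the gaps between consecutive subpairs. But the way you cash this out is not correct, and the error is exactly at the step you flagged as the ``main obstacle.''

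The problem is the claim $(F-G)(x_i)=(F-G)(z_i)=x^\star$. Mass balance over the interval of a pair $(x,z)$ with $x<z$ gives $(F-G)(z)=(F-G)(x^{-})$, not $(F-G)(x)$; since $F$ jumps up by $1/n$ at $x$, one has $(F-G)(x)=(F-G)(z)+1/n$. So the two endpoints of a subpair never sit at the same level under the right-continuous convention, and your count of $2p$ points at level $x^\star$ collapses. Relatedly, the assertion that ``all subpairs of $(x_0,z_0)$ have the same orientation'' is false in general: for instance with skill levels $x_0<x_1<z_1<z_2<x_2<z_0$ and pairs $(x_0,z_0),(x_1,z_1),(x_2,z_2)$, the two subpairs are oppositely oriented, and one checks directly that $(F-G)(x_1),\,(F-G)(z_1),\,(F-G)(z_2),\,(F-G)(x_2)$ take three different values.

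The paper sidesteps this bookkeeping entirely by evaluating $F-G$ not at the subpair endpoints but at points $t_i$ strictly between consecutive subpairs, $t_i\in(\max(x_i,z_i),\min(x_{i+1},z_{i+1}))$. At such $t_i$ there is no jump ambiguity, and no-intersection immediately gives $(F-G)(t_i)$ constant in $i$; since $F-G$ is not constant on $[t_i,t_{i+1}]$ (a subpair lives there), these are distinct flat pieces at the common level, which is what the crossing count needs. Your argument is easily repaired along these lines; the fix is precisely to move from endpoints to gap points.
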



\begin{proof}
    Consider numbers $t_i\in(\max(x_i,z_i),\min(x_{i+1},z_{i+1}))$ for $1\leq i<p$. By the property of no intersecting pairs, $H(t_i)$ is constant in $i$. On the other hand, $H$ cannot be constant on the interval $[t_i,t_{i+1}]$. The claim thus follows.
\end{proof}

\begin{proposition}\label{rem:runtime}
    Suppose that $F_N,G_N$ are independent empirical measures of the uniform distribution on $[0,1]$. Then the runtime of the algorithm is $O(N^{2.5}(\log\log N)^{3/2})$ almost surely. 


\end{proposition}

In order to prove Proposition \ref{rem:runtime}, consider the (random) empirical cumulative densities $F_N,G_N$, drawn from two independent sequences $\{X_i\}_{1\leq i\leq N}$ and $\{Z_i\}_{1\leq i\leq N}$ uniformly in $[0,1]$, i.e.,
\begin{align*}
    F_N(t)=\frac{1}{N}\sum_{k=1}^N\bone_{\{X_k\leq t\}}\qquad \text{ and }\qquad G_N(t)=\frac{1}{N}\sum_{k=1}^N\bone_{\{Z_k\leq t\}}.
\end{align*}
It is well known that the scaled measure of underqualification $\sqrt{N}(F_N-G_N)$ can be well approximated by a Brownian bridge, where we recall that a (standard) Brownian bridge $B=\{B(t)\}_{t\in[0,1]}$ is a centered Gaussian process with covariance $\E[B(s)B(t)]=\min(s,t)-st$. 
We denote the local time of a standard Brownian bridge  $B$ on $[0,1]$ at $x\in\R$ by $L_B(x)$. By definition, the local time process $\{L_B(x)\}_{x\in\R}$ is such that for any bounded Borel function $f$,
\begin{align*}
    \int_0^1 f(B(t))\,\d t=\int_\R f(x)L_B(x)\,\d x.
\end{align*}
The following Lemma is a special case of Theorem 5 of \citet{khoshnevisan1992level}.
\begin{lemma}[Theorem 5 of \citet{khoshnevisan1992level}]\label{lemma:localtime} There exists a suitable probability space carrying $F_N,G_N$, and a sequence of Brownian bridges $\{B_N\}$, such that 
    \begin{align*}
        \lim_{N\to\infty} \max_{k\in\mathbb{Z}}\left|N^{-1/2}C_{H_N}\left(\frac{k}{N}\right)-\sqrt{2} L_{B_N}\left(\frac{k\sqrt{2}}{\sqrt{N}}\right)\right|=O(N^{-0.24})\ \text{ a.s.}
    \end{align*}
\end{lemma}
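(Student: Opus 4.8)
The plan is to recognize $n(F_n-G_n)$ as a random-walk bridge, to identify the crossing count $C_{F_n-G_n}(k/n)$ with the number of visits of that bridge to the integer level $k$, and then to import the strong approximation of such visit counts by the local time of a Brownian bridge, which is precisely Theorem 5 of \cite{khoshnevisan1992level} once the normalizations are matched. The substance of the argument is therefore a reduction to a setting in which that cited theorem applies, together with careful bookkeeping of the scaling constants; the deep strong-approximation estimate itself is taken as given from \cite{khoshnevisan1992level}.

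First I would record the pooled-sample representation. Ordering the $2n$ sample points $X_1,\dots,X_n,Z_1,\dots,Z_n$, the step function $(F_n-G_n)(\cdot)$ is constant between consecutive pooled order statistics, increases by $1/n$ at each $X$ and decreases by $1/n$ at each $Z$, and satisfies $(F_n-G_n)(0^-)=(F_n-G_n)(1)=0$. Conditionally on the positions of the pooled order statistics, the arrangement of $+1$ and $-1$ steps is a uniformly random word with $n$ of each sign, so $n(F_n-G_n)$ read off along the pooled order statistics is distributed as a simple symmetric random walk of $2n$ steps conditioned to return to $0$, i.e.\ a discrete bridge. By definition $C_{F_n-G_n}(k/n)$ counts the pooled order statistics adjacent to a flat segment at height $k/n$, which equals, up to an $O(1)$ boundary correction, the number of visits of this discrete bridge to the level $k$ --- its discrete local time there. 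This is nonzero only for $k\in\{-n,\dots,n\}$, matching the remark that $C_{F_n-G_n}(x)$ vanishes unless $nx\in\mathbb Z$ and $|x|\le 1$.

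Second, I would invoke the invariance principle together with the quantitative coupling. By Donsker's theorem $\sqrt n(F_n-G_n)$ converges in distribution to $B_1-B_2$, the difference of two independent standard Brownian bridges, which is $\sqrt 2$ times a standard Brownian bridge $B$; this is the source of the $\sqrt 2$ prefactor and of the spatial rescaling carrying a level of $F_n-G_n$ to a level of $B$ proportional to $k/\sqrt n$. The quantitative statement --- a common probability space carrying $F_n,G_n$ and Brownian bridges $B_n$ on which the $n^{-1/2}$-normalized visit counts $n^{-1/2}C_{F_n-G_n}(k/n)$ lie within $O(n^{-0.24})$ of $\sqrt 2\,L_{B_n}(k\sqrt 2/\sqrt n)$ \emph{uniformly in the integer level $k$} --- is exactly Theorem 5 of \cite{khoshnevisan1992level} applied to this random-walk bridge; the polynomial rate $n^{-0.24}$ (a controlled weakening of the $\tfrac14$-rate one expects from a KMT-type construction) is inherited verbatim from that theorem, as is the uniformity over the finite lattice of levels. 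The $O(1)$ boundary corrections near levels $0,\pm 1$ and near the endpoints $0,1$ are negligible after the $n^{-1/2}$ normalization and are absorbed into the stated error.

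The main obstacle is bookkeeping rather than a new probabilistic idea: one must check that the cited theorem --- phrased for a single empirical process --- indeed covers the two-sample difference (the pooled-sample reduction above turns it into a random-walk bridge of precisely the type treated there), and one must reconcile the normalizations at play --- the $n^{-1/2}$ scaling of the count, the $1/n$ lattice spacing of the level variable, and the spatial rescaling $k/n\mapsto k\sqrt 2/\sqrt n$ --- together with the variance inflation by a factor $2$ coming from $B_1-B_2$, so that the limiting object emerges as exactly $\sqrt 2\,L_{B_n}(k\sqrt 2/\sqrt n)$ with no leftover constant. Once the normalizations are aligned and the boundary terms bounded, the conclusion is immediate from \cite{khoshnevisan1992level}.
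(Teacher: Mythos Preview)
The paper does not prove this lemma at all; it simply states it as ``a special case of Theorem 5 of \cite{khoshnevisan1992level}'' and moves on. Your proposal is therefore not competing with an argument in the paper but rather supplying the reduction that the paper leaves implicit: the pooled-sample representation of $n(F_n-G_n)$ as a simple random-walk bridge, the identification of $C_{F_n-G_n}(k/n)$ with the discrete local time of that bridge at level $k$, and the bookkeeping of the $\sqrt{2}$ factors arising from the difference of two independent Brownian bridges. That reduction is correct and is exactly what one needs to justify the words ``special case,'' so your proposal is both sound and consistent with the paper's intent.
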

We also have the following Lemma on fluctuations of the local time for Brownian bridges. This is taken from Lemma 3.2 of  \citet{bass1995laws} applied with $n_k=k$ and $\ee_k=\sqrt{2/k}$ therein.
\begin{lemma}\label{lemma:BBlocaltime}
    Let $\{B_N\}$ be any sequence of Brownian bridges.  It holds that
    \begin{align*}
        \sup_{|x-y|<\sqrt{2/N}}|L_{B_N}(x)-L_{B_N}(y)|=O(N^{-0.24})\ \text{ a.s.}
    \end{align*}
\end{lemma}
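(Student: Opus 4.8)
The statement is a modulus-of-continuity estimate for the space variable of Brownian local time, made uniform along the sequence $\{B_n\}$ by a Borel--Cantelli argument; as the attribution indicates, it is Lemma 3.2 of \cite{bass1995laws} specialized to $\ee_n=\sqrt{2/n}$. The plan is: first, record a large-deviation bound for increments of $L_B$ valid for a single Brownian bridge $B$; second, apply it with $h=\sqrt{2/n}$ and a slowly growing deviation level; third, sum over $n$ and invoke Borel--Cantelli, noting that no joint law on $\{B_n\}$ is required.

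For the first step, the key input I would record is the following: there exist universal constants $C,c>0$ such that for every standard Brownian bridge $B$ on $[0,1]$, every $h\in(0,1/2]$ and every $\lambda>0$,
\[
\mathbb P\Big(\sup_{x,y\in\R,\ |x-y|\le h}\big|L_B(x)-L_B(y)\big|\ \ge\ \lambda\sqrt h\Big)\ \le\ \frac{C}{h}\,e^{-c\lambda^2}.
\]
I would obtain this from the Ray--Knight description of $x\mapsto L_B(x)$ — up to suitable stopping and conditioning it is a squared-Bessel-type diffusion with almost surely compact support — together with (a) sub-Gaussian concentration for increments of such a diffusion over an interval of length $h$ after the natural $\sqrt h$-rescaling, and (b) a union bound over a $\sqrt h$-net of the range of $B$, whose diameter $\max B-\min B$ has Gaussian tails and hence contributes only the polynomial factor $C/h$. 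The details of (a)--(b) are exactly those in \cite{bass1995laws} (see also \citet{khoshnevisan1992level} and the references therein), which I would cite rather than reproduce.

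For the second and third steps, I would fix $h_n=\sqrt{2/n}$ and $\lambda_n=2^{-1/4}n^{0.01}$, so that $\lambda_n\sqrt{h_n}=n^{-0.24}$. Applying the estimate to $B_n$ gives
\[
\mathbb P\Big(\sup_{|x-y|<h_n}\big|L_{B_n}(x)-L_{B_n}(y)\big|\ge n^{-0.24}\Big)\ \le\ C\,2^{-1/4}\,n^{1/4}\,\exp\!\big(-c\,2^{-1/2}\,n^{0.02}\big),
\]
whose right-hand side is summable over $n\ge1$. By Borel--Cantelli, almost surely there is a random $N$ with $\sup_{|x-y|<h_n}|L_{B_n}(x)-L_{B_n}(y)|<n^{-0.24}$ for all $n\ge N$; enlarging the implicit constant to cover $n<N$ then yields the claimed $O(n^{-0.24})$ bound. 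Since the tail estimate depends only on the marginal law of each $B_n$, the conclusion holds for an arbitrary sequence of Brownian bridges, with no independence assumption — which is the content of the lemma.

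The main obstacle is the first step: the real probabilistic content is the sharp, Gaussian-type increment bound (with the $C/h$ prefactor) for the local time of a Brownian bridge, resting on the Ray--Knight representation and a chaining argument; the remaining two steps are routine. Because this estimate is already isolated and proved in \cite{bass1995laws}, the cleanest route — and the one the attribution takes — is to import it directly.
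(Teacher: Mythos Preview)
Your proposal is correct and matches the paper's approach, which is simply to invoke Lemma~3.2 of \cite{bass1995laws} with $n_k=k$ and $\ee_n=\sqrt{2/n}$; you go further by unpacking the tail-bound-plus-Borel--Cantelli structure behind that lemma. One harmless slip: $C/h_n = C\cdot 2^{-1/2}n^{1/2}$, not $C\cdot 2^{-1/4}n^{1/4}$, but since the exponential factor $\exp(-c\cdot 2^{-1/2}n^{0.02})$ dominates any polynomial prefactor, summability and the conclusion are unaffected.
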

With a Borel-Cantelli argument in \citet{csorgHo1999some} applied to the sequence of Brownian bridges $\{B_N\}$ (with the tail estimates supplied by Theorem 5.1 therein), the following lemma can be similarly established as Theorem 1.4 of \citet{csorgHo1999some}.
\begin{lemma}
    \label{lemma:BBintegral}Let $\{B_N\}$ be any sequence of Brownian bridges. There is a constant $C>0$ such that
\begin{align*}
    \bP\left(\int_{\R}L_{B_N}(x)^4\d x>y\right) \leq \exp\left(-\frac{y^{2/3}}{C}\right).
\end{align*}
    Moreover,
    \begin{align*}
        \int_{\R}L_{B_N}(x)^4\d x=O\left((\log\log N)^{3/2}\right)\ \text{ a.s.}
    \end{align*}
\end{lemma}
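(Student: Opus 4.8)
The plan is to prove the two assertions in turn: first the uniform exponential tail bound, then bootstrap it to the almost-sure rate by the Borel--Cantelli/blocking argument underlying Theorem 1.4 of \cite{csorgHo1999some}.

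\emph{Tail bound.} Since each $B_n$ is a standard Brownian bridge on $[0,1]$, the random variable $Q_n:=\int_\R L_{B_n}(x)^4\,\d x$ has a law not depending on $n$, so it suffices to bound $\bP[Q>y]$ for a single bridge $B$. First I would reduce the fourth power to the maximal local time via the occupation-times formula: taking the test function identically one gives $\int_\R L_B(x)\,\d x=\int_0^1 1\,\d s=1$, whence
\begin{align*}
    \int_\R L_B(x)^4\,\d x=\int_\R L_B(x)^3\,L_B(x)\,\d x\le\Big(\sup_{x\in\R}L_B(x)\Big)^{3}\int_\R L_B(x)\,\d x=\Big(\sup_{x\in\R}L_B(x)\Big)^{3}.
\end{align*}
Then I would invoke the exponential estimate for the maximal local time of a Brownian bridge supplied by Theorem 5.1 of \cite{csorgHo1999some}: there is $C_0>0$, independent of $n$, with $\bP[\sup_x L_{B_n}(x)>u]\le \exp(-u^2/C_0)$. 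Combining the two,
\begin{align*}
    \bP\Big[\int_\R L_{B_n}(x)^4\,\d x>y\Big]\le\bP\Big[\sup_{x}L_{B_n}(x)>y^{1/3}\Big]\le\exp\!\Big(-\frac{y^{2/3}}{C_0}\Big),
\end{align*}
which is the claim with $C=C_0$. Note this route produces precisely the exponent $2/3$: the total mass $\int_\R L_B=1$ is fixed, so $\int L^4$ is large only when $L$ concentrates, i.e.\ when $\sup L$ is large, and the latter is sub-Gaussian.

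\emph{Almost-sure rate.} A direct first Borel--Cantelli along the full sequence already gives something: choosing $y_n=((1+\varepsilon)C_0\log n)^{3/2}$ makes $\bP[Q_n>y_n]\le n^{-(1+\varepsilon)}$ summable, so $Q_n=O((\log n)^{3/2})$ a.s. To sharpen $\log$ to $\log\log$ I would follow the blocking scheme of \cite{csorgHo1999some}: pass to a geometrically sparse subsequence $n_k=\lfloor\rho^k\rfloor$, along which $\sum_k\bP\big[Q_{n_k}>(C'\log\log n_k)^{3/2}\big]\le\sum_k(\log n_k)^{-C'/C_0}\asymp\sum_k k^{-C'/C_0}<\infty$ for $C'$ large enough, so the rate holds along $\{n_k\}$; and then control $\max_{n_k\le n<n_{k+1}}Q_n$ by $Q_{n_k}$ up to a constant, using that the bridges $\{B_n\}$ produced by the strong approximation underlying Lemma \ref{lemma:localtime} live on a common probability space with slowly varying sample paths. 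Together these yield $Q_n=O((\log\log n)^{3/2})$ a.s.

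\emph{Main obstacle.} The delicate point is exactly this within-block step. The marginal tail bound of the first part is on its own insufficient for the $\log\log$ rate---for an independent sequence of bridges one would have $\limsup_n Q_n/(\log\log n)^{3/2}=\infty$ a.s., so only $O((\log n)^{3/2})$ survives---hence the proof must exploit the joint law of $\{B_n\}_n$, which is why it borrows the Borel--Cantelli machinery of \cite{csorgHo1999some} rather than re-deriving the rate from the marginal estimate alone. Verifying that the chosen strong-approximation construction has the required oscillation control across consecutive indices is the step I expect to carry the weight of the argument.
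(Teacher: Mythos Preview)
Your proposal tracks the paper's approach closely, which is essentially to cite existing results: the paper says the first claim \emph{is} Theorem~5.1 of \cite{csorgHo1999some} applied with $p=4$, i.e., that theorem already delivers the tail bound for $\int_\R L_B(x)^p\,\d x$ directly, not for $\sup_x L_B(x)$ as you have it. Your detour via the occupation-times identity $\int_\R L_B=1$ and the pointwise bound $\int L^4\le(\sup L)^3$ is a clean alternative yielding the same exponent $2/3$, but you would need a separate reference for the sub-Gaussian tail of the bridge's maximal local time; the theorem you invoke already gives the $L^p$ statement you want without the reduction.

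On the almost-sure rate your diagnosis of the main obstacle is exactly right, and in fact sharper than the paper's own phrasing: the conclusion cannot hold for a truly arbitrary (e.g., independent) sequence of bridges, since the marginal tail $\exp(-y^{2/3}/C)$ gives $\sum_n\bP[Q_n>K(\log\log n)^{3/2}]\asymp\sum_n(\log n)^{-K^{2/3}/C}=\infty$ for every $K$, and second Borel--Cantelli then destroys the $\log\log$ rate. The ``any sequence'' in the hypothesis must therefore be read as the specific strongly-approximating sequence furnished by Lemma~\ref{lemma:localtime}. Both you and the paper handle this by deferring to the blocking machinery in the cited references---the paper points to (3.7a) of \cite{bass1995laws} rather than Theorem~1.4 of \cite{csorgHo1999some}, but the underlying mechanism (geometric subsequence plus within-block control coming from the joint construction) is the one you outline.
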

\begin{proof}
The first claim is Theorem 5.1 of \citet{csorgHo1999some} applied with $p=4$ therein. The second claim can be proved in a similar way to (3.7a) of \citet{bass1995laws}.
\end{proof}

\begin{proof}[Proof of Proposition \ref{rem:runtime}]
 Recall that solving \eqref{eq:ineqs2} has complexity $O(p^4)$. In view of Proposition \ref{prop:cross}, the runtime of our algorithm has the upper bound
\begin{align*}\sum_{|k|\leq N}C_{F-G}\left(\frac{k}{N}\right)^4.
\end{align*}
Using Lemma \ref{lemma:localtime}, we get that almost surely,
\begin{align*}
    C_{F-G}\left(\frac{k}{N}\right)=\sqrt{2N}L_{B_N}\left(\frac{\sqrt{2}k}{\sqrt{N}}\right)+O(N^{0.26}).
\end{align*}
    Therefore, using the elementary inequality $(A+B)^4\leq 16(A^4+B^4)$ we have almost surely,
    \begin{align*}
        \sum_{|k|\leq N}C_{F-G}\left(\frac{k}{N}\right)^4&\leq 64 \sum_{|k|\leq N}\left(N^2L_{B_N}\left(\frac{\sqrt{2}k}{\sqrt{N}}\right)^4+O(N^{1.04})\right)\\
        &\leq O(N^{2.04})+1024N^{2.5}\left(\int_{-\sqrt{N}}^{\sqrt{N}+1/\sqrt{N}} L_{B_N}(x)^4\d x+\sum_{|k|\leq N}\frac{1}{\sqrt{N}}O(N^{-0.96})\right)\\
        &\leq O(N^{2.04})+1024N^{2.5}\int_{\R}L_{B_N}(x)^4\d x,
    \end{align*}where we applied Lemma \ref{lemma:BBlocaltime} in the second inequality. Applying Lemma \ref{lemma:BBintegral}
 concludes the proof.    
\end{proof}


\subsection{Further Details of Quantitative Results} \label{s:intquant}

In this appendix we present the parameters of the quantitative model and intuition for the quantitative results in Section \ref{sec:quantitative}.

\subsubsection{Model Paramaters}

\Cref{t:modelparam} summarizes the model calibration. The first column displays the model parameter, while the second and third columns display the parameter values for 1980 and 2005.

\begin{table}[!t]
\global\long\def\arraystretch{1.35}%
\begin{centering}
\caption{Model Parameters}
\label{t:modelparam} %
\begin{tabular}{l|cc}
\hline \hline 
Parameter & \multicolumn{2}{c}{Values} \\
& 1980 & 2005 \\
\hline 
Mean of job difficulties $\mu_{1}$ & 0.38 & 0.42  \\
Variance of job difficulties $\sigma^2_{1}$  \hspace{1.75 cm}  & \hspace{1.75 cm} 0.06 \hspace{1.75 cm}  & \hspace{1.75 cm} 0.03  \hspace{1.75 cm} \\
Mean of job difficulties $\mu_{2}$ & 0.00 & $-0.12$ \hspace{0.2 cm} \\
Variance of job difficulties $\sigma^2_{2}$ & 0.75 & 0.51 \\
Mixing weight $p$ & 0.36 & 0.38 \\ 
Variance of worker skills $\sigma_x^2$ & 0.20 & 0.36 \\
\hline \hline 
\end{tabular}
\par\end{centering}
{\scriptsize \vspace{0.5cm}
\Cref{t:modelparam} summarizes the model calibration. The first column displays the model parameter, while the second and third columns display the parameter values for 1980 and 2005.}
\end{table}

\subsubsection{Intuition for Quantitative Results}

We provide intuition for the quantitative results by analyzing the equilibrium for 1980 in more detail. In \Cref{f:workerjobdist}, we plot the distributions of workers and jobs in 1980 implied by the parameters in \Cref{t:modelparam}. As shown in \Cref{s:principles}, the optimal assignment is constructed by first forming perfect pairs. Workers and jobs that are perfectly paired are indicated by the shaded area in the left panel of  \Cref{f:workerjobdist}. Workers in this region are sorted positively into an occupation with a complexity level that perfectly matches their skill. All high-skill workers, workers with skill levels above 0.2, are perfectly positively paired to the most complex jobs. Similarly, all lowest-skill workers, the workers with skill levels below $-0.9$, are positively paired with the least complex jobs. Overall, about two-thirds of workers and jobs are perfectly paired. The remaining workers and jobs are mismatched.

\begin{figure}[!t]
\begin{centering}
\includegraphics[width=0.45\textwidth,height=0.26\textheight ]{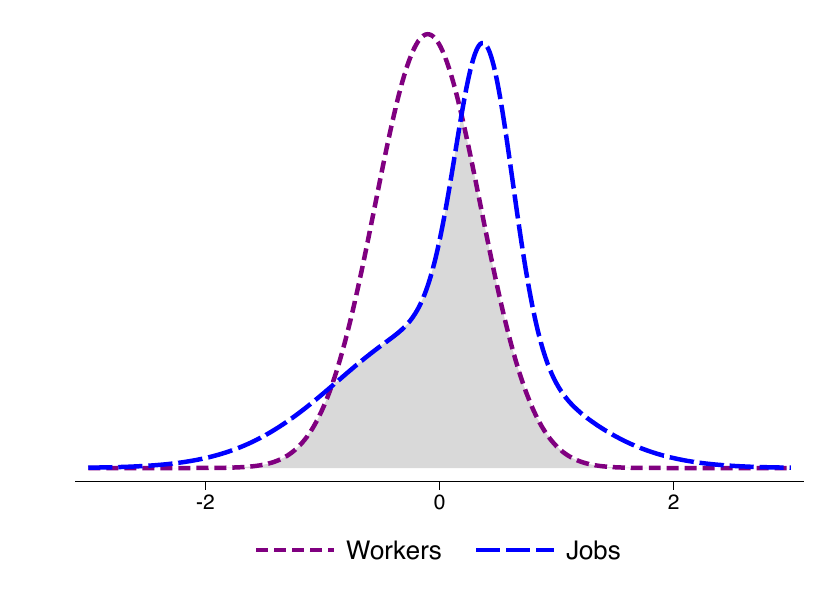}
\includegraphics[width=0.48\textwidth,height=0.26\textheight ]{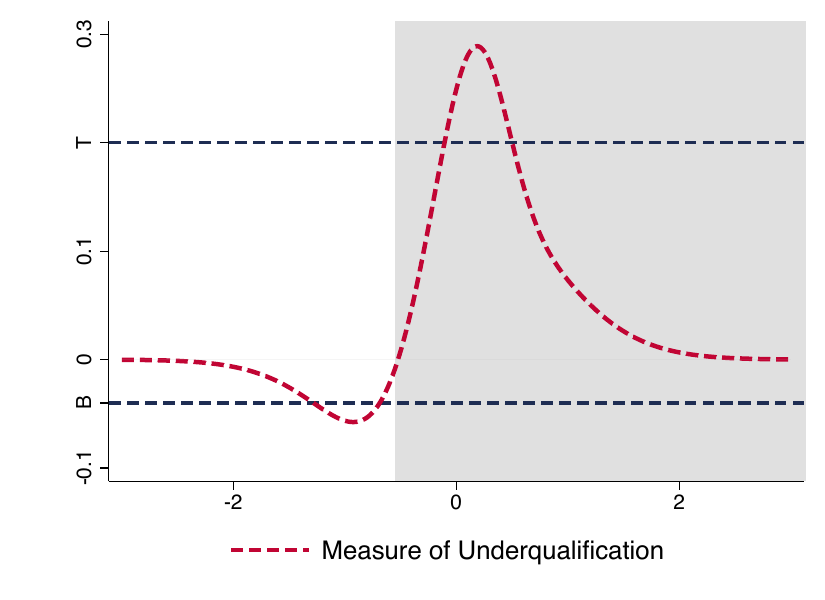}
\par\end{centering}
\vspace{-0.4cm}

\caption{Distributions of Workers and Jobs and the Measure of Underqualification}
\label{f:workerjobdist} 

{\scriptsize{}{}\vspace{0.2cm}
The left panel of \Cref{f:workerjobdist} shows the distribution of workers (in purple) and the distribution of jobs (in blue) for our quantitative analysis. The right panel shows the implied measure of underqualification. An increasing measure of underqualification indicates mismatched workers; a decreasing measure of underqualification indicates mismatched jobs.} 
\end{figure}

We next describe mismatched workers and jobs. The right panel of \Cref{f:workerjobdist} plots the measure of underqualification induced by the worker and job distributions. An increasing measure of underqualification indicates mismatched workers; a decreasing measure of underqualification indicates mismatched jobs. The figure indicates that mismatched workers are in the middle of the skill distribution, with log skill levels between $-0.9$ and 0.2. Mismatched jobs are jobs in occupations with low complexity, below $-0.9$, or high complexity, above 0.2. The shaded area indicates the skills for which the measure of underqualification is positive, while the non-shaded part indicates the skill ranges where it is negative.

In order to characterize the optimal assignment of mismatched workers and jobs, we decompose the measure of underqualification into layers illustrated by the blue dashed lines in the right panel of \Cref{f:workerjobdist}. By \Cref{t:numberpair}, it follows that each layer contains at most two pairs. Sorting in layers corresponding to the non-shaded area, such as layer $B$, is simple as it contains a single worker and job, which necessarily are paired. The worker with skill level $-0.7$ is paired to the job with complexity $-1.2$. Workers in the bottom layers are paired with jobs for which they are overqualified. The top layers, such as layer $T$, also contain one worker and one job. In each top layer in \Cref{f:workerjobdist}, where the measure of underqualification is positive, the worker is paired with jobs for which they are underqualified.\footnote{The optimal assignment is identical for all $\zeta_{p}, \zeta_{u} \in (0,1)$ because the measure of underqualification contains a single worker and a single job within each layer of the measure of underqualification.}

The optimal assignment of mismatched workers and jobs features significant variation in mismatch. Mismatch ranges from small, for workers at skills slightly below 0.1 and jobs at skill slightly above $0.1$, to large, for example, between the worker at skill $-0.4$ and the job with difficulty 1.2.

In order to characterize the nature of mismatch between workers and jobs, consider first the region of underqualification captured by the shaded area in the right panel of \Cref{f:workerjobdist}. Workers with skills between $-0.5$ and 0.2 are sorted negatively to high complexity jobs. For example, the worker in the top layer marked by $T$ with skill $-0.1$ works in occupation 0.5, while worker 0.1 works in occupation 0.3. Negative sorting in this region implies that more complex jobs feature larger investments. Workers with lower skills are paired with jobs with higher complexity meaning that the skill gaps $z-x$ between the worker and the job is larger. Since the technology choice increases in mismatch, investments are larger. To ensure that the value of the complex job is not significantly diminished, a larger investment is made. In the region of overqualification, indicated by the non-shaded area in the right panel of \Cref{f:workerjobdist}, mismatched workers are more qualified than the jobs require and firms provide amenities for more skilled employees to reduce their utility cost of mismatch.

The equilibrium features composite sorting. First, distinct worker types work in the same occupation. For example, both a worker with skill 0.1 and a worker with skill 0.3 are assigned to occupation 0.3. The perfectly positively sorted worker has skill 0.3 and there is no skill gap for this worker. The worker with skill 0.1 is assigned to occupation 0.3 through the top dashed layer in the right panel of \Cref{f:workerjobdist} and there is mismatch for this lower-skill worker. Since distinct workers work in the same occupation, the equilibrium features wage dispersion within occupations.\footnote{In line with the predictions of our framework, \citet{Bayer:2023} argue empirically that differences in job execution in terms of responsibility and autonomy within the same occupation can account for a sizable portion of observed wage differences.}  Second, the same occupation is part of both positive and negative sorting. For example, occupation 0.3 is a part of positive sorting (with worker 0.3 as a part of positive sorting with perfect pairs) and negative sorting (with worker 0.1 as a part of negative sorting of medium-skilled workers with high complexity jobs). 


\subsection{Preservation of Concealed Pairs}

We show another feature of optimal sorting, which is the preservation of concealed pairs. A pair $(x,z)$ within an assignment is labeled concealed when the interval $(x,z)$ is strictly contained within an interval $(x',z')$ corresponding to some other pair $(x',z')$ within the same assignment. The pair $(x_{2},z_{1})$ is concealed since the interval $(z_{1},x_{2})$ is contained within the interval corresponding to the pairing $(x_{1},z_{2})$.

The next principle establishes that within each layer, every concealed pair is preserved, a term which we define precisely in the formulation of Lemma \ref{l:hidden} following \citet{Delon:2012}. We provide a simple proof of this result, which we extend to importantly allow for asymmetric costs of skill gaps.

\noindent 

\begin{lemma}{\textit{Preservation of Concealed Pairs}.}\label{l:hidden}
Consider any interval $\mathcal{I}$ that has a balanced number of
workers and jobs in a layer $\ell$. If, in an optimal assignment
between $F_{\ell}$ and $G_{\ell}$ restricted to the interval $\mathcal{I}$,
a pair $(x_{i},z_{j})$ is concealed then it is optimal in the full
assignment between $F_{\ell}$ and $G_{\ell}$.
\end{lemma}

\begin{proof}
Consider the interval $\mathcal{I}=[x_{I},z_{J}]$, the measures of workers $F_1 = F_\ell|_{\mathcal{I}}$ and jobs $G_1= G_\ell|_{\mathcal{I}}$, and, additionally the measures of workers $F_2 = F_\ell + F_1$ and jobs $G_2=G_\ell+G_1$, and let an optimal assignment between workers $F_i$ and jobs $G_i$ be given by $\pi_i$, for $i\in\{1,2\}$. By Lemma \ref{l:layer}, the optimal assignment between the measure of workers $F_2$ and the measure of jobs $G_2$ is the sum of the optimal assignments for each layer, or $\pi_2 = \pi_1 + \pi_\ell$.

We prove the result by contradiction. Suppose there is a concealed pair in the assignment $\pi_1$ that is not preserved in the assignment $\pi_\ell$. This means there is at least one pair $(x_i,z_j)$ in $\pi_1$ such that some skill level $s_l \in (x_i,z_j)$ is connected to some skill level $s_k$ outside this interval in the assignment $\pi_\ell$. This is represented by the blue arrow in Figure \ref{f:hiddencontra}. Otherwise, by replacing the assignment $\pi_\ell$ on $[x_i,z_j]$ by the assignment $\pi_1$ on $[x_i,z_j]$ decreases the total cost. This means that the corresponding intervals $[x_i,z_j]$ and $[s_l,s_k]$ intersect in $\pi_2$, which violates the property of no intersecting pairs applied to $\pi_2$.\end{proof}


\begin{figure}[!t]
    \begin{center}\resizebox{17cm}{2.8cm}{
        \begin{tikzpicture}
\draw[darkgray, thick] (-9,1) -- (-8,1) ;
\draw[darkgray, thick] (-9,1) -- (-9,0)node[below]{$x_1$};
\draw[darkgray, thick] (-8,1) -- (-8,0)node[below]{$z_1$} (-7.5,0.5)node[right]{$\dots$};

\draw[darkgray, thick] (-6,1) -- (-5,1);
\draw[darkgray, thick] (-6,1) -- (-6,0);
\draw[darkgray, thick] (-5,1) -- (-5,0) node[below]{$z_{I-1}$};

\draw[darkgray, thick] (-4,2) -- (-3,2);
\draw[darkgray, thick] (-4,2) -- (-4,0) node[below]{$x_{I}$};
\draw[darkgray, thick] (-3,0) -- (-3,2)(-0.5,0.5)node[right]{$\dots$};

\draw[darkgray, thick] (-2,2) -- (-1,2);
\draw[darkgray, thick] (-2,2) -- (-2,0) node[below]{$x_i$};
\draw[darkgray, thick] (-1,0) -- (-1,2);
\draw[darkgray, thick] (2,2) -- (1,2);
\draw[darkgray, thick] (2,2) -- (2,0)node[below]{$z_j$} ;
\draw[darkgray, thick] (1,0) -- (1,2);

\draw[darkgray, thick] (9,1) -- (8,1);
\draw[darkgray, thick] (9,1) -- (9,0) node[below]{$z_{n}$};
\draw[darkgray, thick] (8,1) -- (8,0) node[below]{$x_{n}$};

\draw[darkgray, thick] (6,1) -- (5,1);
\draw[darkgray, thick] (6,1) -- (6,0);
\draw[darkgray, thick] (5,1) -- (5,0) node[below]{$x_{J+1}$}(6.5,0.5)node[right]{$\dots$};

\draw[darkgray, thick] (4,2) -- (3,2);
\draw[darkgray, thick] (4,2) -- (4,0) node[below]{$z_{J}$};
\draw[darkgray, thick] (3,0) -- (3,2);

\draw[darkgray, thick,dashed] (-10,0) -- (10,0);
\draw[blue, thick,dashed] (-10,1) -- (10,1) (-0.5,0)node[below]{$s_l$} (7,0)node[below]{$s_k$};
\path[<-,every node/.style={font=\sffamily\small}] (2,2) edge[bend right=30] node [left] {} (-2,2);
\path[<-,every node/.style={font=\sffamily\small}, blue] (7,1) edge[bend right=15] node [left] {} (-0.5,1);
        \end{tikzpicture}}
    \end{center}

\vspace{-0.5 cm}

    \caption{Proof for Preservation of Concealed Pairs}
    {\scriptsize \vspace{.2 cm}  The black arrow indicates the exposed pair $(x_i,z_j)$ in the assignment $\pi_1$; the blue arrow indicates some pair $(s_l,s_k)$ in the assignment $\pi$ where $s_l\in(x_i,x_j)$. Since the arcs corresponding to these pairings cross, we obtain a contradiction.}
    
    \label{f:hiddencontra}
\end{figure}
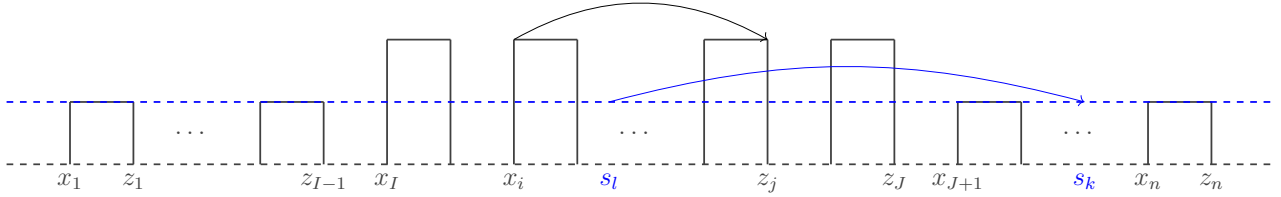

\subsection{Comparative Statics of Sorting} \label{s:andersonsmith}

In their work on the comparative statics of sorting, \citet{Anderson:2022} provides sufficient conditions for sorting to be more positive as the output function changes. However, their sufficient conditions do not apply in our setting. To show this, we introduce some of their definitions, formulate their sufficient conditions, and provide a counterexample.

A central object in their work is the difference between output under positive and negative sorting, which they call synergy. By definition, a rectangle $r$ is a combination of two workers $(x_1,x_2)$ where $x_1 < x_2$ and two jobs $(z_1,z_2)$ where $z_1 < z_2$, so their representation in $\mathbb{R}^2$ is a rectangle. Rectangular synergy $S(r;\zeta)$ is the synergy inside the rectangle given by $y(x_1,z_1;\zeta) + y(x_2,z_2;\zeta) - y(x_1,z_2;\zeta) - y(x_2,z_1;\zeta) $, where $\zeta$ emphasizes the dependence of output on the concavity of the mismatch costs function. Summed rectangular synergy sums synergies on any finite set of disjoint rectangles. One of the assumptions that \citet{Anderson:2022} requires is that the summed rectangular synergy is up-crossing in $\zeta$, where a function $\Upsilon$ is up-crossing if $\Upsilon(\zeta) \geq 0$ implies that $\Upsilon(\zeta') \geq 0$ for all $\zeta' \geq \zeta$.

\begin{figure}[!t]
\begin{centering}
\resizebox{11cm}{5.5cm}{    \begin{tikzpicture}
\node[circle,draw, minimum size=0.1pt,scale=0.6,label=below:{$x_1$}] (A2) at  (8,0){};
\node[circle,draw, minimum size=0.1pt,scale=0.6,label=below:{$x_2$}] (A) at  (10.5,0){};

\node[circle,fill=black,draw, minimum size=0.1cm,scale=0.6,label=below:{$z_1$}] (D) at  (11.5,0) {} ;

\node[circle,fill=black,draw, minimum size=0.1cm,scale=0.6,label=below:{$z_2$}] (D2) at  (14,0) {} ;
\node[circle,draw, minimum size=0.1pt,scale=0.6,label=below:{$x_3$}] (E3) at  (13,0){};
\node[circle,fill=black,draw, minimum size=0.1cm,scale=0.6,label=below:{$z_3$}] (72) at  (19,0) {} ;

\draw[dotted] (A2)--(A);\draw[dotted] (A) -- (E3);
\draw[dotted] (E3) -- (72);

\path[dashed,blue,every node/.style={font=\sffamily\small}] (D2) edge[bend left=50] node [below] {\color{black}{}} (E3);
\path[dashed,blue,every node/.style={font=\sffamily\small}] (A2) edge[bend right=50] node [below] {\color{black}{$39.9$}} (72);
\path[-,blue,every node/.style={font=\sffamily\small}] (72) edge[bend right=50] node [below] {\color{black}{$20$}} (E3);
\path[-,blue,every node/.style={font=\sffamily\small}] (A2) edge[bend left=50] node [below] {\color{black}{$20$}} (D2);

\path[-,orange,every node/.style={font=\sffamily\small}] (E3) edge[bend left=50] node [below] {\color{black}{$0.1$} } (D2);
\path[-,orange,every node/.style={font=\sffamily\small}] (A) edge[bend left=50] node [above] {\color{black}{$0.1$}} (D);

\path[dashed,orange,every node/.style={font=\sffamily\small}] (D) edge[bend right=30] node [below] {\color{black}{$1.8$}} (E3);
\path[dashed,orange,every node/.style={font=\sffamily\small}] (D2) edge[bend left=80] node [below] {\color{black}{$2$}} (A);
       
   \end{tikzpicture}    }\hspace{0.5cm}
\resizebox{0.28\textwidth}{0.28\textwidth}{\begin{tikzpicture}
    \begin{axis}[axis lines=middle,
            enlargelimits,
            ytick=\empty,
            xtick=\empty,]
\addplot[name path=3,domain={2:4}] {1} ;
\addplot[name path=32,domain={0:4}] {0} ;
\addplot[name path=5,domain={1:4}] {4};
\addplot[name path=6,domain={1:4}] {6};

 \addplot[ fill=orange,opacity=0.6]fill between[of=3 and 5, soft clip={domain=2:4}];
  \addplot[ fill=blue,opacity=0.6]fill between[of=5 and 6, soft clip={domain=1:4}];

\addplot[thick, samples=50, smooth,domain=0:6, name path=three] coordinates {(2,4)(2,1)};
\addplot[thick, samples=50, smooth,domain=0:6, name path=three] coordinates {(4,6)(4,1)};
\addplot[thick, samples=50, smooth,domain=0:6, name path=three] coordinates {(1,4)(1,6)};

\node at (axis cs:-0.2,1){\large $
z_1$};
\node at (axis cs:-0.2,4){\large $
z_2$};
\node at (axis cs:-0.2,6){\large $
z_3$};
\node at (axis cs:1,-0.3){\large $
x_1$};
\node at (axis cs:2,-0.3){\large $
x_2$};
\node at (axis cs:4,-0.3){\large $
x_3$};

\end{axis}
    \end{tikzpicture}
   
        } 
\par\end{centering}
\caption{Summed Rectangular Synergy is not One-Crossing with Concave Costs of Skill Gaps}
\label{f:anderson} {\scriptsize{}{}\vspace{0.2cm}
Figure \ref{f:anderson} shows that the condition of summed rectangular synergy is not satisfied in our setting with concave costs of skill gaps. We construct two rectangles to show that summed rectangular synergy is not one-crossing. Each of the rectangles contains two workers and two jobs. The blue rectangle consists of workers $(x_1,x_3)$ and jobs $(z_2,z_3)$ while the orange rectangle consists of workers $(x_2,x_3)$ and jobs $(z_1,z_2)$. In the left panel, positive sorting for these rectangles is represented by solid arcs above the dotted line, while negative sorting is captured by blue dashed arcs below the dotted line. Distances between the paired workers and jobs are shown by the numbers on the arc (not in scale). Summed rectangular synergy equals approximately  $\{ 0.09, -0.19, 0.30 \}$ for $\zeta = \{ 0.2,0.5,0.8 \}$, showing summed rectangular synergy is not one-crossing as a function of the concavity of the output function $\zeta$.}
\end{figure}

To show that their assumption does not hold in our environment, we give an example that summed rectangular synergy is neither up-crossing nor down-crossing ($-\Upsilon$ is up-crossing) in $\zeta$. Consider the case where the cost function (\ref{eq:cxz}) is symmetric in terms of the concavity of the mismatch function $\zeta = \zeta_p = \zeta_u$. Consider the example in Figure \ref{f:anderson} where we have a subset of three workers $(x_1,x_2,x_3)$ and three jobs $(z_1,z_2,z_3)$ of the alternating assignment problem within the layer, which are ordered such that $x_1 < x_2 < z_1 < x_3 < z_2 < z_3$. The distances $|x-z|$ between the workers and the jobs are indicated by the numbers on the arcs. 

We consider two distinct rectangles. The blue rectangle consists of workers $(x_1,x_3)$ and jobs $(z_2,z_3)$. Positive sorting for this rectangle is represented by the blue solid arcs above the dotted line, and negative sorting for this rectangle is captured by the blue dashed arcs below the dotted line. Similarly, the orange rectangle consists of workers $(x_2,x_3)$ and jobs $(z_1,z_2)$. Positive sorting for this rectangle is represented by the orange solid arcs above the dotted line, and negative sorting for this rectangle is captured by the orange dashed arcs below the dotted line. The rectangles are represented in the right panel of Figure \ref{f:anderson}. Rectangular synergy measures the difference in costs of skill gaps under positive sorting and the costs of skill gaps under negative sorting in the rectangle. The summed rectangular synergy over the blue and the orange rectangles is the sum of the synergies on the disjoint rectangles. 

A numerical example shows directly that summed rectangular synergies are neither up-crossing nor down-crossing in our setting. Specifically, synergy for the blue rectangle is given by $39.9^\zeta + 0.1^\zeta - 20^\zeta - 20^\zeta$ while synergy for the orange rectangle is $1.8^\zeta + 2^\zeta - 0.1^\zeta - 0.1^\zeta$. The resulting summed synergies equal approximately $\{ 0.09, -0.19, 0.30 \}$ for $\zeta = \{ 0.2,0.5,0.8 \}$ respectively, meaning that the summed rectangular synergy is neither up-crossing nor down-crossing as a function of the concavity of the output function $\zeta$.

\end{document}